\keywords{computable analysis, zero-dimensional, Dugundji system, retract, nonarchimedean analysis}
\newcommand{\N}{\mathbb{N}}
\newcommand{\Z}{\mathbb{Z}}
\newcommand{\R}{\mathbb{R}}
\newcommand{\Q}{\mathbb{Q}}
\newcommand{\Qplus}{\Q^+}
\newcommand{\divides}{\mid}
\newcommand{\union}{\cup}
\newcommand{\disjunion}{\dot{\cup}}
\newcommand{\absval}[1]{\lvert #1\rvert}
\newcommand{\onespace}{\quad}
\newcommand{\Th}{\textrm{th}}
\DeclareMathOperator{\diam}{diam}
\DeclareMathOperator{\interior}{int} 
\DeclareMathOperator{\card}{card}
\DeclareMathOperator{\dom}{dom} 
\DeclareMathOperator{\img}{im} 
\DeclareMathOperator{\id}{id} 
\DeclareMathOperator{\ev}{ev} 
\DeclareMathOperator{\Cvx}{Cvx} 
\DeclareMathOperator{\LF}{LF} 
\newcommand{\setconstr}{\divides}
\newcommand{\cl}{\overline}
\newcommand{\compose}{\circ}
\newcommand{\trfns}{R^{(1)}} 
\newcommand{\Bairespc}{\mathbb{B}} 
\newcommand{\Tzero}{\textbf{T}_0} 
\newcommand{\pr}{\ensuremath{\pi}}
\newcommand{\charf}[1]{\chi_{#1}} 
\newcommand{\pwrset}{\mathcal{P}}
\newcommand{\mathst}{\text{ s.t.~}} 
\newcommand{\ball}[2]{B(#1;#2)}
\newcommand{\clball}[2]{\bar{B}(#1;#2)}
\newcommand{\ballmetric}[3]{B_{#1}(#2;#3)}
\newcommand{\clballmetric}[3]{\bar{B}_{#1}(#2;#3)}
\newcommand{\nbhd}[2]{N_{#2}(#1)} 
\newcommand{\bdry}{\partial}
\newcommand{\Cont}{\mathit{C}}  
\newcommand{\leastmu}[2]{\mathbf{\mu}#1\left(#2\right)} 
\newcommand{\ordomega}{\mathit{\omega}} 
\newcommand{\FS}{\textrm{FS}} 
\newcommand{\crsarr}{\rightrightarrows} 
\newcommand{\deltac}{\delta_{\textrm{cover}}} 
\newcommand{\deltamc}{\delta_{\textrm{min-cover}}} 
\newcommand{\deltar}{\delta_{\textrm{range}}} 
\newcommand{\deltacz}{\delta_{\textrm{c}\mathcal{Z}}} 
\newcommand{\leqW}{\leq_{\textrm{W}}} 
\newcommand{\leqsW}{\leq_{\textrm{sW}}} 
\newcommand{\eqvW}{\equiv_{\textrm{W}}} 
\newcommand{\restrict}{\upharpoonright} 
\newcommand{\deltad}{\delta_{\textrm{Dugundji}}}
\theoremstyle{plain} 
\begin{document}
\title[Dugundji systems and retracts]{Dugundji systems and a retract characterization of effective
zero-dimensionality\rsuper*} 
\titlecomment{{\lsuper*}Some results of this paper have been presented at CCA 2015, Tokyo, Japan, Jul 12--15,
in the talk ``Effective zero-dimensionality and retracts''.  Several typos in the abstract for that talk are
corrected here.}
\author[R.~Kenny]{Robert Kenny}	
\address{Department of Mathematics \& Statistics,
School of Physics, Mathematics and Computing,
The University of Western Australia, Perth, Australia} 
\email{robert.kenny@uwa.edu.au}  

%





\begin{abstract}
\noindent In this paper (as in \cite{zerodpaper}), we consider an effective
version of the characterization of separable metric spaces as
zero-dimensional iff every nonempty closed subset is a retract of the space
(actually, it is a relative result for closed zero-dimensional subspaces of
a fixed space that we have proved).  This uses (in the converse direction)
local compactness \& bilocated sets as in \cite{zerodpaper}, but in the
forward direction the newer version has a simpler proof and no compactness
assumption.  Furthermore, the proof of the forward implication relates to
so-called Dugundji systems:
we elaborate both a general construction of such systems for a proper nonempty closed subspace (using a computable form of countable paracompactness), and modifications --- to make the sets pairwise disjoint
if the subspace is zero-dimensional, or to avoid the restriction to proper subspaces.  In a different direction, a second theorem applies in $p$-adic analysis the ideas of the first theorem to compute a more general form of retraction, given a Dugundji system (possibly without disjointness).

Finally, we complement the effective retract characterization of
zero-dimensional subspaces mentioned above
by improving to equivalence the implications (or
Weihrauch reductions in some cases), for closed at-most-zero-dimensional
subsets with some negative information, among separate conditions of computability of operations $N,M,B,S$ introduced in \cite[\S 4]{zerodpaper} and corresponding to
vanishing large inductive dimension, vanishing small inductive
dimension, existence of a countable basis of relatively clopen sets,
and the reduction principle for sequences of open sets.  Thus,
similarly to the robust notion of effective zero-dimensionality of
computable metric spaces in \cite{zerodpaper}, there is a robust
notion of `uniform effective zero-dimensionality' for a represented
pointclass consisting of at-most-zero-dimensional closed subsets.
\end{abstract}

\maketitle
\section{Introduction}\label{sec:intro}
In this paper, we again (after \cite{zerodpaper}) consider
properties of zero-dimensional sets in computable metric spaces, in
the first place related to retractions.  Recall a \emph{retraction} of
a topological space $X$ onto a subset $A$ is a continuous map
$f:X\to X$ such that $A=f(X)$ and $f|_A=\id_A$.  A subset $A$ of $X$ is a
\emph{retract} of $X$ if there exists a retraction $f$ of $X$ onto $A$
\cite[Ex 1.5.C]{Engelking}.  It is well-known that for a nonempty separable
metrizable space $X$, zero-dimensionality is equivalent to the statement
that every closed nonempty subset is a retract of $X$
\cite[Thm 7.3]{Kechris}.  In the framework of computable analysis via representations
\cite{BHW08}, \cite{Weihrauchbook2} (see also \cite{BrattkaPresser},
and \cite{Pauly16} for a more modern treatment), our first theorem
(Theorem \ref{thm:a}) is a computable version of one direction of this result (and a generalisation of \cite[Thm 7.6]{zerodpaper}): assuming certain effective information on the
zero-dimensionality of a closed subset $A$ of a computable metric space $X$, we can compute (via an operation $E'$) retractions of $A$
onto $B$ for any nonempty closed subset $B\subseteq A$
(in this paper without a compactness assumption).

Together with a converse Proposition \ref{prn:aconverse} (generalizing
\cite[Prop 8.5]{zerodpaper}), this will establish computability of $E'$ is
equivalent to computability of any of the four operations of \cite[\S 4]{zerodpaper}, under certain (computable local compactness) assumptions.
More precisely, each of these five operations is defined in relation to a
represented space $\mathcal{Y}$ of subsets of $X$, in this paper consisting
of closed subsets with $\delta_{\mathcal{Y}}\leq\delta_{\Pi^0_1(X)}$; the
implications in \cite[\S 4]{zerodpaper} can then (see
Remark \ref{rem:unifzd} and, using Weihrauch reducibility
\cite{BGPWCpxtyPreprint}, Section \ref{sec:zdsubsets}) be improved to a robust notion of `uniform effective zero-dimensionality' for
such $\mathcal{Y}$ (with computability of $E'$ also equivalent under
stronger assumptions).

In a different direction we also establish, in Theorem \ref{thm:b}, computability (given some additional data) of a
retraction onto a closed homeomorph $A\subseteq X$ of an ultrametric ball in $K$, when
$X$ is an effectively zero-dimensional computable metric space and $K=\Omega_p$ is the nonarchimedean valued field
of $p$-adic numbers (here we refer to \cite{Schikhof}; for our
concerns, also \cite{KKSNonarchDET}, \cite{KapoulasThesis} may be
relevant, though not used in this paper).
As an existence statement, this may seem less general than
Theorem \ref{thm:a}, however the idea here
is to consider a more general form of retraction (onto a less general closed set $A$); the `additional data'
required for defining \& computing the retraction consists of: a Dugundji system
$((V_i)_{i\in\N},(y_i)_{i\in\N})\in\Sigma^0_1(X)^{\N}\times X^{\N}$ for
$A$ (see below and in Section \ref{sec:thirdagain}),
a $K$-valued analogue of a partition of unity subordinate to
$(V_i)_{i\in\N}$, and $h:A\to h(A)\subseteq K$
the homeomorphism mentioned above.  Though well-known, it is worth
mentioning any ultrametric space can (at least
noneffectively) be embedded into some nonarchimedean valued field $K$; we
record this fact as Theorem \ref{thm:zdembcor} and will not enquire about
the computable content of that result for the purposes of this paper.

A secondary aim\footnote{Note that Dugundji systems are used in the
construction of retractions in Theorems \ref{thm:a} and \ref{thm:b}.} of
the paper is to elaborate certain aspects of the construction of a Dugundji
system for a closed subset $\emptyset\neq B\neq X$ of a computable metric
space $X$.  For us, a Dugundji system for $B$
consists of a (locally finite in its union) collection $(V_i)_{i\in\N}\subseteq\Sigma^0_1(X)$ and
a sequence of points $(y_i)_{i\in\N}\subseteq B$ satisfying certain axioms about the relative closeness of $V_i$
to $y_i$ and to $B$ (for the precise definition we refer to Section \ref{sec:thirdagain}).  These
objects are used in the proof of the Dugundji Extension Theorem
(cf.~\cite[Hint to Prob 4.5.20(a)]{Engelking}, \cite[\S 1.2]{vanMill2} or,
originally, \cite{Dugundji}), and could be said (informally speaking) to
have some attributes of piecewise constant maps $X\setminus B\to X$.
As such it is interesting to examine representations of the class of closed sets $\mathcal{A}(X)$ related to Dugundji systems
(in this paper, see Proposition \ref{prn:dclosedsets}).  However, our main interest is to give a general construction of a Dugundji system (Proposition \ref{prn:dugundji}), from
$\deltar\sqcap\delta^{>}_{\textrm{dist}}$-information on $B$.  This construction is developed in several
respects further than we need to prove the above two theorems --- for instance, we improve the coefficient $2$
in the bound on $d(x,y_i)$ ($x\in V_i$) to $1+\epsilon$, and replace the open balls $\ball{x}{2^{-2}d_A(x)}$
($x\in X\setminus B$) by $\ball{x}{f(x).d_A(x)}$ for a specified lower semicontinuous $f:X\setminus B\to (0,1]$
with a strict upper bound $\frac{\epsilon}{2+\epsilon}$ specified along a countable dense subset
$(\img\nu)\setminus B$.  Such a function $f$ is easily found for any $\epsilon$ and $B$, but the more general
conditions may be useful in some circumstances.  We also note the construction uses, initially,
an effective countable paracompactness result (Lemma \ref{lem:effparac}) which may be of independent interest.


Next, in Section \ref{sec:newfourth}, from such a construction, under a certain effective assumption of
zero-dimensionality for a closed set $A$, Dugundji systems relative to $A$ for any closed $B\subseteq A$ can be
computed (Proposition \ref{prn:dugundjizd}) with pairwise disjoint (open) sets, provided $\emptyset\neq B\neq X$.
In case that $X$ is an effectively zero-dimensional computable metric space, a slightly different construction
is presented (Remark \ref{rem:zddgone}), which allows to obtain clopen sets here, as well as (e.g.) to avoid the
restriction $B\neq X$ under an additional assumption (Remark \ref{rem:zddgtwo}).

Finally, Section \ref{sec:fourth} examines an application of convexity in nonarchimedean valued fields to define
more general retractions and in particular to establish Theorem \ref{thm:b}, Section \ref{sec:zdsubsets}
as mentioned establishes the validity of the notion of `uniform effective zero-dimensionality' referred to above,
and Section \ref{sec:concl} outlines some 
directions for future work.

%

\section{Notation}\label{sec:second}
Besides the results and concepts found in the references
already mentioned for computable analysis via representations
\cite{BHW08}, \cite{Weihrauchbook2}, \cite{BrattkaPresser} (also
\cite{Pauly16}), we also need occasionally to refer to some sections
of our earlier paper on zero-dimensionality \cite{zerodpaper}, and for
consistency's sake --- together with our use of
$\N$ and $\Bairespc:=\N^{\N}$ for names --- this determines the use of
(e.g.) symbols $\delta_{\Sigma^0_1(X)},\delta_{\Pi^0_1(X)},\deltac$ for
common representations in place of
$\theta^{\textrm{en}}_{<},\psi^{\textrm{en}}_{>},\kappa_{\textrm{c}}$
in \cite{Weihrauchbook2}.

For general topology concepts we refer to \cite{Engelking} (some other references useful for dimension theory,
including dimension zero, are \cite{vanMill2}, \cite{Kuratowski} and \cite[App I]{Lipscomb}), and for
nonarchimedean analysis to \cite{Schikhof}.
In the remainder of this section we will recall some definitions as well as
basic results on partial continuous maps useful in the rest of the paper.

For any $p\in\Bairespc$ and $i\in\N$ we denote $p_i:=p(i)\in\N$,
while by $(p^{(i)})_{i\in\N}$ we usually mean an arbitrary sequence of
elements of $\Bairespc$.  Bijection
\(\langle\cdot,\dots\rangle:\Bairespc^{\N}\to\Bairespc,(p^{(i)})_{i\in\N}
\mapsto q\) is defined by $q_{\langle j,k\rangle}:=p^{(j)}_k$ ($j,k\in\N$);
notwithstanding this usage, usually the symbols $p, q, p^{(i)}$ etc.~are
meant to be independent members of $\Bairespc$.  On the other hand, a fixed
computable bijection $\langle\cdot,\cdot\rangle:\N^2\to\N$ with computable
projections $\pi_i:\N\to\N,\langle a_1,a_2\rangle\mapsto a_i$ ($i=1,2$)
will be used throughout the paper.

Now let $(X,\mathcal{T})$ be a second countable topological space, and let
$\alpha,\beta:\N\to\mathcal{T}$ be total numberings of possibly different
countable bases (basis numberings are assumed to be total throughout).
\begin{defi}\label{def:formincl}
$(\sqsubset) \subseteq \N^2$ is a \emph{formal inclusion} of $\alpha$ with respect to $\beta$ if
$(\forall a,b\in\N)(a\sqsubset b\implies \alpha(a)\subseteq\beta(b))$.  We consider the following axioms, in order of increasing strength.
\begin{enumerate}
\item\label{inclone}
 $(\forall b)(\forall x\in X)(\exists a)(x\in\beta(b)\implies x\in\alpha(a)\wedge a\sqsubset b)$
\item\label{incltwo}
 $(\forall b)(\forall x\in X)(\forall U\in\mathcal{T})(\exists a)\left(
 x\in\beta(b)\cap U\implies x\in\alpha(a)\subseteq U\wedge a\sqsubset b \right)$
\item\label{inclthree}
 $(\forall a,b)(\forall x\in X)(\exists c)(x\in\beta(a)\cap\beta(b)\implies
 x\in\alpha(c)\wedge c\sqsubset a\wedge c\sqsubset b)$
\end{enumerate}
Thus if $(X,\mathcal{T})$ is a $\Tzero$ space with basis
numbering $\alpha:\N\to\img\alpha\subseteq\mathcal{T}$, we note
$(X,\mathcal{T},\alpha)$ is a computable topological space iff there exists
a c.e.~formal inclusion $\sqsubset$ of $\alpha$ with respect to $\alpha$
satisfying (\ref{inclthree}).  A formal inclusion will be called a
\emph{refined inclusion} if (\ref{inclone}) holds.
\end{defi}
For two representations $\delta,\delta'$ we write $\delta\leq\delta'$ if there exists a computable map $F:\subseteq\Bairespc\to\Bairespc$
such that $\dom\delta\subseteq F^{-1}\dom\delta'$ and
$(\forall p\in\dom\delta)\delta(p)=(\delta'\compose F)(p)$.  Also, for a
$\Tzero$ space $(X,\mathcal{T}_X)$ with total basis numbering $\alpha$, we
define the standard unpadded representation $\delta_X$ of $X$ by
\(p\in\delta_X^{-1}\{x\}:\iff \img p=\{a\in\N\setconstr \alpha(a)\ni x\}\).

Next, recall a \emph{computable metric space} $(X,d,\nu)$ consists of a
nonempty separable metric space $(X,d)$ and a
dense total sequence $\nu:\N\to X$ such that
$d\compose(\nu\times\nu):\N\times\N\to\R$ is computable with respect
to standard representations.  In this context,
the \emph{Cauchy representation} $\rho_{\nu}$ will be defined by
\[
p\in\rho_{\nu}^{-1}\{x\}:\iff
(\forall i,j)d(\nu(p_i),\nu(p_j))<2^{-\min\{i,j\}}\wedge
\lim_{i\to\infty}\nu(p_i)=x
\]
(the limit being required to exist in $(X,d)$),
while $\nu_{\Qplus}$ is a standard total numbering of $\Qplus=
\{q\in\Q\setconstr q>0\}$ with a $(\nu_{\Qplus},\id_{\N})$-computable
right-inverse $\overline{\cdot}:\Qplus\to\N$.  In any computable metric
space $(X,d,\nu)$, we consider numberings of ideal open and closed
balls
\begin{align*}
&\alpha:\N\to\img\alpha\subseteq\mathcal{T},\langle a,r\rangle\mapsto
\ballmetric{d}{\nu(a)}{\nu_{\Qplus}(r)},\onespace\text{ and }\\
&\hat{\alpha}:\N\to\img\hat{\alpha}\subseteq\Pi^0_1(X),
\langle a,r\rangle\mapsto \clballmetric{d}{\nu(a)}{\nu_{\Qplus}(r)}.
\end{align*}
We will call $\alpha$ the \emph{standard ball numbering} and
denote $\mathcal{B}:=\{\ballmetric{d}{\nu(a)}{q}\setconstr
q\in\Qplus\wedge a\in\N\}=\img\alpha$.  The relation $\sqsubset$
defined by $\langle a,r\rangle\sqsubset\langle b,q\rangle :\iff
d(\nu(a),\nu(b))+\nu_{\Qplus}(r)<\nu_{\Qplus}(q)$ is a
formal inclusion of $\alpha$ with respect to itself; moreover it satisfies $c\sqsubset d\implies\hat{\alpha}(c)\subseteq\alpha(d)$ and
(\ref{inclthree}).

From any basis numbering $\alpha$ of a
topological space $X$ we can define a representation
$\delta:\subseteq\Bairespc\to\mathcal{O}(X),p\mapsto \union\{\alpha(p_i-1)\setconstr i\in\N, p_i\geq 1\}$ of the hyperspace of open sets in $X$.
For a computable metric space with
$\alpha$ as above, this representation is denoted $\delta_{\Sigma^0_1(X)}$, or $\delta_{\Sigma^0_1}$ if $X$ is
clear from the context\footnote{Many of the hyperspace representations
used in this paper can be replaced by (hyperspace representation)
definitions which generalise to arbitrary represented spaces
\cite{Pauly16}, but we will not need them here.}.
\begin{lem}\label{lem:formincl}
Let $\alpha$, $\beta$ be basis numberings for topological space
$(X,\mathcal{T})$, inducing representations
$\delta,\delta':\subseteq\Bairespc\to\mathcal{O}(X)$ resp.  If there exists
a
c.e.~refined inclusion $\sqsubset$ of $\alpha$ with respect to $\beta$,
then $\delta\leq\delta'$.
\end{lem}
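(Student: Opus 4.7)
The plan is to build a computable $F:\subseteq\Bairespc\to\Bairespc$ satisfying $\delta(p) = \delta'(F(p))$ for all $p\in\dom\delta$; that is, to translate a $\delta$-name $p$ of an open set $U = \bigcup\{\alpha(p_i-1)\setconstr i\in\N,\, p_i\geq 1\}$ into a $\delta'$-name $q := F(p)$ realising the same $U$ as a union of $\beta$-basic subsets of $U$.

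The essential observation is that the formal inclusion combined with axiom (\ref{inclone}) of a refined inclusion realises every $\beta$-basic set as the union $\beta(b) = \bigcup\{\alpha(a)\setconstr a\sqsubset b\}$ (the $\supseteq$ direction from the formal inclusion, the $\subseteq$ direction from (\ref{inclone})). Consequently, $\beta(b)\subseteq U$ is equivalent to the condition that every $a$ with $a\sqsubset b$ satisfies $\alpha(a)\subseteq U$, and any $b$ certified in this way may safely be written into the $\delta'$-name $q$ as the value $b+1$ at some output position.

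The construction of $F$ will therefore dovetail over all $b\in\N$, running in parallel a c.e.~search that combines the enumeration of $\sqsubset$ with the contents of $p$ to obtain a finite witness that $\beta(b)\subseteq U$; whenever such a witness appears, we write $b+1$ into the next output position of $q$. Soundness of the procedure is immediate from the certification property, giving $\delta'(q)\subseteq U$. The main obstacle, and the most delicate part of the argument, will be completeness: showing that for every $x\in U$ some $\beta$-index $b$ with $x\in\beta(b)\subseteq U$ is eventually captured. This I expect to obtain by using (\ref{inclone}) twice — once to locate, for any $x\in U$ and any $\beta(b_0)\ni x$, an $\alpha(a)\subseteq\beta(b_0)$ through $x$, and once more to ensure enough such $b$'s are exposed by the c.e.~enumeration — together with a bookkeeping argument matching finite $\sqsubset$-witnesses against the $\alpha$-basic sets already listed in $p$, so that the enumeration of $q$ eventually covers every point of $U$.
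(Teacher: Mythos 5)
There is a genuine gap: you have set the translation up in the direction that the hypothesis cannot support, and the certification step at the heart of your algorithm does not exist. A refined inclusion of $\alpha$ with respect to $\beta$ gives exactly the decomposition $\beta(b)=\bigcup\{\alpha(a)\setconstr a\sqsubset b\}$ that you correctly isolate (formal inclusion for $\supseteq$, condition (\ref{inclone}) for $\subseteq$), and the only effective use of this identity is to rewrite a union of $\beta$-basic sets as a union of $\alpha$-basic sets: replace each listed index $b$ by \emph{all} $a$ with $a\sqsubset b$, a purely positive, c.e.~operation requiring no search for inclusions into $U$ at all. That is the paper's entire proof (its witness $F$ outputs $\{a\setconstr(\exists i)\,p_i\geq 1\wedge a\sqsubset p_i-1\}$), and it converts names with respect to the \emph{outer} numbering into names with respect to the \emph{inner} one. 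You instead try to go from $\alpha$-names to $\beta$-names by enumerating those $b$ with $\beta(b)\subseteq U$. Your stated equivalence ($\beta(b)\subseteq U$ iff every $a\sqsubset b$ has $\alpha(a)\subseteq U$) is true, but it is not an effective criterion: it quantifies universally over the possibly infinite c.e.~set $\{a\setconstr a\sqsubset b\}$, and each instance $\alpha(a)\subseteq U$ is itself not semi-decidable from a name of $U$ --- a name only lists basic sets whose union is $U$, and recognising that some \emph{other} basic set is covered by them is a covering problem with no finite witnesses absent compactness-type assumptions (compare the effective covering property invoked much later in the paper). So the dovetailed search never certifies any $b$, soundness is vacuous, and completeness fails outright: condition (\ref{inclone}) only ever produces $\alpha$-sets sitting inside $\beta$-sets, never $\beta$-sets sitting inside a given $\alpha(a)$ or inside $U$, so no bookkeeping recovers, for $x\in U$, an index $b$ with $x\in\beta(b)\subseteq U$.

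In short, the identity you found is the right one but must be applied in the opposite direction; the intended reduction is the one the paper's realizer actually computes, namely from the representation induced by $\beta$ to the representation induced by $\alpha$, and it is a one-line c.e.~rewriting rather than a search-and-certify procedure.
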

\begin{proof}
If $X=\emptyset$ then $\mathcal{T}=\{\emptyset\}$ so (by totality)
$\alpha=\beta$, implying $\delta=\delta'$.  If $X\neq\emptyset$, condition
(\ref{inclone}) ensures $\emptyset\neq (\sqsubset)\subseteq\N^2$, so let
$h\in\trfns$ such that
$\img h=\{\langle a,b\rangle\setconstr a\sqsubset b\}$.  Now computable $F:\subseteq\Bairespc\to\Bairespc$
such that
\[
\{F(p)_i-1\setconstr i\in\N, F(p)_i\geq 1\} = \{a\in\N\setconstr (\exists i)p_i\geq 1\wedge a\sqsubset p_i-1\}
\]
can (e.g.) be defined
by
\[
F(p)_{\langle i,n\rangle}=\left( 1+\pr_1 h(n), \textrm{ if $p_i=1+\pr_2 h(n)$};\:\: 0, \textrm{ otherwise}
\right) \text{ ($i,n\in\N$).}
\]
\end{proof}

Consider now a computable metric space $(X,d,\nu)$ and a
subset $Y\subseteq X$.  We denote the induced basis numbering by
$\alpha_Y:\N\to\mathcal{T}_Y,a\mapsto \alpha(a)\cap Y$.  If there exists
computable $\lambda:\N\to X$ having $\img\lambda\subseteq Y\subseteq
\cl{\img\lambda}$, we say $Y$ is \emph{effectively separable}; in this case
$\alpha_Y$ and the standard ball numbering $\beta$ of
$(Y,d|_{Y\times Y},\lambda)$ give two representations $\delta,\delta'$ of
$\mathcal{O}(Y)$.  Moreover there exist c.e.~formal inclusion relations
given by
\begin{align*}
\langle k,l\rangle \sqsubset_0 \langle i,j\rangle &:\iff
d(\lambda(i),\nu(k))+\nu_{\Qplus}(l)<\nu_{\Qplus}(j),\\
\langle k,l\rangle \sqsubset_1 \langle i,j\rangle &:\iff
d(\nu(i),\lambda(k))+\nu_{\Qplus}(l)<\nu_{\Qplus}(j).
\end{align*}
\begin{lem}
For a computable metric space $(X,d,\nu)$ and effectively separable
$Y\subseteq X$ witnessed by $\lambda:\N\to Y$, $\sqsubset_0$ is a refined
inclusion of $\alpha_Y$ with respect to $\beta$ and $\sqsubset_1$ is a
refined inclusion of $\beta$ with respect to $\alpha_Y$.
\end{lem}
\begin{proof}
If $y\in\beta\langle i,j\rangle$ then pick $l$ with
$\nu_{\Qplus}(l)+d(y,\lambda(i))<\nu_{\Qplus}(j)$
and
\[
k\in\nu^{-1} \left(
\ballmetric{d}{y}{\nu_{\Qplus}(l)}\cap
\ballmetric{d}{y}{\nu_{\Qplus}(j)-\nu_{\Qplus}(l)-d(y,\lambda(i))} \right)
\]
(to get $y\in\alpha_Y\langle k,l\rangle\wedge \langle k,l\rangle\sqsubset_0\langle i,j\rangle)$.  If
$y\in\alpha_Y\langle i,j\rangle$, pick $l$ with
$\nu_{\Qplus}(l)+d(y,\nu(i))<\nu_{\Qplus}(j)$
and
\[
k\in\lambda^{-1} \left(
\ballmetric{d|_{Y\times Y}}{y}{\nu_{\Qplus}(l)}\cap
\ballmetric{d|_{Y\times Y}}{y}{\nu_{\Qplus}(j)-\nu_{\Qplus}(l)-d\left(y,\nu(i)\right)} \right).
\]
\end{proof}
Consequently, we find $\delta\equiv\delta'=\delta_{\Sigma^0_1(Y)}$.

We will need \(\mathcal{G}_{\delta}(X):=\{\bigcap_{i\in\N}U_i\setconstr
(U_i)_i\in\Sigma^0_1(X)^{\N}\}\), the representation
$\delta_{\N}:\Bairespc\to\N,p\mapsto p_0$ of $\N$, a certain canonical
representation $\eta$ of
\begin{equation*}
\mathbf{F}:=\{f:\subseteq\Bairespc\to\Bairespc\setconstr f
\text{ continuous and }\dom f\in\mathcal{G}_{\delta}(\Bairespc)\}
\end{equation*}
(see below) and the following three representations of the class
$\mathcal{A}(X)$ of closed subsets of a computable metric space $X$
(cf.~\cite{BrattkaPresser}), where recall
$\overline{\R}:=\R\union\{-\infty,\infty\}$.
Define $\delta_{\Pi^0_1(X)},\deltar,\delta_{\textrm{dist}}^{>}:\subseteq\Bairespc\to\mathcal{A}(X)$ by
\begin{align*}
p\in\delta_{\Pi^0_1(X)}^{-1}\{A\} &:\iff p\in\delta_{\Sigma^0_1(X)}^{-1}\{X\setminus A\},\\
\langle p^{(0)},\dots\rangle\in\deltar^{-1}\{A\} &:\iff (A=\emptyset\wedge(\forall i)p^{(i)}=0^{\ordomega})\vee
\big( A\neq\emptyset\wedge\{p^{(i)}\setconstr i\in\N\}\subseteq P^{-1}\delta_X^{-1}A\wedge\\
& (\forall x\in A)(\forall U\in\mathcal{T}_X)(\exists i)(x\in U\implies (\delta_X\compose P)(p^{(i)})\in U)
\big),
\end{align*}
where $P:\subseteq\Bairespc\to\Bairespc$ is defined by $P(p)_i:=p_i-1$
($\dom P=\{p\in\Bairespc\setconstr (\forall i)p_i\geq 1\}$),
\[
p\in(\delta_{\textrm{dist}}^{>})^{-1}\{A\} :\iff \eta_p \text{ $(\delta_X,\overline{\rho_{<}})$-realizes }
d_A:X\to\overline{\R},
\]
where
\[
p\in\overline{\rho_{<}}^{-1}\{t\} :\iff \{n\in\N\setconstr \nu_{\Q}(n)<t\} =
\{p_i-1\setconstr i\in\N\wedge p_i\geq 1\}.
\]

We will refer to $\delta_{\Pi^0_1}$ and
$\delta_{\textrm{dist}}^{>}$ as embodying negative information and
$\deltar$ as positive information on closed sets.  Note
$\delta_{\textrm{dist}}^{>}\leq\delta_{\Pi^0_1}$.  We denote the
represented space $(\mathcal{A}(X),\delta_{\Pi^0_1(X)})$ by $\Pi^0_1(X)$,
and will write $\Pi^0_1(X)$ to mean $\mathcal{A}(X)$ when (e.g.) defining
operations.  Note that the operation
\begin{equation*}
t_4:\subseteq\Pi^0_1(X)^2\crsarr\Sigma^0_1(X)^2,(A,B)\mapsto\{(U,V)\setconstr A\subseteq U\wedge B\subseteq V\wedge U\cap V=\emptyset\}
\end{equation*}
($\dom t_4=\{(A,B)\setconstr A\cap B=\emptyset\}$) is computable (see
\cite[Prop 3.5]{zerodpaper}).  In general recall a multi-valued function
$f:\subseteq X\crsarr Y$ is computable if there exists computable
$F:\subseteq\Bairespc\to\Bairespc$ such that
$\delta_X^{-1}\dom f\subseteq F^{-1}\dom\delta_Y$ and \((\forall p\in
\delta_X^{-1}\dom f) (\delta_Y\compose F)(p)\in (f\compose\delta_X)(p)\).

Two representations of the class $\mathcal{K}(X)$ of compact
subsets of $X$ will be used, namely $\deltac$ and $\deltamc$, where
each name $p$ of $K$ is respectively an unpadded list of all codes of
ideal covers (codes of irredundant ideal covers) of $K$.  Here $w\in\N^{*}$ is an ideal cover of $K$ if $\bigcup_{i<\absval{w}}\alpha(w_i)\supseteq K$,
and \emph{irredundant} if
$(\forall i<\absval{w})\alpha(w_i)\cap K\neq\emptyset$.  Thus
e.g.~\(p\in\deltac^{-1}\{K\}\iff \{\langle w\rangle\setconstr w\in\N^{*}
\wedge\bigcup_{i<\absval{w}}\alpha(w_i)\supseteq K\}=\img p\).  We also
denote $\mathcal{K}^{*}(X):=\mathcal{K}(X)\setminus\{\emptyset\}$.

Finally in this section, consider again the class $\mathbf{F}$ and its
canonical representation
$\eta:\Bairespc\to\mathbf{F}$; see \cite[Thm 2.3.13]{Weihrauchbook2} for the closely analogous properties in case of
continuous functions $\subseteq\Sigma^{\ordomega}\to\Sigma^{\ordomega}$ for a finite alphabet $\Sigma$.  We
recall $\eta$ satisfies both the utm (universal Turing machine) and smn properties, that is:
\begin{gather*}
\begin{split}
\text{utm:}\onespace(\exists \text{ computable }u\in\mathbf{F})(\forall p,q\in\Bairespc)\bigl(
 & (q\in\dom\eta_p\iff\langle p,q\rangle\in\dom u) \\
& \qquad \wedge (q\in\dom\eta_p\implies \eta_p(q)=u\langle p,q\rangle) \bigr),
\end{split}\\
\begin{split}
\text{smn:}\onespace & (\forall \textrm{ computable }f\in\mathbf{F})
(\exists \textrm{ computable }S\in\mathbf{F})(\forall p,q\in\Bairespc)\big( S \text{ total }\wedge \\
& \qquad (q\in\dom\eta_{S(p)}\iff\langle p,q\rangle\in\dom f)
\wedge (\langle p,q\rangle\in\dom f\implies f\langle p,q\rangle=\eta_{S(p)}(q)) \big).
\end{split}
\end{gather*}

Some simple applications of these properties
are to be found in dealing with spaces of relatively continuous functions.  For any set $X$, denote by $\pwrset(X)$ the power set of $X$.  For
sets $X,Y$ equipped with
representations $\delta_X,\delta_Y$, and a represented set
$\mathcal{Z}\subseteq\pwrset(X)$, consider the set
$\Cont_{\mathcal{Z}}(\delta_X,\delta_Y)$ and representation
$[\delta_X\to\delta_Y]_{\mathcal{Z}}=\deltacz$
defined by
\begin{align*}
\Cont_{\mathcal{Z}}(\delta_X,\delta_Y) &:=\{f:\subseteq X\to Y\setconstr
f\text{ $(\delta_X,\delta_Y)$-continuous and }\dom f\in\mathcal{Z}\}\\
\langle p,q\rangle\in\deltacz^{-1}\{f\} &:\iff \left( \eta_p
\text{ a $(\delta_X,\delta_Y)$-realizer of } f\right)\wedge \dom f=\delta_{\mathcal{Z}}(q).
\end{align*}
Also denote $\Cont(\delta_X,\delta_Y):=
\Cont_{\{X\}}(\delta_X,\delta_Y)$.  Note in general the definition of
$\deltacz$ depends on choice of $\delta_X,\delta_Y$:
\begin{lem}
Suppose $(X,\delta_X)$, $(Y,\delta_Y)$ are represented spaces,
$\mathcal{Z}\subseteq\pwrset(X)$ with representation
$\delta_{\mathcal{Z}}$.  If $\delta_X',\delta_Y'$ are
representations of $X,Y$ with
$\delta_X'\leq\delta_X$ and $\delta_Y\leq\delta_Y'$ then
corresponding variant representation $\deltacz'$ of
$\Cont_{\mathcal{Z}}(\delta_X',\delta_Y')$ has
$\deltacz\leq\deltacz'$.
\end{lem}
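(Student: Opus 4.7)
The plan is to construct a computable reduction on Baire space sending each $\deltacz$-name $\langle p,r\rangle$ of a function $f\in\Cont_{\mathcal{Z}}(\delta_X,\delta_Y)$ to a $\deltacz'$-name $\langle S(p),r\rangle$ of the same $f$ regarded as an element of $\Cont_{\mathcal{Z}}(\delta_X',\delta_Y')$. The domain-component $r$ can be passed through unchanged, since the same $\delta_{\mathcal{Z}}$ represents $\dom f$ in both settings. I will fix once and for all computable $G,H\in\mathbf{F}$ witnessing $\delta_X'\leq\delta_X$ and $\delta_Y\leq\delta_Y'$, i.e., $\delta_X'=\delta_X\compose G$ on $\dom\delta_X'$ and $\delta_Y=\delta_Y'\compose H$ on $\dom\delta_Y$. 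The natural replacement realizer of $f$ is then $H\compose\eta_p\compose G$, which sends a $\delta_X'$-name of some $x\in\dom f$ first (via $G$) to a $\delta_X$-name of $x$, then (via $\eta_p$) to a $\delta_Y$-name of $f(x)$, and finally (via $H$) to a $\delta_Y'$-name of $f(x)$.

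To turn the pointwise recipe $p\mapsto H\compose\eta_p\compose G$ into a computable function of $p$, I will use the utm property to rewrite $\eta_p(G(q))$ as $u\langle p,G(q)\rangle$ and so define the computable partial map
\[
f^*\langle p,q\rangle := H(u\langle p,G(q)\rangle)
\]
of two Baire-space inputs. Its domain is a $\Pi^0_2$ subset of $\Bairespc$, since each of $G,u,H$ lies in $\mathbf{F}$ and continuous preimages of $\Pi^0_2$ sets are $\Pi^0_2$, so $f^*\in\mathbf{F}$. Applying smn to $f^*$ then yields a total computable $S\in\mathbf{F}$ with $\eta_{S(p)}(q)=(H\compose\eta_p\compose G)(q)$ whenever the right-hand side is defined; the desired reduction witnessing $\deltacz\leq\deltacz'$ is $\langle p,r\rangle\mapsto\langle S(p),r\rangle$.

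Correctness is then a routine check: if $\langle p,r\rangle\in\deltacz^{-1}\{f\}$ then $\dom f=\delta_{\mathcal{Z}}(r)$, and the three-step chain above shows $\eta_{S(p)}$ is a $(\delta_X',\delta_Y')$-realizer of $f$ on $(\delta_X')^{-1}(\dom f)$; continuity of $f$ with respect to $(\delta_X',\delta_Y')$ follows automatically from continuity of its realizer, so indeed $f\in\Cont_{\mathcal{Z}}(\delta_X',\delta_Y')$ and $\langle S(p),r\rangle\in(\deltacz')^{-1}\{f\}$. I do not expect any serious obstacle here: smn is precisely the tool designed to convert a pointwise continuous construction into a uniform computable one, and the only technical care needed is the $\Pi^0_2$ domain bookkeeping, which is immediate from closure of $\mathbf{F}$ under composition.
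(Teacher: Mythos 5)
Your proposal is correct and follows essentially the same route as the paper: conjugate the realizer by the reduction witnesses to get $H\compose\eta_p\compose G$, use utm/smn to make this uniform in $p$ via a total computable $S$, pass the domain component through unchanged, and check that the resulting name $\langle S(p),r\rangle$ is a $\deltacz'$-name of $f$ (the continuity of the conjugated realizer giving $f\in\Cont_{\mathcal{Z}}(\delta_X',\delta_Y')$). The only cosmetic difference is that the paper phrases the smn input as the map $L\langle p,r\rangle=(G\compose\eta_p\compose F)(r)$ without spelling out the utm rewriting, which you make explicit.
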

\begin{proof}
If $f\in\Cont_{\mathcal{Z}}(\delta_X,\delta_Y)$ there
exists a continuous $(\delta_X,\delta_Y)$-realizer $K$ of $f$, so if
$F,G:\subseteq\Bairespc\to\Bairespc$ are respective witnesses of
$\delta_X'\leq\delta_X$ and $\delta_Y\leq\delta_Y'$ we claim
$f$ is $(\delta_X',\delta_Y')$-realized by $H=G\compose K\compose F$:
we find
any $p\in(\delta_X')^{-1}\dom f$
has \(p\in F^{-1}\delta_X^{-1}\{\delta_X'(p)\}\subseteq
F^{-1}K^{-1}\dom\delta_Y\subseteq F^{-1}K^{-1}G^{-1}\dom\delta_Y'=
H^{-1}\dom\delta_Y'\) with
\begin{equation*}
(\delta_Y'\compose H)(p)=(\delta_Y\compose K\compose F)(p)=
(f\compose\delta_X\compose F)(p)=(f\compose\delta_X')(p)
\end{equation*}
since $(K\compose F)(p)\in\dom\delta_Y$, $F(p)\in\delta_X^{-1}\dom f$
and $p\in\dom\delta_X'$.  That is,
\((\delta_X')^{-1}\dom f\subseteq H^{-1}\dom\delta_Y'\wedge
(\forall p\in(\delta_X')^{-1}\dom f)\left(
(\delta_Y'\compose H)(p)=(f\compose\delta_X')(p) \right)\) so
$f\in\Cont_{\mathcal{Z}}(\delta_X',\delta_Y')$.  Moreover, if
\[
L:\subseteq\Bairespc\to\Bairespc,\langle p,r\rangle\mapsto
(G\compose\eta_p\compose F)(r)
\]
is considered in the smn property
for $\eta$, then any corresponding (total) $S$ has for $M:\Bairespc\to\Bairespc,\langle p,q\rangle\mapsto\langle S(p),q\rangle$ that $M$
witnesses $\deltacz\leq\deltacz'$: for $f\in\Cont_{\mathcal{Z}}(\delta_X,\delta_Y)$, any $p'=\langle p,q\rangle\in\deltacz^{-1}\{f\}$
has $\pr_2 M(p')=q\in\delta_{\mathcal{Z}}^{-1}\{\dom f\}$ and any
$r\in(\delta_X')^{-1}\dom f\subseteq
(G\compose\eta_p\compose F)^{-1}\dom\delta_Y'$ has
\begin{align*}
r\in\dom(G\compose\eta_p\compose F) &\iff
\langle p,r\rangle\in\dom L \iff r\in\dom\eta_{S(p)} \\
&\iff r\in\dom\eta_{\pr_1 M(p')} \iff \textrm{True},
\end{align*}
with \(\eta_{\pr_1 M(p')}(r)
=\eta_{S(p)}(r)=L\langle p,r\rangle=(G\compose\eta_p\compose F)(r)\)
and
\[
(\delta_Y'\compose \eta_{\pr_1 M(p')})(r)=(\delta_Y'\compose G\compose\eta_p\compose F)(r)=(\delta_Y\compose\eta_p\compose F)(r)=
(f\compose\delta_X\compose F)(r)=(f\compose\delta_X')(r).
\]
So
\((\forall p'\in\dom\deltacz)M(p')\in(\deltacz')^{-1}\{f\}\) and in particular \(\dom\deltacz\subseteq M^{-1}\dom\deltacz'\wedge
(\forall p'\in\dom\deltacz)\deltacz(p')=(\deltacz'\compose M)(p')\).
This completes the proof.
\end{proof}

We will generally write $\Cont_{\mathcal{Z}}(X,Y)$ if the
represented spaces $(X,\delta_X)$, $(Y,\delta_Y)$ are clear.
Where not otherwise specified, if $X,Y$ are second countable $\Tzero$
spaces equipped with basis numberings $\alpha_X,\alpha_Y$ we will
assume the standard unpadded representations of $X,Y$ related to
$\alpha_X$, $\alpha_Y$ are used.
\begin{lem}\label{lem:eval}
Suppose $X,Y$ are $\Tzero$ spaces equipped with representations
$\delta_X,\delta_Y$, and $\mathcal{Z}\subseteq\pwrset(X)$ is a
represented set.
The operator $\ev':\subseteq\Cont_{\mathcal{Z}}(X,Y)\times X\to Y,(f,x)\mapsto f(x)$ with
$\dom\ev'=\{(f,x)\setconstr x\in\dom f\}$ is $([\deltacz,\delta_X],\delta_Y)$-computable.
\end{lem}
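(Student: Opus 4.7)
The plan is to use the utm property of $\eta$ stated at the end of the section. Given any name for $(f,x)\in\dom\ev'$ under $[\deltacz,\delta_X]$, it has the form $\langle\langle p,q\rangle,r\rangle$ where $\eta_p$ is a $(\delta_X,\delta_Y)$-realizer of $f$, $\delta_{\mathcal{Z}}(q)=\dom f$, and $\delta_X(r)=x$. Since $x\in\dom f$, the realizer condition forces $r\in\dom\eta_p$ and $\eta_p(r)\in\delta_Y^{-1}\{f(x)\}$. So all we really need to compute is $\eta_p(r)$ from $(p,r)$, and that is precisely what utm gives us: there is a computable $u\in\mathbf{F}$ with $\eta_p(r)=u\langle p,r\rangle$ whenever $r\in\dom\eta_p$.

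Concretely, I would define the computable map
\[
T:\subseteq\Bairespc\to\Bairespc,\onespace\langle\langle p,q\rangle,r\rangle\mapsto u\langle p,r\rangle,
\]
(composing $u$ with the computable pairing projections) and check it realizes $\ev'$. For $(f,x)\in\dom\ev'$ and any name $s=\langle\langle p,q\rangle,r\rangle$ of $(f,x)$, we have $r\in\delta_X^{-1}\dom f\subseteq\eta_p^{-1}\dom\delta_Y$, so by utm $\langle p,r\rangle\in\dom u$, giving $s\in\dom T$ and $T(s)=\eta_p(r)$. Then
\[
(\delta_Y\compose T)(s)=(\delta_Y\compose\eta_p)(r)=(f\compose\delta_X)(r)=f(x)=\ev'(f,x),
\]
as required.

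There is essentially no obstacle: the lemma is a direct application of the utm property together with the observation that the $q$-component of the $\deltacz$-name (which encodes $\dom f$ via $\delta_{\mathcal{Z}}$) plays no role in evaluating $f$ on a point already known to lie in $\dom f$, and can simply be discarded. The only mild point to keep in mind is that $T$ is defined on all of $\dom u$'s preimage, so $\dom T$ may be larger than $\dom\ev'$'s name set; this is harmless because computability of an operator only requires correctness on names of elements of the domain.
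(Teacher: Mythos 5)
Your proposal is correct and follows essentially the same route as the paper's proof: both apply the utm property to define the realizer $\langle\langle p,q\rangle,r\rangle\mapsto u\langle p,r\rangle$ (your $T$ is the paper's $I$), discard the $q$-component, and verify correctness via $r\in\delta_X^{-1}\dom f\subseteq\eta_p^{-1}\dom\delta_Y$ and $(\delta_Y\compose\eta_p)(r)=(f\compose\delta_X)(r)$.
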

\begin{proof}
Let $u:\subseteq\Bairespc\to\Bairespc$ be computable witnessing the utm property for $\eta$.
Then $I:\subseteq\Bairespc\to\Bairespc,\langle\langle p,q\rangle,r\rangle\mapsto u\langle p,r\rangle$ (with
natural domain) is computable and we show any $\langle p',r\rangle \in
[\deltacz,\delta_X]^{-1}\dom\ev'$ has $\langle p',r\rangle\in
I^{-1}\dom\delta_Y$ and $(\delta_Y\compose I)\langle p',r\rangle =
(\ev'\compose[\deltacz,\delta_X])\langle p',r\rangle$.

Namely, if $(f,x)\in\dom\ev'$ and
$\langle\langle p,q\rangle,r\rangle\in[\deltacz,\delta_X]^{-1}\{(f,x)\}$ then
\[
r\in\delta_X^{-1}\{x\}\subseteq \delta_X^{-1}\dom f\subseteq \eta_p^{-1}\dom\delta_Y
\]
with $(\delta_Y\compose\eta_p)(r)=(f\compose\delta_X)(r)$.  So, $\langle p,r\rangle\in\dom u$ with
$u\langle p,r\rangle=\eta_p(r)\in\dom\delta_Y$, but then $\langle\langle p,q\rangle,r\rangle \in
I^{-1}\dom\delta_Y$.  Moreover,
\[
(\ev'\compose[\deltacz,\delta_X])\langle\langle p,q\rangle,r\rangle=(f\compose\delta_X)(r)=
(\delta_Y\compose\eta_p)(r)=(\delta_Y\compose I)\langle\langle p,q\rangle,r\rangle.
\]
This completes the proof.
\end{proof}

\begin{lem}\label{lem:preimg}
Suppose $X,Y$ are $\Tzero$ spaces equipped with basis numberings $\alpha_X,\alpha_Y$ and
c.e.~formal inclusions $\sqsubset$, $\sqsubset'$ satisfying (\ref{inclthree}), (\ref{inclone})
from Definition \ref{def:formincl} respectively.  The operator
\[
v_{\mathcal{Z}}:\Cont_{\mathcal{Z}}(X,Y)\times\Sigma^0_1(Y)\crsarr\Sigma^0_1(X),
(f,U)\mapsto \{V\setconstr V\cap\dom f=f^{-1}U\}
\]
is $([\deltacz,\delta_{\Sigma^0_1(Y)}],\delta_{\Sigma^0_1(X)})$-computable.
\end{lem}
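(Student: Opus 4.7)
The plan is to output the open set
$V:=\bigcup\{\alpha_X(a):\alpha_X(a)\cap\dom f\subseteq f^{-1}U\}$,
which by continuity of $f$ satisfies $V\cap\dom f=f^{-1}U$. So it suffices to construct, uniformly in the name $\langle\langle p,q\rangle,r\rangle$ of $(f,U)$, a c.e.\ enumeration of a sufficiently rich set of indices $a$ witnessing $\alpha_X(a)\cap\dom f\subseteq f^{-1}U$.

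From the utm property of $\eta$, one extracts uniformly in $p$ a c.e.\ ``forcing'' relation $R_p\subseteq\N^{<\omega}\times\N^{<\omega}$: put $(w,v)\in R_p$ when every $s\in\dom\eta_p$ extending $w$ has $\eta_p(s)$ extending $v$, with every real forcing eventually enumerated. I would then enumerate $a+1$ into the output name of $V$ whenever a tuple $(w,v,b_1,\ldots,b_n,c)$ is found satisfying: $(w,v)\in R_p$; the nonzero entries of $w$ are exactly $b_1+1,\ldots,b_n+1$; $v$ contains $c+1$ at some position; $c+1$ occurs in $r$; and $a\sqsubset b_l$ for every $l\le n$. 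All these conditions are c.e.\ in the input (using c.e.-ness of $\sqsubset$), so dovetailing produces a computable realizer $F$.

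For soundness: if $a$ is so enumerated and $x\in\alpha_X(a)\cap\dom f$, then $a\sqsubset b_l$ for each $l$ forces $x\in\bigcap_l\alpha_X(b_l)$, so some $\delta_X$-name $s$ of $x$ extends $w$, and $\eta_p(s)$ is a $\delta_Y$-name of $f(x)$ extending $v$, yielding $f(x)\in\alpha_Y(c)\subseteq U$. For completeness: given $x\in f^{-1}U\cap\dom f$, fix $s\in\delta_X^{-1}\{x\}$ and choose $c$ with $f(x)\in\alpha_Y(c)$ and $c+1$ in $r$ (such $c$ exists since $U=\bigcup\{\alpha_Y(r_i-1):r_i\ge 1\}$, and using~(\ref{inclone}) on $Y$ the $\delta_Y$-name $\eta_p(s)$ lists $c+1$); take finite prefixes $w$ of $s$ and $v$ of $\eta_p(s)$ with $(w,v)\in R_p$ and $v$ covering $c+1$, letting $b_1,\ldots,b_n$ be the nonzero entries of $w$; iterating~(\ref{inclthree}) on $X$ at $x$ over $b_1,\ldots,b_n$ yields some $a$ with $x\in\alpha_X(a)$ and $a\sqsubset b_l$ for each $l$, which is therefore enumerated.

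The main obstacle is extracting $R_p$ carefully from the universal $u$ while handling the partiality of $\eta_p$; this is a standard type-II bookkeeping exercise but requires care to ensure both that no spurious forcings are enumerated and that every genuine one eventually appears. Conceptually, axiom~(\ref{inclthree}) on $X$ is essential in the completeness direction (to obtain a common $\sqsubset$-refinement of the finitely many basic opens containing $x$ named by the entries of $w$), while~(\ref{inclone}) on $Y$ ensures the $\delta_Y$-name of $f(x)$ output by $\eta_p$ lists an index $c$ matching one appearing in the given name $r$ of $U$.
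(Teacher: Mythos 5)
Your proposal is correct and is essentially the paper's own argument: the paper likewise dovetails simulations of the (universal) evaluation realizer, treating a finite input prefix $w$ as forcing an output entry when the machine produces it without the head reading past that prefix, and then uses the c.e.\ inclusion with axiom (\ref{inclthree}) on $X$ to enumerate an index $c$ with $c\sqsubset w_i$ for all entries of the prefix into the name of the output open set, with soundness and completeness verified exactly as you do. Two small remarks: your $R_p$ should be taken to be precisely this machine-witnessed (use-bounded) forcing relation --- enumerating \emph{every} semantic forcing is neither possible nor needed, since the argument only requires forcings along genuine names $s\in\dom\eta_p$, which the simulation supplies --- and whereas you match output indices literally against the entries of $r$ (which works under the standard name conventions, so your appeal to (\ref{inclone}) is in fact superfluous), the paper instead accepts any output index $a$ with $a\sqsubset' s_i-1$ and invokes (\ref{inclone}) on $Y$ in the completeness direction.
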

\begin{proof}
We consider dovetailed simulations of a fixed type $2$ TM $\mathcal{M}$ computing $I:\subseteq\Bairespc\to\Bairespc$, the realizer of $\ev'$ from
the proof of Lemma \ref{lem:eval}.  Namely, on input
$\langle\langle p,q\rangle,s\rangle\in\Bairespc$, dovetail output of
$\ordomega$ copies of $0$ with output of $c+1$ over all $a,c\in\N$,
$n\geq 1$, $w\in\N^n\subseteq\N^{*}$ such that
$(\exists i)s_i\geq 1\wedge a\sqsubset' s_i-1$,
$(\forall i<n)(c\sqsubset w_i)$ and such that $a$ appears in the
output of $\mathcal{M}$ on input $\langle\langle p,q\rangle,r\rangle$
without the input head having read past the first $2 n$ places
($r:=w.0^{\ordomega}$).  Plainly this algorithm defines a computable
map $\subseteq\Bairespc\to\Bairespc$.

If $\langle p,q\rangle\in\deltacz^{-1}\{f\}$,
$s\in\delta_{\Sigma^0_1(Y)}^{-1}\{U\}$ and the
entire output is $t\in\Bairespc$, we show
$\delta_{\Sigma^0_1(X)}(t)\cap\dom f=f^{-1}U$.
First, any nonzero output $c+1$ has $\alpha_X(c)\subseteq
\bigcap_{i<n}\alpha_X(w_i)$ for some $a$, $n$, $w\in\N^n$ as above,
but any $x$ in this intersection has a $\delta_X$-name $r$ extending
$r\restrict n=w$.  Then if $x\in\dom f$ we have
\[
f x=(f\compose\delta_X)(r) = (\delta_Y\compose\eta_p)(r)=
(\delta_Y\compose I)\langle\langle p,q\rangle,r\rangle\in\alpha_Y(a).
\]
This shows $f(\bigcap_{i<n}\alpha_X(w_i)\cap\dom f) \subseteq
\alpha_Y(a)\subseteq U$ and in particular
$\alpha_X(c)\cap\dom f\subseteq f^{-1}U$.

Conversely, any $x\in f^{-1}U$ has (by (\ref{inclone}) of Definition \ref{def:formincl}) some $i,a$ such that
$s_i\geq 1\wedge a\sqsubset' s_i-1\wedge\alpha_Y(a)\ni f x$.  For arbitrary
$r\in\delta_X^{-1}\{x\}$ we know
\[
\langle\langle p,q\rangle,r\rangle \in
[\deltacz,\delta_X]^{-1}\dom\ev'\subseteq I^{-1}\dom\delta_Y
\]
and
$(\delta_Y\compose I)\langle\langle p,q\rangle,r\rangle=f x$, so in particular $a$ appears in
$I\langle\langle p,q\rangle,r\rangle$.  Fix $n\geq 1$ such that $a$ is produced by $\mathcal{M}$ on input
$\langle\langle p,q\rangle,r\rangle$ without the input head reading the $(2 n)^{\Th}$ input place or higher;
$w:=r\restrict n$, $\langle p,q\rangle\restrict n$ and $a$ now satisfy the requirements of the above algorithm.  By repeated use
of (\ref{inclthree}) from Definition \ref{def:formincl}, there exists $c$ such that
$x\in\alpha_X(c)\wedge (\forall i<n)(c\sqsubset w_i)$, and then $c+1$ appears in the output of our algorithm.  Since $x$ was arbitrary, this
establishes $f^{-1}U\subseteq \delta_{\Sigma^0_1(X)}(t)$ as desired.
%
%
\end{proof}
\begin{cor}
Under the conditions of the lemma,
\[
a_{\mathcal{Z}}:\Cont_{\mathcal{Z}}(X,Y)\times\Pi^0_1(Y)\crsarr
\Pi^0_1(X), (f,A)\mapsto \{C\setconstr C\cap\dom f=f^{-1}A\}
\]
is computable.
\end{cor}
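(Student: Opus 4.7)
The plan is to reduce this corollary directly to Lemma \ref{lem:preimg} by passing to complements, exploiting the very definition $p\in\delta_{\Pi^0_1(X)}^{-1}\{A\}\iff p\in\delta_{\Sigma^0_1(X)}^{-1}\{X\setminus A\}$ (and the analogous statement for $Y$).

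First, given a $\delta_{\Pi^0_1(Y)}$-name $q$ of $A\in\Pi^0_1(Y)$, by definition $q$ is simultaneously a $\delta_{\Sigma^0_1(Y)}$-name of $U:=Y\setminus A$. So on input $\langle p',q\rangle\in(\deltacz\times\delta_{\Pi^0_1(Y)})^{-1}\{(f,A)\}$, we may feed $\langle p',q\rangle$ into the realizer supplied by Lemma \ref{lem:preimg} for $v_{\mathcal{Z}}$; the output is some $r\in\Bairespc$ with $\delta_{\Sigma^0_1(X)}(r)=V$ for some $V\in\Sigma^0_1(X)$ satisfying $V\cap\dom f=f^{-1}U=\dom f\setminus f^{-1}A$ (using that $f^{-1}A\subseteq\dom f$).

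Second, observe that this same $r$ is, by the definition of $\delta_{\Pi^0_1(X)}$, a name of the closed set $C:=X\setminus V$. A brief set-theoretic check gives
\[
C\cap\dom f=\dom f\setminus V=\dom f\setminus(\dom f\setminus f^{-1}A)=f^{-1}A,
\]
so $C\in a_{\mathcal{Z}}(f,A)$, and the composed computable map is a $([\deltacz,\delta_{\Pi^0_1(Y)}],\delta_{\Pi^0_1(X)})$-realizer of $a_{\mathcal{Z}}$.

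There is no real obstacle here: the whole proof is just the observation that $\delta_{\Pi^0_1}$ and $\delta_{\Sigma^0_1}$ of the complement share names, together with the identity $(X\setminus V)\cap\dom f=f^{-1}A$ when $V\cap\dom f=\dom f\setminus f^{-1}A$. The corollary therefore needs only a few lines.
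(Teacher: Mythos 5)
Your proof is correct and is essentially the paper's own argument: both exploit that a $\delta_{\Pi^0_1}$-name is by definition a $\delta_{\Sigma^0_1}$-name of the complement on both sides, so the very realizer of $v_{\mathcal{Z}}$ from Lemma \ref{lem:preimg} (read with complemented inputs and outputs) already realizes $a_{\mathcal{Z}}$, via the identity $\dom f\setminus f^{-1}A=f^{-1}(Y\setminus A)$.
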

\begin{proof}
Suppose $f\in\Cont_{\mathcal{Z}}(X,Y)$.  Since $\dom f\setminus f^{-1}A=f^{-1}(Y\setminus A)$ we have
\[
C\in a_{\mathcal{Z}}(f,A)\iff \dom f\setminus C=f^{-1}(Y\setminus A) \iff X\setminus C\in v_{\mathcal{Z}}(f,Y\setminus A).
\]
Now suppose $F$ realizes $v_{\mathcal{Z}}$, that is $\dom [\deltacz,\delta_{\Sigma^0_1(Y)}]\subseteq \dom F$ and
\[
(\delta_{\Sigma^0_1(X)}\compose F)\langle\langle p,q\rangle,r\rangle\cap\delta_{\mathcal{Z}}(q) =
(\deltacz\langle p,q\rangle)^{-1}\delta_{\Sigma^0_1(Y)}(r)
\]
for all $p,r\in\Bairespc$,
$q\in\dom\delta_{\mathcal{Z}}$.  Then $F$ realizes $a_{\mathcal{Z}}$:
$\dom[\deltacz,\delta_{\Pi^0_1(Y)}]=\dom[\deltacz,\delta_{\Sigma^0_1(Y)}]\subseteq \dom F$ and any
$p,s\in\Bairespc$, $q\in\dom\delta_{\mathcal{Z}}$ have
\[
(\delta_{\Pi^0_1(X)}\compose F)\langle\langle p,q\rangle,s\rangle\cap \delta_{\mathcal{Z}}(q) =
(\deltacz\langle p,q\rangle)^{-1}(Y\setminus\delta_{\Sigma^0_1(Y)}(s))
=(\deltacz\langle p,q\rangle)^{-1}\delta_{\Pi^0_1(Y)}(s),
\]
as required.
\end{proof}


\section{Local finiteness and Dugundji systems}\label{sec:thirdagain}
Let $X$ be a topological space, suppose $\Gamma\subseteq\pwrset(X)$, and let $J$ be a nonempty set.
A collection $(A_i)_{i\in J}\in\Gamma^J$ is \emph{locally finite} (in $X$) if for every $x\in X$ there exists
$U\in\mathcal{T}_X$ such that $x\in U$ and $\{i\in J\setconstr A_i\cap U\neq\emptyset\}$ is finite.  In
applications, it happens that one wants to deal with collections which are locally finite in $Y$ for some open
$Y\subseteq X$.  Also, to formulate effective analogues of theorems about locally finite
collections, we would like to consider (as part of the names of such objects)
some witnesses of the local finiteness condition.
Specifically, in this section we will establish computable versions of the countable paracompactness property
(Lemma \ref{lem:effparac}) and the construction of a Dugundji system for a closed set $A\subseteq X$
(see Proposition \ref{prn:dugundji}); in the next section we will moreover apply the latter to
construct a retraction onto $A$ in case $X$ is effectively zero-dimensional.
These considerations motivate the following definition, where recall (from \cite{Weihrauchbook}) any numbering
$\nu:\subseteq\N\to J$ induces a numbering $\FS(\nu):\subseteq\N\to E(J)$ of the finite subsets of $J$,
namely by
\[
\FS(\nu)(k)=\{\nu(i)\setconstr i\in e(k)\}\onespace\text{ where }\onespace
\dom\FS(\nu)=\{k\setconstr e(k)\subseteq\dom\nu\}
\]
and $e=\psi^{-1}$ for the bijection $\psi:E(\N)\to\N,F\mapsto \sum_{i\in F}2^i$.
%
\begin{defi}
If $\Gamma\subseteq\pwrset(X)$ is represented by $\delta_{\Gamma}$ and
$(J,\nu)$ is a numbered set, fix a basis numbering
$\alpha:\N\to\img\alpha\subseteq\mathcal{O}(X)$ with
$(\exists\hat{a}_0)\alpha(\hat{a}_0)=\emptyset$ and define
$\LF_{\Gamma,J}$ as the set of all $(A_i)_i\in\Gamma^J$ such that
\[
(\forall x\in X)(\forall i\in J)(\exists a)(\exists S\in E(J))
\left( x\in A_i\implies x\in \alpha(a)\wedge\{j\setconstr A_j\cap\alpha(a)\neq\emptyset\}\subseteq S \right),
\]
and $\LF_{\Gamma,J}'$ as the set of all $(A_i)_i\in\LF_{\Gamma,J}$
such that $\bigcup_i A_i\in\mathcal{O}(X)$.  In all the
cases we will consider, $(J,\nu)=(\N,\id_{\N})$, and in this case it
is convenient to write $\LF_{\Gamma}$ for $\LF_{\Gamma,\N}$
($\LF_{\Gamma}'$ for $\LF_{\Gamma,\N}'$) and fix the representation
$\delta_0:=\delta_{\Gamma}^{\ordomega}|^{\LF_{\Gamma}}$ of
$\LF_{\Gamma}$; more generally we could take
e.g.~$\delta_0=[\delta_{\nu}\to\delta_{\Gamma}]|^{\LF_{\Gamma}}$.  We
also introduce the representations $\delta_1$ of $\LF_{\Gamma}$ and
$\delta_2=\delta_{\Gamma,\alpha}$ of $\LF_{\Gamma}'$ by
\begin{align*}
\langle p,q,r\rangle\in\delta_1^{-1}\{(A_j)_j\}:&\iff
p\in\delta_0^{-1}\{(A_j)_j\}\wedge \textstyle\bigcup_j A_j\subseteq
\bigcup_i\alpha(q_i)\\
&\:\:\:\:\:\:\:\:\:\:\:\wedge
(\forall i)\{j\in\N\setconstr A_j\cap\alpha(q_i)\neq\emptyset\}\subseteq \FS(\id_{\N})(r_i),\\
\langle p,q,r\rangle\in\delta_2^{-1}\{(A_j)_j\}:&\iff
\langle p,q,r\rangle\in\delta_1^{-1}\{(A_j)_j\}\wedge
\textstyle\bigcup_j A_j=\bigcup_i\alpha(q_i),
\end{align*}
\end{defi}
Thus a $\delta_1$-name for a sequence of sets $(A_i)_i\in\Gamma^{\N}$
encodes a $\delta_0$-name for the same sequence plus witness
information on the covering of each $A_i$ by (a countable collection
of) sets $\alpha(a)$ such that each intersects $A_j$ for only finitely
many $j$ (witnesses of this finiteness are also included, for each
$i$).  A $\delta_2$-name contains similar information but requires the countable cover (by sets in $\img\alpha$) to cover exactly
$\bigcup_i A_i$.

If the \emph{shift} of $\alpha$ is
$\alpha':\N\to\mathcal{O}(X)$ defined by
$\alpha'(0)=\emptyset$, $\alpha'(n+1)=\alpha(n)$ ($n\in\N$) we can also
observe $\delta_{\Gamma,\alpha}\equiv\delta_{\Gamma,\alpha'}$.  Next we
define Dugundji systems.
\begin{defi}\label{def:dugundji}
Let $(X,d)$ be a separable metric space, $\epsilon\in(0,\infty)$ and
$A\in\mathcal{A}(X)\setminus\{\emptyset\}$, where $\mathcal{A}(X)$
denotes the class of closed subsets of $X$.  A tuple $((V_i)_i,(y_i)_i)\in
\LF_{\Sigma^0_1(X),\N}\times X^{\N}$
is a \emph{Dugundji system} for $A$ (with coefficient $1+\epsilon$) provided
\begin{enumerate}
\item $\bigcup_{i\in\N}V_i=X\setminus A$, $(y_i)_i\in A^{\N}$,
\item $(\forall i)(\forall x\in V_i)(d(x,y_i)\leq (1+\epsilon) d_A(x))$, and
\item if $(n_i)_i\in\N^{\N}$ has $\lim_{i\to\infty}d(V_{n_i},A)=0$ then
$\lim_{i\to\infty}\diam V_{n_i}=0$.
\end{enumerate}
\end{defi}
Thus a Dugundji system specifies a collection of points in
$A$ and a similarly-indexed locally finite cover
$\mathcal{V}=(V_i)_{i\in I}$ of $X\setminus A$ such that the sets of
the cover are small near $A$, and each corresponding point in $A$ is
a relatively good approximation within $A$ of a given nonempty set
$V_i$.

For convenience, when the coefficient $1+\epsilon$ is not specified, we will assume $\epsilon=1$, and
follow this convention in our notation --- given a separable metric space $X$, we will denote
\[
\mathcal{D}(X):=\{((V_i)_i,(y_i)_i)\setconstr
((V_i)_i,(y_i)_i) \text{ a D.~system for }A=X\setminus\bigcup_i V_i\} \subseteq
\LF_{\Sigma^0_1(X),\N}\times X^{\N}
\]
and $\pi:\mathcal{D}(X)\to\mathcal{A}(X)\setminus\{\emptyset\},
((V_i)_i,(y_i)_i)\mapsto X\setminus\bigcup_i V_i$.  Using the standard
ball numbering $\alpha$
and $\delta_2=\delta_{\Sigma^0_1(X),\alpha'}$,
we then define
$\delta_3:\subseteq\Bairespc\to\mathcal{D}(X)$ by letting
\[
\langle p,q\rangle\in\delta_3^{-1}\{((V_i)_i,(y_i)_i)\}:\iff
p\in\delta_2^{-1}\{(V_i)_i\}\wedge q\in(\delta_X^{\ordomega})^{-1}\{(y_i)_i\},
\]
and $\delta_4=\deltad:\subseteq\Bairespc\to\mathcal{A}(X)\setminus\{\emptyset\}$ by letting
$\langle p,q\rangle\in\delta_4^{-1}\{A\}$ if there exists $D=((V_i)_i,(y_i)_i)\in\mathcal{D}(X)$ with
$\pi(D)=A$ and $\langle p,q\rangle\in\delta_3^{-1}\{D\}$.  We will mostly deal with Dugundji systems
$D\in\mathcal{D}(X)\setminus\pi^{-1}\{X\}$, i.e.~Dugundji systems for sets $A\in\mathcal{A}':=
\mathcal{A}(X)\setminus\{\emptyset,X\}$.  We will introduce further
representations related to Dugundji systems in
Section \ref{sec:newfourth} (specifically
$\delta_5$, $\delta_6$ which relate to the case $\dim A\leq 0$).

Classically, if $((V_i)_i,(y_i)_i)$ is a Dugundji system with coefficient $1+\epsilon$ for $A$, observe any
$p\in\bdry A$ has, if $p_i\in\ball{p}{2^{-i}}\setminus A$, say $p_i\in V_{n_i}$ with
$d(x,y_{n_i})\leq (1+\epsilon)d_A(x)$ for all
$x\in V_{n_i}$, that
\[
d(p,y_{n_i})\leq d(p,p_i)+d(p_i,y_{n_i})\leq d(p,p_i)+(1+\epsilon)d_A(p_i) \leq (2+\epsilon)d(p,p_i)\to 0
\]
as $i\to\infty$.  So $\bdry A\subseteq\cl{\{y_i\setconstr i\in\N\}}\subseteq A$.
Also, for any $A\in\mathcal{A}(X)\setminus\{\emptyset\}$ and dense $A'\subseteq A$, there exists
$((V_i)_i,(y_i)_i)\in\pi^{-1}\{A\}$ with $(y_i)_i\subseteq A'$.
More effectively, we will see a computable version of the construction of a Dugundji system for $A$ (with
coefficient $1+\epsilon$) in Proposition \ref{prn:dugundji},
uniformly for $A\in\mathcal{A}'$,
$\epsilon\in\Q\cap(0,1]$.  First, however, we describe some more basic results involving local finiteness.
Recall a \emph{countably paracompact} space $X$ is Hausdorff and such
that any countable open cover has a locally finite open refinement.
Other characterizations of countable paracompactness may
be found e.g.~in \cite[Thm 5.2.1]{Engelking}.
\begin{lem}[Effective countable paracompactness]\label{lem:effparac}
Suppose $(X,\mathcal{T}_X)$ is a $\Tzero$ space
with basis numbering $\alpha$,
$(\exists\tilde{a}_0)\alpha(\tilde{a}_0)=\emptyset$ and $\sqsubset$ is
a c.e.~refined inclusion of $\alpha$ with respect to $\alpha$.
Also suppose there exist computable $\hat{\alpha}:\N\to\Pi^0_1(X)$ such that
$\hat{\alpha}(\tilde{a}_0)=\emptyset$, $(\forall a)\alpha(a)\subseteq\hat{\alpha}(a)$ and
$(\forall a,b)(a\sqsubset b\implies \hat{\alpha}(a)\subseteq\alpha(b))$, and
computable $t:\N\to\Bairespc$ such that $\img t\subseteq\dom\delta_X$ and $\cl{\delta_X(\img t)}=X$,
where $\delta_X$ is the standard unpadded representation of $X$ associated to basis numbering $\alpha$.  Then the
operation $L:\Sigma^0_1(X)^{\N}\crsarr\Sigma^0_1(X)^{\N}\times\textrm{LF}_{\Pi^0_1(X)}'\times\Bairespc$
defined by
\[
L((U_i)_i)=\{((V_i)_i;(A_i)_i;e)\setconstr (\forall i)V_i\subseteq A_i\subseteq U_{e_i}\wedge
\textstyle\bigcup_i V_i=\bigcup_i U_i\}
\]
is computable.
\end{lem}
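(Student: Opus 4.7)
The plan is to effectivize a Michael-style subtractive refinement of $(U_i)_i$, using the refined formal inclusion $\sqsubset$ twice to extract nested chains $\alpha(c)\subseteq\hat\alpha(c)\subseteq\alpha(a)\subseteq\hat\alpha(a)\subseteq\alpha(b)\subseteq U_i$ from the $\delta_{\Sigma^0_1(X)}$-names of $(U_i)_i$, and to use the innermost basic opens $\alpha(c_k)$ simultaneously as subtractive masks for the $A_n$'s and as local-finiteness witness neighbourhoods. Specifically, I semidecide the relation
$$\Theta(c,a,b,i) :\iff c\sqsubset a \:\wedge\: a\sqsubset b \:\wedge\: (b+1 \text{ occurs in the name of } U_i)$$
on $\N^4$, enumerate its witnesses totally as $(c_n,a_n,b_n,i_n)_n$ (padding no-op stages with $(\tilde a_0,\tilde a_0,\tilde a_0,0)$), and set, uniformly in $n$,
\[
V_n := \alpha(a_n)\setminus\bigcup_{k<n}\hat\alpha(c_k),\quad A_n := \hat\alpha(a_n)\setminus\bigcup_{k<n}\alpha(c_k),\quad e_n := i_n.
\]
These are respectively a uniformly computable $\Sigma^0_1$ set and a $\Pi^0_1$ set, with $V_n \subseteq A_n \subseteq U_{e_n}$ immediate from the chain above.

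For the covering axiom $\bigcup_n V_n = \bigcup_i U_i$, given $x\in U_i$ I apply the refined-inclusion axiom twice to produce $(c,a,b,i)$ with $x\in\alpha(c)$; this appears as some $(c_n,a_n,b_n,i_n)$, and for the least $n^*$ with $x\in\alpha(a_{n^*})$, minimality gives $x\notin\alpha(a_k)\supseteq\hat\alpha(c_k)$ for all $k<n^*$, hence $x\in V_{n^*}$. The same minimality argument yields $\bigcup_n A_n = \bigcup_n\hat\alpha(a_n) = \bigcup_i U_i$ (open). For the $\LF_{\Pi^0_1(X)}'$-name of $(A_n)_n$ I emit the $\delta_2$-data $\langle p,q,r\rangle$ with $q_i := c_i$ (giving the cover $\bigcup_i\alpha(c_i) = \bigcup_n A_n$ by the same double-refinement argument) and $r_i$ the canonical $\FS(\id_\N)$-code of $\{0,1,\ldots,i\}$. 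This suffices because $A_n\cap\alpha(c_i) = \emptyset$ whenever $n>i$: $\alpha(c_i)$ is literally subtracted in the definition of $A_n$. An smn-type uniformity argument then packages all of the above into the computability of $L$.

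The delicate design choice is subtracting the \emph{inner} $\hat\alpha(c_k)$ rather than the naturally available $\hat\alpha(a_k)$ in the definition of $V_n$: a naive Michael subtraction of the $\hat\alpha(a_k)$'s can block any $x\in\hat\alpha(a_k)\setminus\alpha(a_k)$ (a ``boundary'' point of an early-enumerated set) from ever appearing in any later $V_n$, which breaks the covering axiom. Double nesting fixes this because $\hat\alpha(c_k)\subseteq\alpha(a_k)$, so any point blocking the subtraction must itself lie in $\alpha(a_k)$, where the minimality argument rules it out. I expect this to be the main technical point; the rest is straightforward $\Sigma^0_1$/$\Pi^0_1$-name bookkeeping together with the formal-inclusion axiom.
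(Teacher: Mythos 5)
Your proposal is correct and is essentially the paper's own proof: the same doubly-nested subtractive (Michael-style) construction obtained by applying the refined inclusion twice below a listed basic set of some $U_i$, with the inner basic sets $\alpha(c_k)$ doing double duty as subtraction masks and as local-finiteness witness neighbourhoods, the same definitions of $V_n$ and $A_n$ (open-minus-closed and closed-minus-open, modulo renaming of the three levels), and the same $\delta_2$-bookkeeping with $\FS(\id_\N)$-codes of $\{0,\dots,i\}$. The only deviation is that you enumerate all formal witnesses $c\sqsubset a\sqsubset b$ directly, whereas the paper additionally anchors the innermost codes to point-names $t(n)$ from the dense sequence before admitting them to the enumeration; your minimality/covering argument goes through without this, so the difference is immaterial.
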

\begin{proof}
Given $p\in\delta_X^{-1}\{x\}$ and a $\delta_{\Sigma^0_1(X)}^{\ordomega}$-name
$\langle q^{(0)},\dots\rangle$ for $(U_i)_{i\in\N}$ we can dovetail searching for $a,b,i\in\N$
such that $(\exists k,l)(p_k=a\sqsubset b\sqsubset q^{(i)}_l-1)$.  Dovetail the algorithm just described over
$\delta_X$-names $p=t(n)$ where $n\in\N$, and also dovetail it with repeated output of
$\tilde{a}_0,\tilde{a}_0,0$, where $\tilde{a}_0$ is as in the statement.  In this algorithm, label $j^\Th$ outputs $a_j,b_j,i_j$ and define $e_j:=i_j$ and
\[
V_j:=\alpha(b_j)\setminus\bigcup_{i<j}\hat{\alpha}(a_i),\onespace
A_j:=\hat{\alpha}(b_j)\setminus\bigcup_{i<j}\alpha(a_i)
\]
so $V_j\subseteq A_j\subseteq U_{i_j}$ for each $j\in\N$.
If $x\in\bigcup_i\alpha(b_i)$, $i(x):=\leastmu{i}{x\in\alpha(b_i)}$, clearly
$x\in\alpha(b_{i(x)})\setminus\bigcup_{j<i(x)}\alpha(b_j)\subseteq V_{i(x)}$.
Also, for any $i,j\in\N$ with $i>j$,
\[
\alpha(a_j)\cap A_i\subseteq \alpha(a_j)\setminus\bigcup_{k<i}\alpha(a_k)=\emptyset \onespace\textrm{ holds;}
\]
thus each $x\in\alpha(a_j)$ has neighbourhood $\alpha(a_j)$ and $S=\{0,\dots,j\}$
witnessing local finiteness of collection
$(A_i)_i$ at $x$.  To restate in slightly different terms, the witness information necessary for a
$\delta_1$-name here consists of a
$\delta_{\Pi^0_1(X)}^{\ordomega}$-name
$\tilde{p}$ of $(A_j)_j$, the information $\tilde{q}:=(a_j)_{j\in\N}$ on neighbourhoods witnessing local
finiteness of $(A_j)_j$ in $\bigcup_j\alpha(a_j)$, and the information $r\in\Bairespc$ where
$r_j:=\sum_{i\leq j}2^i$
($j\in\N$) so that $(\forall j)\FS(\id_{\N})(r_j)=\{0,\dots,j\}$.  But such
$\langle\tilde{p},\tilde{q},r\rangle\in\Bairespc$ can be computed from the inputs by the above discussion.

To prove $\langle\tilde{p},\tilde{q},r\rangle$ is a $\delta_1$-name of $(A_i)_i$, it then remains to check
$\bigcup_j A_j\subseteq\bigcup_j\alpha(a_j)$ (given the above, this is also a sufficient condition for
$\langle\tilde{p},\tilde{q},r\rangle\in\delta_2^{-1}\{(A_i)_i\}$).  Clearly it is enough to show
\[
\bigcup_i U_i\subseteq\bigcup_i\alpha(a_i) \textrm{ ($\subseteq\bigcup_i\alpha(b_i)
\subseteq\bigcup_i V_i\subseteq\bigcup_i A_i\subseteq\bigcup_i U_i$).}
\]
For any $x\in U_i$ and name $p\in\delta^{-1}\{x\}$ there exist $a,b,k,l$ such that
$p_k=a\sqsubset b\sqsubset q^{(i)}_l-1$.  But then $w:=p\restrict(k+1)$ and $p':=t(n)$ for $n\in\N$ minimal
such that $t(n)\restrict(k+1)=w$ have in particular $p_k'=p_k$, so $a,b,i,k,l$ are found by the algorithm,
say as $a_j,b_j,i_j,k,l$.  In particular, $x\in\alpha(p_k')=\alpha(a_j)$, but $x$ was arbitrary.
Since then $\bigcup_i V_i=\bigcup_i U_i$ we have shown
$\langle s,\langle\tilde{p},\tilde{q},r\rangle,e\rangle\in
[\delta_{\Sigma^0_1(X)}^{\ordomega},\delta_2,\id_{\Bairespc}]^{-1}L((U_i)_i)$ for any
$s\in(\delta_{\Sigma^0_1(X)}^{\ordomega})^{-1}\{(V_i)_i\}$, and such $s$ is easy to compute from the inputs.
This completes the proof.
\end{proof}

The result of Lemma \ref{lem:effparac} can be modified to compute a locally
finite open shrinking of the original cover, in fact we have the following
for certain computable topological spaces $(X,\mathcal{T}_X,\alpha)$
(compare \cite[Remark 5.1.7]{Engelking}), where $\delta_X$ is as before:
\begin{prop}\label{prn:lfshrink}
In a $\Tzero$ space $(X,\mathcal{T}_X)$ with basis numbering $\alpha$ and
c.e.~formal inclusion $\sqsubset$ having property (\ref{inclthree}),
suppose there exists computable $\hat{\alpha}:\N\to\Pi^0_1(X)$ such that
$\alpha(a)\subseteq\hat{\alpha}(a)$ and $(a\sqsubset b\implies
\hat{\alpha}(a)\subseteq\alpha(b))$ for all $a,b\in\N$.  Also assume there
exist $\tilde{a}_0\in\N$ such that $\alpha(\tilde{a}_0)
=\hat{\alpha}(\tilde{a}_0)=\emptyset$ and computable $t:\N\to\Bairespc$
such that $\img t\subseteq\dom\delta_X$ and $\cl{\delta_X(\img t)}=X$.
Then define \(L'':\Sigma^0_1(X)^{\N}\crsarr
\Sigma^0_1(X)^{\N}\times\textrm{LF}_{\Sigma^0_1(X)}\) by letting
$L''((U_i)_i)$ take the values
\[
\bigl\{((W_i)_i;(V_i)_i)\setconstr (\exists Y\in\Sigma^0_1(X))\left(
Y=\textstyle\bigcup_i V_i=\bigcup_i U_i \wedge
(\forall i)V_i\subseteq Y\setminus W_i\subseteq U_i \right)\bigr\}.
\]
This $L''$ is $(\delta_{\Sigma^0_1(X)}^{\ordomega};
[\delta_{\Sigma^0_1(X)}^{\ordomega},\delta_{\Sigma^0_1(X),\alpha}])$-computable.
\end{prop}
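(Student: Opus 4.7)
The plan is to reduce to Lemma~\ref{lem:effparac} by shrinking $(U_i)_i$ to a locally finite \emph{closed} refinement and then re-indexing by the function $e$ to produce both the open shrinking $(V_i)_i$ and the ``complements'' $(W_i)_i$ simultaneously. First I would apply Lemma~\ref{lem:effparac} to $(U_i)_i$ to compute $((V_j')_j; (A_j')_j; e)$ with $V_j' \subseteq A_j' \subseteq U_{e_j}$, $\bigcup_j V_j' = \bigcup_i U_i =: Y$, and $(A_j')_j \in \LF_{\Pi^0_1(X)}'$. The $\delta_2$-name of $(A_j')_j$ provides a sequence $(\alpha(q_k))_k$ with $\bigcup_k \alpha(q_k) = \bigcup_j A_j' = Y$, together with finite sets $S_k \subseteq \N$ such that $A_j' \cap \alpha(q_k) \neq \emptyset$ forces $j \in S_k$.

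Next I would aggregate: define $V_i := \bigcup\{V_j' : e_j = i\}$ and $A_i := \bigcup\{A_j' : e_j = i\}$, the latter being closed by local finiteness of $(A_j')_j$ and satisfying $V_i \subseteq A_i \subseteq U_i$. A $\delta_{\Sigma^0_1(X)}^{\ordomega}$-name of $(V_i)_i$ is computable by enumerating the r.e.~open names of the $V_j'$ with $e_j = i$. Any $\alpha(q_k)$ meets $V_i$ only if some $j$ with $e_j = i$ has $V_j' \cap \alpha(q_k) \neq \emptyset$, forcing $j \in S_k$; hence $\{i : V_i \cap \alpha(q_k) \neq \emptyset\} \subseteq \{e_j : j \in S_k\}$, a finite set computable from $S_k$ and $e$. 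Combined with $\bigcup_i V_i = Y = \bigcup_k \alpha(q_k)$, this yields a $\delta_{\Sigma^0_1(X),\alpha}$-name of $(V_i)_i$.

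The main obstacle is realizing $W_i$ as a $\Sigma^0_1(X)$-set. The natural choice is $W_i := Y \setminus A_i$, which satisfies $V_i \cap W_i \subseteq A_i \cap (X \setminus A_i) = \emptyset$ (so $V_i \subseteq Y \setminus W_i$) and $Y \setminus W_i = Y \cap A_i \subseteq U_i$. A priori, however, $X \setminus A_i$ is only a countable intersection of r.e.~open sets $(X \setminus A_j')_{j:\, e_j = i}$, which need not be r.e.~open. To circumvent this I would localize via the covering $(\alpha(q_k))_k$: since $A_j' \cap \alpha(q_k) = \emptyset$ for every $j \notin S_k$, one has
\[
\alpha(q_k) \setminus A_i = \alpha(q_k) \cap \bigcap_{j \in S_k,\; e_j = i}(X \setminus A_j'),
\]
a \emph{finite} intersection of r.e.~open sets, computable uniformly in $i, k$ from the $\delta_2$-name of $(A_j')_j$ and the sequence $e$. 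Taking the union over $k$ and using $Y = \bigcup_k \alpha(q_k)$ gives $W_i$ as an r.e.~open set uniformly in $i$, and the verification of the three required conditions is then immediate. Finally I would package $(V_i)_i$ and $(W_i)_i$ into the desired realizer of $L''$.
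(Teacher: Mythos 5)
Your proposal is correct and follows essentially the same route as the paper's proof: apply Lemma \ref{lem:effparac}, aggregate the output along the index map $e$, witness local finiteness of $(V_i)_i$ via the sets $\{e_j \setconstr j \in S_k\}$, and compute $W_i = Y\setminus A_i$ by localizing to the cover $(\alpha(q_k))_k$ so that only finitely many closed pieces matter on each $\alpha(q_k)$. The paper merely spells out, via the formal inclusion $\sqsubset$ and property (\ref{inclthree}), the enumeration algorithm behind your appeal to computability of finite intersections of r.e.~open sets.
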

\begin{proof}
First, for $(U_i)_i\in\Sigma^0_1(X)^{\N}$ and $((G_i)_i; (F_i)_i;e)\in L((U_i)_i)$ (with $L$
as in Lemma \ref{lem:effparac}) we take
$s(i):=e_i$ and $V_j:=\bigcup\{G_i\setconstr i\in\N\wedge s(i)=j\}$,
$A_j:=\bigcup\{F_i\setconstr i\in\N\wedge s(i)=j\}$.  Also let
$\langle\langle p^{(0)},\dots\rangle,q,r\rangle\in
\delta_{\Pi^0_1(X),\alpha}^{-1}\{(F_i)_i\}$.
With these definitions, any $i,j\in\N$ have
\begin{align*}
V_j\cap\alpha(q_i)\neq\emptyset & \iff (\exists t)(s(t)=j\wedge G_t\cap\alpha(q_i)\neq\emptyset)\iff
j\in s\left(\{t\setconstr G_t\cap\alpha(q_i)\neq\emptyset\}\right) \\
& \implies j\in s\left(\FS(\id_{\N})(r_i)\right),
\end{align*}
so $((V_i)_i\in\textrm{LF}_{\Sigma^0_1(X),\N}$ with witnesses given by $\langle\tilde{p}^{(0)},\dots\rangle$,
$q$ and $\tilde{r}$ where
\[
\tilde{r}_i\in\FS(\id_{\N})^{-1}\{s\left(\FS(\id_{\N})(r_i)\right)\}\onespace\text{ ($i\in\N$)}
\]
and $\tilde{p}^{(j)}$ is easily computed by dovetailing names of appropriate $G_i$ in the obvious way (dependent
on $e$).  We have that $\langle\langle \tilde{p}^{(0)},\dots\rangle,q,\tilde{r}\rangle \in
\delta_{\Sigma^0_1(X),\alpha}^{-1}\{(V_i)_i\}$ since $\bigcup_i\alpha(q_i)=\bigcup_i F_i=\bigcup_i G_i$
($=\bigcup_i U_i$) by definition of $L$.

Moreover, since each collection $\{F_i\setconstr i\in\N\wedge s(i)=j\}$ is locally finite in
$Y:=\bigcup_i\alpha(q_i)=\bigcup_i U_i$,
by \cite[Cor 1.1.12]{Engelking} we know
each $A_j$ is closed in $Y$, and we will compute $\delta_{\Sigma^0_1(X)}$-names of some $W_j$ with
$Y\setminus W_j=A_j$.  Namely, on input
$\langle\langle p^{(0)},\dots\rangle,q,r\rangle\in
\delta_{\Pi^0_1(X),\alpha}^{-1}\{(F_i)_i\}$,
$e\in\Bairespc$, $j\in\N$ we dovetail repeated output of $0$ with output of $b+1$ for each $b,k\in\N$ such that,
for $S_k:=\FS(\id_{\N})(r_k)$, we have
\[
b\sqsubset q_k\wedge(\forall i\in S_k\cap s^{-1}\{j\})(\exists l) \left(
 p^{(i)}_l\geq 1\wedge b\sqsubset p^{(i)}_l-1 \right).
\]
Any such $b$ and $k$ have
\[
\alpha(b)\subseteq \bigcap\{\alpha(q_k)\setminus F_i\setconstr i\in S_k\cap s^{-1}\{j\}\} =
\bigcap\{\alpha(q_k)\setminus F_i\setconstr i\in s^{-1}\{j\}\} = \alpha(q_k)\setminus A_j
\]
(since $\alpha(q_k)\cap F_i=\emptyset$ for all $i\in\N\setminus S_k$) while for any $x\in\alpha(q_k)\setminus A_j$ and
$i\in S_k\cap s^{-1}\{j\}$ there exists $l_i\in\N$ such that
$p^{(i)}_{l_i}\geq 1$ and $x\in\alpha(p^{(i)}_{l_i}-1)$.  Then by (\ref{inclthree}) of
Definition \ref{def:formincl} there exists $b\sqsubset q_k$ with $x\in\alpha(b)$ and
$(\forall i\in S_k\cap s^{-1}\{j\})(b\sqsubset p^{(i)}_{l_i}-1)$.

The described algorithm produces, in a multi-valued way, a $\delta_{\Sigma^0_1(X)}$-name of $Y\setminus A_j$,
which will serve as our $W_j$.  Clearly
\begin{gather*}
V_j\subseteq A_j=Y\setminus W_j\subseteq U_j \textrm{ for each $j$,}\\
\bigcup_i\alpha(q_i)=\bigcup_i G_i=\bigcup_i U_i=Y \textrm{ and }
\bigcup_i V_i\subseteq\bigcup_i U_i,
\end{gather*}
so consider $x\in\bigcup_j U_j$.  For some $i$, such $x$ has $x\in G_i\subseteq V_{s(i)}
\subseteq\bigcup_j V_j$ so indeed $\bigcup_j U_j=\bigcup_j V_j$.
This completes the description of a realizer of $L''$.
\end{proof}

\begin{cor}\label{cor:lfshrink}
Fix a computable metric space $(X,d,\nu)$, with standard ball numbering $\alpha$.  Then the operation $L':\Sigma^0_1(X)^{\N}\crsarr
\textrm{LF}_{\Sigma^0_1(X)}$ defined by
\[
L'((U_i)_i):=\{(V_i)_i\setconstr \textstyle\bigcup_i V_i=\bigcup_i U_i \wedge (\forall i)V_i\subseteq U_i\}
\]
is $(\delta_{\Sigma^0_1(X)}^{\ordomega};
\delta_{\Sigma^0_1(X),\alpha'})$-computable.
\end{cor}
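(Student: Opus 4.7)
The plan is to deduce this corollary from Proposition \ref{prn:lfshrink}, using the shifted ball numbering $\alpha'$ rather than $\alpha$ itself (this is necessary since the proposition requires a basis element for $\emptyset$, which $\alpha$ does not directly provide). I will then extract $L'$ from $L''$ by discarding the $(W_i)_i$ component.

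First I would verify that $(X,\mathcal{T}_X,\alpha')$ satisfies the hypotheses of Proposition \ref{prn:lfshrink}. The c.e.\ formal inclusion $\sqsubset$ of $\alpha$ with respect to itself (satisfying (\ref{inclthree})) that was recalled in Section \ref{sec:second} transfers to a c.e.\ formal inclusion $\sqsubset'$ of $\alpha'$ with respect to itself by the simple re-indexing $\langle a+1,b+1\rangle\in(\sqsubset')\iff\langle a,b\rangle\in(\sqsubset)$, together with the trivial relations involving index $0$; it continues to satisfy (\ref{inclthree}). The closed-ball numbering $\hat{\alpha}$ is computable as a map $\N\to\Pi^0_1(X)$, and satisfies $\alpha(a)\subseteq\hat{\alpha}(a)$ and $a\sqsubset b\Rightarrow\hat{\alpha}(a)\subseteq\alpha(b)$ as noted in Section \ref{sec:second}; shifting yields the analogous $\hat{\alpha'}$ with $\hat{\alpha'}(0)=\emptyset$. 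Taking $\tilde{a}_0:=0$ gives $\alpha'(\tilde{a}_0)=\hat{\alpha'}(\tilde{a}_0)=\emptyset$. Finally, the required $t:\N\to\Bairespc$ with $\img t\subseteq\dom\delta_X$ and $\cl{\delta_X(\img t)}=X$ is obtained by computing (from the total dense sequence $\nu$) for each $n\in\N$ a $\rho_\nu$-name of $\nu(n)$ and then converting to a $\delta_X$-name via the uniform equivalence of $\rho_\nu$ with the standard unpadded representation associated to $\alpha$ (equivalently, $\alpha'$).

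With these hypotheses met, Proposition \ref{prn:lfshrink} yields computability of $L'':\Sigma^0_1(X)^{\N}\crsarr\Sigma^0_1(X)^{\N}\times\LF_{\Sigma^0_1(X)}$, output in $[\delta_{\Sigma^0_1(X)}^{\ordomega},\delta_{\Sigma^0_1(X),\alpha'}]$. Given any $((W_i)_i;(V_i)_i)\in L''((U_i)_i)$, the defining condition $V_i\subseteq Y\setminus W_i\subseteq U_i$ together with $\bigcup_i V_i=\bigcup_i U_i$ immediately gives $(V_i)_i\in L'((U_i)_i)$. Hence the computable projection onto the second component of the output of a realizer of $L''$ is a $(\delta_{\Sigma^0_1(X)}^{\ordomega};\delta_{\Sigma^0_1(X),\alpha'})$-realizer of $L'$.

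There is no real obstacle here; the corollary is a straightforward specialisation of Proposition \ref{prn:lfshrink}. The only mildly delicate point is the need to pass to the shifted numbering $\alpha'$ in order to supply an index for the empty basic open set, which is why the target representation in the corollary is $\delta_{\Sigma^0_1(X),\alpha'}$ rather than $\delta_{\Sigma^0_1(X),\alpha}$ (note these are in any case equivalent, as remarked after the definition of $\delta_2$).
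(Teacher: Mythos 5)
Your proposal is correct and follows essentially the same route as the paper: one verifies the hypotheses of Proposition \ref{prn:lfshrink} for the shifted numbering $\alpha'$ (shifted formal inclusion $\sqsubset'$, shifted $\hat{\alpha}'$ with $\hat{\alpha}'(0)=\emptyset$, and $t$ obtained from the reduction $\rho_{\nu}\leq\delta_{(X,\mathcal{T}_X,\alpha')}$ applied to constant sequences), then discards the $(W_i)_i$ component. The only difference is that you make the final projection step explicit, which the paper leaves implicit.
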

\begin{proof}
Define $\hat{\alpha}':\N\to\Pi^0_1(X)$ by $\hat{\alpha}'(0):=\emptyset$ and $\hat{\alpha}'(a+1):=\hat{\alpha}(a)$
($a\in\N$).  Clearly $\hat{\alpha}'$ is $(\delta_{\N},\delta_{\Pi^0_1(X,\mathcal{T}_X,\alpha')})$-computable,
and satisfies $\alpha'(a)\subseteq\hat{\alpha}'(a)$.
We must define a c.e.~formal inclusion $\sqsubset'$ of $\alpha'$ with respect to $\alpha'$ with the properties
(\ref{inclthree}) and $(\forall a,b)(a\sqsubset' b\implies \hat{\alpha}'(a)\subseteq\alpha'(b))$, and verify
that the standard unpadded representation $\delta_{(X,\mathcal{T}_X,\alpha')}$ of $(X,\mathcal{T}_X,\alpha')$
has some computable $t:\N\to\Bairespc$ such that $\delta_{(X,\mathcal{T}_X,\alpha')}\compose t$ is total and
dense in $X$.  Well, if $\rho_{\nu}$ is the Cauchy representation,
we know
$\rho_{\nu}\leq\delta_{(X,\mathcal{T}_X,\alpha')}$, say via computable $F:\subseteq\Bairespc\to\Bairespc$
(in fact, these representations are equivalent), so $t:i\mapsto F(i^{\ordomega})$ will satisfy the requirement
on $t$.  On the other hand, $\sqsubset'$ defined by
$a\sqsubset' b:\iff a\geq 1\wedge b\geq 1\wedge a-1\sqsubset b-1$ satisfies the remaining requirements.
\end{proof}
One use of \cite[Lemma 5.1.6]{Engelking} (from whose proof Proposition \ref{prn:lfshrink} derives) is together
with Urysohn's lemma to produce partitions of unity; see \cite[Thm 5.1.9(i)$\implies$(ii)]{Engelking}.  In
zero-dimensional spaces we have an especially simple form of Urysohn's lemma, namely for any disjoint closed
$A,B\subseteq X$ there exists locally constant $g:X\to\{0,1\}\subseteq[0,1]$ with $g^{-1}\{0\}\supseteq A$ and
$g^{-1}\{1\}\supseteq B$.  Nevertheless, a nonarchimedean treatment of partitions of unity appears to be of
interest for generalizing the construction in \cite[Thm 7.6]{zerodpaper} of retractions onto nonempty closed
subsets of such spaces.  In Section \ref{sec:fourth} we will
discuss a result in this direction, using a more general Dugundji system for $A$ (and the field structure of
$K=\Omega_p$) to define the retraction in Theorem \ref{thm:b}.  This motivates the
(computable) general construction of Dugundji systems in this section (cf.~\cite[Lemma 1.2.1]{vanMill2} or
\cite[Hint to Prob 4.5.20(a)]{Engelking}).

Specifically,
we will generally consider tuples $((V_i)_i,(y_i)_i) \in \mathcal{D}(X)$;
for any computable metric space $(X,d,\nu_0)$ consider the operation
$Q:\subseteq\Q\times\mathcal{A}(X)\crsarr\mathcal{D}(X)$ defined by
\[
Q(\epsilon,A)=\{((V_i)_i,(y_i)_i)\setconstr ((V_i)_i,(y_i)_i)\in\pi^{-1}\{A\}
\text{ a Dugundji system with coefficient }1+\epsilon\},
\]
$\dom Q=\{(\epsilon,A)\setconstr \emptyset\neq A\neq X\wedge \epsilon>0\}$.

For convenience we will sometimes write
$Q(1,\cdot)=\pi^{-1}|_{\mathcal{A}'}:\subseteq\mathcal{A}(X)\crsarr\mathcal{D}(X)$.
\begin{lem}\label{lem:reorder}
For any computable metric space,
$\sigma:\subseteq\Sigma^0_1(X)^{\N}\crsarr\Bairespc$ defined by
$\dom\sigma=\{(U_i)_i\setconstr (\exists^{\infty}i)U_i\neq\emptyset\}$ and
$\sigma((U_i)_i):=\{r\setconstr \textrm{$r$ injective}\wedge \img r=\{i\setconstr U_i\neq\emptyset\}\}$
is computable.
\end{lem}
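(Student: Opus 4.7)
The plan is to exploit that nonemptiness of an open set is semidecidable in this setting. In a computable metric space with standard ball numbering $\alpha\langle a,r\rangle = \ballmetric{d}{\nu(a)}{\nu_{\Qplus}(r)}$, every ideal ball $\alpha(a)$ is nonempty, as it contains the center $\nu(a)$. Consequently, if $\langle q^{(0)},q^{(1)},\dots\rangle \in (\delta_{\Sigma^0_1(X)}^{\ordomega})^{-1}\{(U_i)_i\}$, then
\[
U_i \neq \emptyset \iff (\exists k)\, q^{(i)}_k \geq 1,
\]
so ``$U_i\neq\emptyset$'' is a c.e.\ predicate, uniformly in $i$.

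Given this, I would compute $\sigma$ by dovetailing as follows. Enumerate pairs $(i,k)\in\N^2$ via a fixed computable bijection, maintaining a growing finite set $S\subseteq\N$ of indices already committed. At each enumerated pair, test whether $q^{(i)}_k\geq 1$; if so and $i\notin S$, append $i$ to both the output tape and $S$. Call the resulting word $r\in\Bairespc$. Injectivity of $r$ holds by construction (each index is emitted at most once), and every emitted $i$ satisfies $q^{(i)}_k\geq 1$ for some $k$, so $\img r\subseteq \{i\setconstr U_i\neq\emptyset\}$. Conversely, for any $i$ with $U_i\neq\emptyset$, some pair $(i,k)$ with $q^{(i)}_k\geq 1$ is eventually enumerated, forcing $i$ into $S$ (if not already present); hence $\img r\supseteq \{i\setconstr U_i\neq\emptyset\}$.

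The hypothesis $(\exists^\infty i)U_i\neq\emptyset$ is used precisely to guarantee totality of the output: infinitely many distinct indices pass the test, so the output tape is filled at every position and $r\in\Bairespc$. There is no real obstacle here; the whole content is the observation that standard ideal balls are nonempty (so emptiness testing reduces to a single c.e.\ check per component name), after which the algorithm is a straightforward dovetailed enumeration with a ``do not repeat'' guard, yielding a computable realizer of $\sigma$.
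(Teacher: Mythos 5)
Your proof is correct and is essentially the paper's own argument: the paper's sketch is exactly this dovetailed enumeration over pairs $(\pr_1 n,\pr_2 n)$ with a ``do not repeat'' guard, and you merely spell out the two facts the sketch leaves implicit, namely that nonemptiness of $U_i$ reduces to the c.e.\ condition $(\exists k)q^{(i)}_k\geq 1$ (since ideal balls contain their centers) and that $(\exists^{\infty}i)U_i\neq\emptyset$ yields totality of the output.
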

\begin{proof}[Proof sketch]
Consider the following algorithm: on input $\langle p^{(0)},\dots\rangle$, at stage $0$ let $k:=0$, at stage
$n+1$ ($n\in\N$) check whether $p^{(\pr_1 n)}_{\pr_2 n}\geq 1$; if so and $\pr_1 n$ does not appear in the output so far
(i.e.~$\pr_1 n\not\in\{r_l\setconstr l<k\}$), output $\pr_1 n$ (i.e.~let $r_k:=\pr_1 n$ and increment $k$), otherwise do nothing.
\end{proof}

\begin{lem}\label{lem:ctslindelof}
For computable metric spaces $(X,d,\nu)$, $(Z,d',\nu')$ and Cauchy
representation $\delta_Z$ of
$Z$, the computable dense sequence $z_i:=\nu'(i)$ ($i\in\N$) satisfies
\[
\bigcup_{i\in\N}u(z_i) = \bigcup_{z\in Z}u(z)
\]
for any 
continuous $u:Z\to\Sigma^0_1(X)$.
In particular,
\begin{align*}
L' &:\Cont(Z,\Sigma^0_1(X))\to\Sigma^0_1(X)^{\N},
u\mapsto (u(z_i))_{i\in\N},\\
\union &:\Cont(Z,\Sigma^0_1(X))\to\Sigma^0_1(X),
u\mapsto \bigcup_{z\in Z}u(z)
\end{align*}
are
respectively
computable.
\end{lem}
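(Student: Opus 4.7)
The plan is to first establish the topological equality $\bigcup_i u(z_i) = \bigcup_{z\in Z} u(z)$, because once this is in hand the computability of $L'$ and $\union$ reduces to standard manipulations with the given representations. The inclusion $\subseteq$ is trivial; for $\supseteq$, fix $x\in u(z)$ for some $z\in Z$. Since $u(z)\in\Sigma^0_1(X)$ is open and the standard ball numbering $\alpha$ is a basis, pick $a\in\N$ with $x\in\alpha(a)\subseteq u(z)$. The content of the lemma then lies in showing that
\[
W := \{z'\in Z \setconstr \alpha(a)\subseteq u(z')\}
\]
is open in $Z$ and thus intersects the dense set $\{z_i\setconstr i\in\N\}$, giving some $z_i\in W$ with $x\in\alpha(a)\subseteq u(z_i)$.

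To show $W$ is open, fix a continuous realizer $F:\subseteq\Bairespc\to\Bairespc$ of $u$ (available since $u\in\Cont(Z,\Sigma^0_1(X))$). For any $z'\in W$, choose $p'\in\delta_Z^{-1}\{z'\}$; then $F(p')\in\delta_{\Sigma^0_1(X)}^{-1}\{u(z')\}$ and $\alpha(a)\subseteq u(z')$ forces some position $j$ with $F(p')_j = a+1$. Continuity of $F$ at $p'$ gives a natural number $k$ such that every $p''\in\Bairespc$ with $p''\restrict k = p'\restrict k$ (and in the domain of $F$) also satisfies $F(p'')_j = a+1$. Using that $\delta_Z$ is the Cauchy representation, a prefix of length at least $k$ of $p'$ confines $z'$ to a metric ball (of radius at most $2^{-k+1}$ or similar) around $\nu'(p'_{k-1})$, and moreover every $z''$ in a slightly smaller ball admits a Cauchy name extending $p'\restrict k$; such $z''$ must then lie in $W$. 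Hence $W$ is an open neighbourhood of $z'$, and density of $(z_i)$ finishes the first claim.

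For $L'$, observe $z_i = \nu'(i)$ is a $\delta_Z$-computable point uniformly in $i$ (from the computable metric space structure on $Z$, a Cauchy name is obtained by outputting the constant index $i$, or by standard padding), so composing this with the evaluation operator $\ev'$ from Lemma~\ref{lem:eval} yields a computable $\N\times\Cont(Z,\Sigma^0_1(X))\to\Sigma^0_1(X)$, $(i,u)\mapsto u(z_i)$. Type conversion (via smn for $\eta$ and the definition of the sequence representation) then gives a realizer for $L'$. For $\union$, by the equality proved above we have $\union(u) = \bigcup_i u(z_i)$, so compose $L'$ with the evidently computable countable-union operation $\Sigma^0_1(X)^{\N}\to\Sigma^0_1(X)$, which simply dovetails the enumerations given by the $\delta_{\Sigma^0_1(X)}$-names. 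The main obstacle, and the only nontrivial content, is the openness-of-$W$ argument in the previous paragraph, since it is where continuity of $u$ and admissibility of the Cauchy representation have to be combined to bridge from a syntactic statement about $F$ to a topological neighbourhood in $Z$.
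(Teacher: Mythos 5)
The second half of your argument (uniform computability of $i\mapsto z_i=\nu'(i)$, evaluation plus type conversion for $L'$, dovetailing unions for $\union$) is fine, but the proof of the identity $\bigcup_i u(z_i)=\bigcup_z u(z)$ has a genuine gap at the step ``$\alpha(a)\subseteq u(z')$ forces some position $j$ with $F(p')_j=a+1$''. A $\delta_{\Sigma^0_1(X)}$-name of $u(z')$ is only required to enumerate \emph{some} ideal balls whose union is $u(z')$; it need not mention every ideal ball contained in $u(z')$, so nothing forces $a+1$ to occur in $F(p')$. Consequently $W=\{z'\setconstr\alpha(a)\subseteq u(z')\}$ need not be open, and the density argument collapses. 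Concretely, take $X=Z=\R$, let $\nu'$ enumerate $\Q\setminus\{0\}$, and set $u(z):=(0,1)\setminus\{\min(\absval{z},1/2)\}$ for $z\neq 0$, $u(0):=(0,1)$; this $u$ is total and has a computable realizer (enumerate the rational intervals $(c,d)\subseteq(0,1)$ with $d<\min(\absval{z},1/2)$ or $c>\min(\absval{z},1/2)$). For the ideal ball $\alpha(a)=(0,1)\subseteq u(0)$ one gets $W=\{0\}$, which is not open and misses the dense sequence entirely: no $z_i$ has $\alpha(a)\subseteq u(z_i)$. So the problem is not only the openness claim; the intermediate goal of capturing an \emph{arbitrarily chosen} ball $x\in\alpha(a)\subseteq u(z)$ by some dense point can be outright unattainable (and, by the very prefix argument you invoke, no continuous realizer of this $u$ can list $(0,1)$ on any name of $0$). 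Note the conclusion of the lemma still holds in this example, via smaller balls.

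The repair — and this is where admissibility is really used, as the paper's parenthetical remark indicates (the proof itself is cited from \cite{zerodpaper}) — is to choose the ball from a name rather than arbitrarily: fix a continuous realizer $F$ of $u$ and a name $p\in\delta_Z^{-1}\{z\}$, and pick $j$ with $F(p)_j=a+1$ and $x\in\alpha(a)$; such $j$ exists because $F(p)$ names $u(z)\ni x$. Continuity of $F$ at $p$ gives a finite prefix $p\restrict k$ such that every name in $\dom F$ extending it also lists $a+1$, hence maps under $u$ to a superset of $\alpha(a)$. It then remains to produce a dense point with a Cauchy name extending this prefix: because the Cauchy conditions $d(\nu'(p_m),\nu'(p_n))<2^{-\min\{m,n\}}$ are strict, one may lengthen the prefix by one entry and choose $i$ with $\nu'(i)$ close enough to $\nu'(p_k)$ that $p\restrict(k+1)$ followed by $i^{\ordomega}$ is again a valid $\delta_Z$-name, of $z_i$. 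Then $\alpha(a)\subseteq u(z_i)$, so $x\in\bigcup_i u(z_i)$, which is the desired inclusion; your treatment of $L'$ and $\union$ then goes through unchanged.
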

Lemma \ref{lem:ctslindelof} has been proved in \cite{zerodpaper} (note
admissibility of $\delta_Z$ and $\delta_{\Sigma^0_1(X)}$ --- see
\cite[p 62]{BrattkaPresser}, \cite{SchroderExtAdm} --- implies any
$u\in\Cont(Z,\Sigma^0_1(X))$ is
$(\delta_Z,\delta_{\Sigma^0_1(X)})$-continuous).  It
plays a similar role to the Lindel{\"o}f property of separable metric
spaces, but only for
continuous indexed covers.  The operation of continuous intersection
for closed subsets, dual to $\union$, has been considered at least in
\cite{BrattkaGherardiBorelCpxty}.
For the next result, the \emph{cylindrification} of a dense sequence
$\nu:\N\to X$ is the sequence $\lambda:\N\to X$ defined by $\lambda\langle k,l\rangle:=\nu(k)$.
\begin{prop}\label{prn:dugundji}
In a computable metric space, with $\alpha'$ as above,
$Q$ is $(\nu_{\Q},\deltar\sqcap\delta_{\textrm{dist}}^{>};\delta_3)$-computable.
\end{prop}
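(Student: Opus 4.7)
The plan is to effectivise the classical construction (\cite[Hint to Prob 4.5.20(a)]{Engelking}): cover $X\setminus A$ by open balls whose radii are small relative to the distance to $A$, extract a locally finite shrinking via Corollary \ref{cor:lfshrink}, and for each piece choose a point of $A$ nearly realising the distance from the ball's centre to $A$. The combined input $\deltar\sqcap\delta_{\textrm{dist}}^{>}$ for $A$ supplies both upper approximations (from the dense sequence of points of $A$ encoded by $\deltar$) and lower approximations (from $\delta_{\textrm{dist}}^{>}$) to $d_A$, which together give the effectivity we need.

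First, from $\epsilon\in\Q\cap(0,1]$, fix rationals $f_0\in(0,\epsilon/(2+\epsilon))$ and $\delta_0>0$ satisfying $f_0(2+\epsilon)+\delta_0\leq\epsilon$. Using $\delta_{\textrm{dist}}^{>}$, enumerate pairs $\langle k,m\rangle\in\N^2$ such that there exists a rational $s$ with $\nu_{\Qplus}(m)<f_0 s$ and $s<d_A(\nu(k))$, listing the associated open balls $U_i:=\ballmetric{d}{\nu(k_i)}{\nu_{\Qplus}(m_i)}$. Any such $U_i$ satisfies $\nu_{\Qplus}(m_i)<f_0 d_A(\nu(k_i))$, so in particular $U_i\subseteq X\setminus A$; conversely, for each $x\in X\setminus A$ one can choose $\nu(k)$ with $d(x,\nu(k))<\tfrac{1}{4}f_0 d_A(x)$ and then a suitable rational radius to place $x$ into some $U_i$. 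Hence $\bigcup_i U_i=X\setminus A$.

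Apply Corollary \ref{cor:lfshrink} to $(U_i)_i$ to obtain a $\delta_{\Sigma^0_1(X),\alpha'}$-name of a locally finite family $(V_i)_i$ with $V_i\subseteq U_i$ and $\bigcup_i V_i=X\setminus A$. For each $i$, using the points $(a_j)_j\subseteq A$ supplied by the $\deltar$-component, search for a pair $(j,s)\in\N\times\Qplus$ with $s<d_A(\nu(k_i))$ (enumerable via $\delta_{\textrm{dist}}^{>}$) and $d(\nu(k_i),a_j)<(1+\delta_0)s$; density of $(a_j)_j$ in $A$ together with $\delta_0>0$ ensures such $(j,s)$ exists, and we let $y_i:=a_j\in A$ for the first successful pair.

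It remains to verify Definition \ref{def:dugundji}. Condition (1) holds by construction ($\bigcup_i V_i=X\setminus A$ and $y_i\in A$). For $x\in V_i\subseteq U_i$, the triangle inequality gives
\[
d(x,y_i)<\nu_{\Qplus}(m_i)+(1+\delta_0)d_A(\nu(k_i))<(f_0+1+\delta_0)d_A(\nu(k_i)),
\]
and $d_A(x)>d_A(\nu(k_i))(1-f_0)$ combined with the inequality $f_0(2+\epsilon)+\delta_0\leq\epsilon$ yields $d(x,y_i)\leq(1+\epsilon)d_A(x)$. Condition (3) follows from $\diam V_i\leq 2\nu_{\Qplus}(m_i)$ and $d(V_i,A)\geq d_A(\nu(k_i))(1-f_0)$, which bound $\diam V_i/d(V_i,A)$ by a constant. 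Assembling the $\delta_2$-name of $(V_i)_i$ from Corollary \ref{cor:lfshrink} with the $\delta_X^{\ordomega}$-name of $(y_i)_i$ produced above gives the required $\delta_3$-name. The main delicate point is the simultaneous calibration of $f_0$ and $\delta_0$ against the coefficient $1+\epsilon$; the rest is routine bookkeeping of the representations.
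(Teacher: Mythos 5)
Your argument is correct, and its skeleton is the paper's: cover $X\setminus A$ by balls centred at points of the dense sequence whose radius is a fixed fraction $f_0<\frac{\epsilon}{2+\epsilon}$ of the distance to $A$, shrink to a locally finite family via Corollary \ref{cor:lfshrink}, choose $y_i\in A$ nearly realizing $d_A(\nu(k_i))$ from the $\deltar$-component, and verify the axioms with exactly the calibration $(2+\epsilon)f_0+\delta_0\leq\epsilon$ (the paper's requirement $t^{+}(i)\leq(1+\epsilon)d_A(x_i)-(2+\epsilon)f(x_i)d_A(x_i)$, specialized to constant $f$). The genuine difference is how the initial cover is produced and named: the paper works with $b(x)=\ball{x}{f(x)d_A(x)}$ for a lower semicontinuous $f$ (in the end the constant $\frac{\epsilon}{2(2+\epsilon)}$) and obtains a name of the cover through Lemma \ref{lem:reorder}, Lemma \ref{lem:ctslindelof} and Lemma \ref{lem:bnucomp}, then factors the whole computation through the auxiliary space $Z_0$ (operations $Q'$ and $C$, using the smn theorem); you bypass all of this by enumerating ideal balls $\ball{\nu(k)}{\nu_{\Qplus}(m)}$ whose rational radii are certified to satisfy $\nu_{\Qplus}(m)<f_0 s$ for some rational $s<d_A(\nu(k))$ read off the $\delta_{\textrm{dist}}^{>}$-name, your $\tfrac{1}{4}f_0$ margin then giving $\bigcup_i U_i=X\setminus A$. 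This is more elementary and entirely adequate for the statement; what it forgoes is only the extra generality the paper deliberately builds in (the construction parametrized by an arbitrary lower semicontinuous $f$ bounded by $\frac{\epsilon}{2+\epsilon}$ along $(\img\nu)\setminus A$), which the paper exploits in later remarks. Two small points to tidy: $\dom Q$ allows any rational $\epsilon>0$, not only $\epsilon\leq 1$ (your construction works verbatim, or pass to $\min\{\epsilon,1\}$), and since $A\neq X$ your enumeration of valid pairs is automatically infinite, so $(U_i)_i$ is a genuine $\ordomega$-indexed sequence as Corollary \ref{cor:lfshrink} expects.
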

\begin{proof}
First consider computable metric space $(X,d,\nu_0)$, $A\in\mathcal{A}(X)\setminus\{\emptyset,X\}$ and the
cylindrification $\nu$ of $\nu_0$.  We want some choice of
$f\in\Cont_{\Sigma^0_1(X)}(X,\R_{<})$ (depending on $A$) with
\begin{equation}
X\setminus A\subseteq\dom f,\onespace 0<f(x)\leq 1\onespace
\text{ for all }
x\in X\setminus A,
\end{equation}
and, denoting
\[
b:X\to\Sigma^0_1(X),x\mapsto\begin{cases}
\ball{x}{f(x).d_A(x)}, &\mbox{\textrm{ if $x\in X\setminus A$,}}\\
\emptyset, &\mbox{\textrm{ if $x\in A$}}\end{cases}
\]
and $x_i:=\delta_X(e_i^{\ordomega})=\nu(e_i)$ ($\in X\setminus A$, $i\in\N$)
for some fixed $e\in\sigma(b\compose\nu)$ where $\sigma$ is
as in Lemma \ref{lem:reorder}
(and $b\compose\nu\in\dom\sigma$ by $A\neq X$ and
definition of $\nu$), we also would like $f$ and $t^{+}:\N\to\R$ to satisfy
\begin{equation}\label{eq:teq}
d_A(x_i)<t^{+}(i)\leq
(1+\epsilon)d_A(x_i)-(2+\epsilon)f(x_i)d_A(x_i)=:h(i)
\end{equation}
for all $i\in\N$.  Note the requirement
$d_A(x_i)<h(i)$ implies $f(x_i)<\frac{\epsilon}{2+\epsilon}$ for all
$i\in\N$ (which is stronger than the uniform bound $f(x)\leq 1$ when
$x\in\{x_i\setconstr i\in\N\}=(\img\nu)\setminus A$).

With these definitions, we first want to apply
Lemma \ref{lem:ctslindelof} with $\nu$ as
the dense sequence in the source space, $Z:=X$.  Let
$U_i:=b(x_i)=(b\compose\nu)(e_i)$ ($i\in\N$), $(W_i)_i\in L'((U_i)_i)$ for $L'$ as in
Corollary \ref{cor:lfshrink}, and take
$\langle p,q,r\rangle\in\delta_{\Sigma^0_1(X),\alpha'}^{-1}\{(W_i)_i\}$.  Then
\[
(\forall i)W_i \subseteq U_i=\ball{x_i}{f(x_i).d_A(x_i)}\subseteq X\setminus A\wedge x_i\in X\setminus A,
\]
and
\[
\bigcup_i W_i=\bigcup_i\alpha'(q_i)=\bigcup_i U_i=\bigcup_i (b\compose\nu)(i)=\bigcup_{x\in X}b(x)=X\setminus A,
\]
where we used the lower semi-continuity of $f|_{X\setminus A}$ to guarantee that $b$ is continuous
(see Remark \ref{rem:szoneadm} below).

Next, pick $y_i\in A$ such that $d(x_i,y_i)<t^{+}(i)$;
this can be done computably in $i$ if we assume information on
$t^{+}\in\R_{<}^{\N}$
is computable from the inputs (however we defer much further
discussion of computation
until the main part of the construction is given).  We claim
$((W_i)_i;(y_i)_i)\in Q(\epsilon,A)$
(indeed, $(y_i)_i\subseteq A$ regardless of properties of
$\{i\setconstr W_i\neq\emptyset\}$).

Firstly, each $i\in\N$ and any $z\in A$ have
\[
 d(z,x_i)\leq d(x_i,x)+d(x,z) \text{ for any $x\in U_i=\ball{x_i}{f(x_i).d_A(x_i)}$}
\]
hence
\begin{align*}
 \inf_{z\in A}d(z,x_i) &\leq \inf_{z\in A}\left( d(x_i,x)+d(x,z) \right)=d(x_i,x)+d_A(x)\\
\implies d_A(x_i) &\leq\inf_{x\in U_i}\left( d(x_i,x)+d_A(x) \right) \leq
\sup_{x\in U_i}d(x_i,x)+\inf_{x\in U_i}d_A(x)\\
 & \leq f(x_i).d_A(x_i)+d(A,U_i).
\end{align*}
We want also to assume $f$ is such that a
$\delta_{\Sigma^0_1(X)}^{\ordomega}$-name of $b\compose\nu$ is
computationally available from the inputs (which include a $\deltar\sqcap\delta_{\textrm{dist}}^{>}$-name of $A$);
for this we refer to Lemma \ref{lem:bnucomp} below.  For each $i$,
we observe $x\in U_i$ implies
\begin{align*}
d(x,y_i)&\leq d(x,x_i)+d(x_i,y_i)<f(x_i).d_A(x_i)+t^{+}(i) \\
&\leq f(x_i).d_A(x_i)+h(i)
\leq (1+\epsilon)(d_A(x_i)-f(x_i).d_A(x_i)) \\
&\leq (1+\epsilon)d_A(x).
\end{align*}
Finally, for any $(n_i)_i\subseteq\N$ such that $\lim_{i\to\infty}d(W_{n_i},A)=0$ and any choice of
$p_i\in U_{n_i}$ ($i\in\N$) with $\lim_{i\to\infty}d_A(p_i)=0$, note
\((\forall x\in U_i) (1-f(x_i))d_A(x_i)\leq d(A,U_i)\leq d_A(x)\)
implies
\[
\lim_{i\to\infty}d_A(x_{n_i})\leq\limsup_{i\to\infty}(1-f(x_{n_i}))^{-1}.d_A(p_i) \leq
(1-\frac{\epsilon}{2+\epsilon})^{-1}.\lim_{i\to\infty}d_A(p_i)=(1+\frac{\epsilon}{2}).0=0,
\]
so that $\diam U_{n_i}\leq 2 f(x_{n_i})d_A(x_{n_i})<2 d_A(x_{n_i})\to 0$ as $i\to\infty$.
This establishes that $((W_i)_i,(y_i)_i)$
is a Dugundji system for $A$ with coefficient $1+\epsilon$.

To summarize the above conditions sufficient to find a $\delta_3$-name of $((W_i)_i,(y_i)_i)$ computably in $A$
and $\epsilon\in\Qplus$, we will split the problem into two parts: computing (informally speaking)
$(\epsilon,A)\mapsto f$ and $(\epsilon,f)\mapsto ((W_i)_i,(y_i)_i)$.  To formalize this, consider
\[
Z_0:=\{(f,r)\in\Cont_{\Sigma^0_1(X)}(X,\R_{<})\times\Q\setconstr
(\forall i\in\nu^{-1}\dom f)(f\compose\nu)(i)<r\}
\]
and its representation $\delta_7$ defined by
\begin{align*}
a.\langle p,q\rangle\in\delta_7^{-1}\{(f,r)\} &:\iff p\in [\delta_X\to\rho_{<}]_{\Sigma^0_1(X)}^{-1}\{f\}
\wedge q\in(\deltar\sqcap\delta_{\textrm{dist}}^{>})^{-1}\{X\setminus\dom f\}\wedge\\
& a\in\nu_{\Q}^{-1}\{r\}
\wedge (\forall i\in\nu^{-1}\dom f)(f\compose\nu)(i)<r.
\end{align*}
Then consider the operation $Q':\subseteq \Q\times Z_0\crsarr\mathcal{D}(X)$ defined by
\begin{align*}
\dom Q'&=\{(\epsilon;f,\delta)\in \Q\times Z_0\setconstr \epsilon>0\wedge\emptyset\neq\dom f\neq X\wedge
\img f\subseteq(0,1]\wedge \delta=\frac{\epsilon}{2+\epsilon}\} \text{ and }\\
Q'(\epsilon;f,\delta)&:=\{D=((W_i)_i,(y_i)_i)\setconstr
D\text{ is a D.~system for $X\setminus\dom f$ with coefficient }1+\epsilon\}.
\end{align*}
This $Q'$ is $(\nu_{\Q},\delta_7; \delta_3)$-computable, namely since
we can (from the inputs) compute $t^{+}:\N\to\R_{<}$ defined by
$t^{+}(j):=\tilde{r}.d_A(x_j)$,
$\tilde{r}:=1+\epsilon-(2+\epsilon)\frac{\delta}{2} \in
\left(1,1+\epsilon-(2+\epsilon)f(x_j)\right]$ ($j\in\N$), and this
satisfies the assumption (\ref{eq:teq}) on $t^{+}$, so the
construction works uniformly in the listed input data.  Here we
compute $e\in\sigma(b\compose\nu)$ using fixed computable realizers of
$f\mapsto b\compose\nu$ and $\sigma$, and the above name(s) of $f$
(and $A:=X\setminus\dom f$).

%
On the other hand (having thus parametrized the construction of Dugundji systems by $Z_0$ according to
the operation $Q'$), it remains to
prove
\[
C:\subseteq\Q\times\mathcal{A}(X)\crsarr\Q\times Z_0,(\epsilon,A)\mapsto \{(\epsilon;f,\delta)\setconstr
\dom f=X\setminus A\wedge\img f\subseteq(0,1]\wedge
\delta=\frac{\epsilon}{2+\epsilon}\}
\]
($\dom C=\{(\epsilon,A)\setconstr \emptyset\neq A\neq X\wedge\epsilon>0\}$) is
$([\nu_{\Q},\deltar\sqcap\delta_{\textrm{dist}}^{>}];[\nu_{\Q},\delta_7])$-computable.  But this follows easily
by considering (e.g.)~the constant function $f:\subseteq X\to\R,x
\mapsto\frac{\delta}{2}$
($\dom f=X\setminus A$): recall the smn theorem guarantees, if total map
$\Q\times X\to\R,(\delta,x)\mapsto\frac{\delta}{2}$ has computable $([\delta_{\Q},\delta_X];\rho_{<})$-realizer
$H$ where $\delta_{\Q}^{-1}\{r\}=\{a.0^{\ordomega}\setconstr a\in\nu_{\Q}^{-1}\{r\}\}$, that there exists
computable total $S:\Bairespc\to\Bairespc$ such that, for all $p,q\in\Bairespc$,
\[
\left(
 q\in\dom\eta_{S(p)}\iff\langle p,q\rangle\in \dom H\right)\wedge\left(
 \langle p,q\rangle\in\dom H\implies H\langle p,q\rangle=\eta_{S(p)}(q)\right).
\]
In particular any
$q\in\dom\delta_X$ and $c\in\dom\nu_{\Q}$ have
$\langle c.0^{\ordomega},q\rangle\in\dom H$, hence
$H\langle c.0^{\ordomega},q\rangle=\eta_{S(c.0^{\ordomega})}(q)$.
Then $C$ has as an explicit realizer
$\subseteq\Bairespc\to\Bairespc,a.p\mapsto
a.c.\langle\langle S(c.0^{\ordomega}),G(p)\rangle,p\rangle$ where
we choose $c\in\nu_{\Q}^{-1}\{\frac{\nu_{\Q}(a)}{2+\nu_{\Q}(a)}\}$ computably in $a$ and $G$ is a fixed witness
of $\deltar\sqcap\delta_{\textrm{dist}}^{>}\leq\delta_{\Pi^0_1(X)}$,
cf.~\cite[Thms 3.11(1), 3.10]{BrattkaPresser}.
\end{proof}
\begin{rem}
Note the specialization of our proof to the case $\epsilon=1$ still shows
that
$((W_i)_i,(y_i)_i)\in\pi^{-1}\{A\}$.
In this case, one can take $\dom f=X\setminus A$, $f(x):=2^{-2}$, $t(i):=(d_A(x_i),\frac{5}{4}d_A(x_i))$
($i\in\N$); to verify then that $((W_i)_i,(y_i)_i)\in\pi^{-1}\{A\}$ (with witnessing information
$\langle p,q,r\rangle$), one notes
\[
\bigcup_i W_i=\bigcup_i\alpha'(q_i)=\bigcup_i U_i=\bigcup_i (b\compose\nu)(i)=\bigcup_{x\in X}b(x)=X\setminus A
\]
and completes the proof using
\[
(W_i\neq\emptyset\implies d(x_i,y_i)\leq\frac{5}{4}d_A(x_i)) \text{ and }
W_i\subseteq U_i=\ball{x_i}{2^{-2}d_A(x_i)} \text{ ($i\in\N$); }
\]
this verification can
be found in \cite{vanMill2}, and we record it here for formal completeness.
First we claim
\begin{equation}\label{eq:dugundji}
(\forall i\in\N)(\forall x\in W_i)d(x,y_i)\leq\frac{3}{2}d_A(x_i)\leq 2 d_A(x).
\end{equation}
We have
\[
d(x,y_i)\leq d(x,x_i)+d(x_i,y_i)\leq 2^{-2}d_A(x_i)+\frac{5}{4}d_A(x_i) = \frac{3}{2}d_A(x_i).
\]
Also
\[
d_A(x_i)\leq d(x_i,x)+d_A(x)\leq 2^{-2}d_A(x_i)+d_A(x)\implies\frac{3}{4}d_A(x_i)\leq d_A(x).
\]

Now if $(n_i)_i\subseteq\N$ with $\lim_{i\to\infty}d(W_{n_i},A)=0$, pick $p_i\in W_{n_i}$ such that
$\lim_{i\to\infty}d_A(p_i)=0$.  By (\ref{eq:dugundji}),
\begin{gather*}
\lim_{i\to\infty}d_A(x_{n_i}) \leq \lim_{i\to\infty}\frac{4}{3}d_A(p_i)=0, \text{ but }\\
W_{n_i}\subseteq U_{n_i}=\ball{x_{n_i}}{2^{-2}d_A(x_{n_i})}
\end{gather*}
so $\lim_{i\to\infty}\diam W_{n_i}=0$.  This completes the proof.
\end{rem}
\begin{rem}\label{rem:szoneadm}
If on the set $\mathcal{O}(X)$ of open subsets of $X$ we introduce the topology $\mathcal{T}^{\mathcal{O}}_{<}$
induced by the subbase
\[
\{\{U\in\mathcal{O}(X)\setconstr U\supseteq K\}\setconstr K\in\mathcal{K}(X)\},
\]
and assume the represented space
$\Sigma^0_1(X):=(\mathcal{O}(X),\delta_{\Sigma^0_1(X)})$ is endowed with the same topology, we can observe
$b:X\to\Sigma^0_1(X)$ in the above argument is continuous iff
$\{x\in X\setconstr b(x)\supseteq K\}$ is open for each $K\in\mathcal{K}(X)$, iff
$\{x\in X\setconstr\ball{x}{f(x)d_A(x)}\supseteq K\wedge x\in X\setminus A\}$ is open
for each $K\in\mathcal{K}^{*}(X)$, iff
$\{x\in X\setconstr x\in X\setminus A\wedge\phi_{K,A}(x)<f(x)\}$ is open for each $K\in\mathcal{K}^{*}(X)$ where
$\phi_{K,A}:\subseteq X\to\R,x\mapsto \frac{\sup_{y\in K}d(x,y)}{d_A(x)}$ ($\dom\phi_{K,A}=X\setminus A$),
and where we used that
\begin{align*}
\ball{x}{f(x)d_A(x)}\supseteq K &\iff (\forall y\in K)d(x,y)<f(x)d_A(x) \iff\phi_{K,A}(x)<f(x)
\end{align*}
provided the property $x\in X\setminus A$ is known.
But in general if $g,h:\subseteq X\to\R$ with $\dom g\cap\dom h
\supseteq B$
then $\{x\in B\setconstr g(x)<h(x)\}=\bigcup_{c\in\Q}g^{-1}(-\infty,c)\cap h^{-1}(c,\infty)\cap B$, so in the
present situation (with $\phi_{K,A}$ continuous and $f$ lower semi-continuous) we know
$\{x\in X\setconstr x\in X\setminus A\wedge\phi_{K,A}(x)<f(x)\}$ is open.  Since it is known that
$\delta_{\Sigma^0_1(X)}$ is admissible (see \cite[p 62]{BrattkaPresser}, \cite{SchroderExtAdm}), we have that
$b$ is $(\delta_X,\delta_{\Sigma^0_1(X)})$-continuous.  On the other hand, we claim
\begin{lem}\label{lem:bnucomp}
If $(X,d,\nu_0)$ is a computable metric space and $\nu$ the cylindrification of $\nu_0$,
$g:\subseteq\mathcal{A}(X)\times\Cont_{\Sigma^0_1(X)}(X,\R_{<})\to\Sigma^0_1(X)^{\N},
(A,f)\mapsto b\compose\nu$ is $(\delta_{\textrm{dist}}^{>},
[\delta_X\to\rho_{<}]_{\Sigma^0_1(X)};
\delta_{\Sigma^0_1(X)}^{\ordomega})$-computable, where
\[
b:X\to\Sigma^0_1(X),x\mapsto\begin{cases}\ball{x}{f(x)d_A(x)},\text{ if $x\in X\setminus A$,}\\
\emptyset,\text{ if $x\in A$,}\end{cases}
\]
and $\dom g=\{(A,f)\setconstr
\emptyset\neq A\neq X, \dom f\supseteq X\setminus A, f(X\setminus A)\subseteq [0,1]\}$.
\end{lem}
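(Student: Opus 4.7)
My plan is to describe, uniformly in $i\in\N$, a dovetailed algorithm enumerating pairs $\langle k,l\rangle$ for which $\alpha\langle k,l\rangle = \ballmetric{d}{\nu(k)}{\nu_{\Qplus}(l)}\subseteq b(\nu(i))$, thereby producing a $\delta_{\Sigma^0_1(X)}$-name of $b(\nu(i))$. The sufficient condition I will check is: enumerate $\langle k,l\rangle$ whenever some positive rational $q$ has been found with $d(\nu(k),\nu(i))+\nu_{\Qplus}(l)<q$ and $q<f(\nu(i))\cdot d_A(\nu(i))$. The first inequality is r.e.~because $d\compose(\nu\times\nu)$ is computable, and for the second I will take $q$ of the form $q=ab$ with $0<a<d_A(\nu(i))$ and $0<b<f(\nu(i))$: such $a$ can be read off the $\overline{\rho_<}$-name of $d_A(\nu(i))$ obtained by applying the $\delta_{\textrm{dist}}^>$-realizer (from the input) to a fixed $\delta_X$-name of $\nu(i)$, and such $b$ can be read off the output of $\eta_p$ on the same $\delta_X$-name, where $p$ is the function-realizer component of the input name of $f$. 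Strict monotonicity of multiplication on $(0,\infty)$ then yields $ab<f(\nu(i))d_A(\nu(i))$, validating $q$.

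For correctness I split into cases. If $\nu(i)\in X\setminus A$, then $\nu(i)\in\dom f$, so $\eta_p$ genuinely realizes $f$ at $\nu(i)$ and both factors are correctly approximated from below by positive rationals. Completeness of the $\langle k,l\rangle$-enumeration at such $i$ follows from density of $\img\nu$ in $X$ and of $\Qplus$ in $\R_{>0}$: given $y\in b(\nu(i))$, pick slack $\epsilon>0$ with $d(y,\nu(i))+\epsilon<f(\nu(i))d_A(\nu(i))$, then choose $\nu(k)$ with $d(y,\nu(k))$ small and $\nu_{\Qplus}(l)$ satisfying $d(y,\nu(k))<\nu_{\Qplus}(l)$ and $d(\nu(k),\nu(i))+\nu_{\Qplus}(l)<d(\nu(i),y)+\epsilon$, so that $y\in\alpha\langle k,l\rangle$ and the required $q$ can be exhibited. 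If instead $\nu(i)\in A$, then $d_A(\nu(i))=0$, so no positive rational is ever enumerated by the $\overline{\rho_<}$-name of $d_A(\nu(i))$; hence no product $q=ab$ with $a,b>0$ is ever formed, the test $d(\nu(k),\nu(i))+\nu_{\Qplus}(l)<q$ (which forces $q>0$ since $\nu_{\Qplus}(l)>0$) never succeeds, and the output is the correct $\emptyset=b(\nu(i))$.

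The main obstacle to worry about is that $\eta_p$ is only guaranteed to behave correctly on $\delta_X^{-1}\dom f$, so when $\nu(i)\in A\setminus\dom f$ the function-realizer may emit arbitrary rationals, including large positive ones. The key design point is that such spurious outputs are always ``filtered'' through multiplication by a witness of $d_A(\nu(i))>0$, which cannot exist when $\nu(i)\in A$; thus $\overline{\rho_<}$-information on $d_A$ alone suppresses all spurious behaviour, and the $\Sigma^0_1$-information on $\dom f$ supplied by the $[\delta_X\to\rho_<]_{\Sigma^0_1(X)}$-name of $f$ is, in fact, not consulted by the algorithm. Once this gating is in place, the remainder is a standard dovetailing argument, and uniformity of the construction in $i$ is clear from the utm-property for $\eta$.
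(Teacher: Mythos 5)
Your proposal is correct and takes essentially the same approach as the paper: for each $i$ you lower-semicompute the radius $f(\nu(i))\cdot d_A(\nu(i))$ via products of positive rational lower bounds on the two factors (so the case $\nu(i)\in A$ is handled automatically, since no positive lower bound on $d_A(\nu(i))=0$ ever appears, whatever the function-realizer emits) and enumerate ideal balls contained in $b(\nu(i))$. The only difference is cosmetic: the paper simply outputs the balls $\ball{\nu(i)}{\nu_{\Qplus}(j)}$ centered at $\nu(i)$ itself with $\nu_{\Qplus}(j)$ below that product, whereas you enumerate arbitrary ideal balls formally included in such balls.
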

\begin{proof}
Consider the following algorithm: on input $\langle p,q\rangle$ and
$i\in\N$, one can dovetail output of $0^{\ordomega}$ with output of
$\langle i,j\rangle+1$ for all $j\in\N$ such that
\(\nu_{\Qplus}(j) < (f\compose\nu)(i).(d_A\compose\nu)(i)\).
Then we find the output $s^{(i)}\in\Bairespc$ has
\begin{align*}
\delta_{\Sigma^0_1(X)}(s^{(i)})&=
\bigcup\{\alpha\langle i,j\rangle\setconstr j\in\N\wedge
\nu_{\Qplus}(j)<(f\compose\nu)(i).(d_A\compose\nu)(i)\}
=(b\compose\nu)(i).
\end{align*}
But then
$G:\subseteq\Bairespc\to\Bairespc,\langle p,q\rangle\mapsto \langle s^{(0)},\dots\rangle$ is a computable
realiser as required.
\end{proof}
\end{rem}

Considering now the operation $\pi^{-1}:\mathcal{A}(X)\setminus\{\emptyset\}\crsarr\mathcal{D}(X)$
($\subseteq\LF_{\Sigma^0_1(X),\N}\times X^{\N}$), by definition we can say it is
$(\delta_4,\delta_3)$-computable (see after Definition \ref{def:dugundji}), and if we define $\delta_8$ as a
representation of $\mathcal{A}(X)\setminus\{\emptyset\}$ with
\[
\langle p,q\rangle\in\delta_8^{-1}\{A\}:\iff (\exists B\in\mathcal{A}(X)) \left(
\bdry A\subseteq B\subseteq A\wedge p\in\deltar^{-1}\{B\} \right)\wedge q\in \delta_{\Pi^0_1(X)}^{-1}\{A\},
\]
and recalling
$\mathcal{A}'=\mathcal{A}(X)\setminus\{\emptyset,X\}$, we find
\begin{prop}\label{prn:dclosedsets}
$(\deltar\sqcap\delta_{\textrm{dist}}^{>})|^{\mathcal{A}'}\leq \delta_4|^{\mathcal{A}'} \leq
\delta_8|^{\mathcal{A}'}$.
\end{prop}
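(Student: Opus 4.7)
My plan is to verify each of the two reductions separately. The first reduction $(\deltar\sqcap\delta_{\textrm{dist}}^{>})|^{\mathcal{A}'}\leq \delta_4|^{\mathcal{A}'}$ should be a direct consequence of Proposition \ref{prn:dugundji}. Given a $\deltar\sqcap\delta_{\textrm{dist}}^{>}$-name of $A\in\mathcal{A}'$, we pair it with the computable $\nu_{\Q}$-name of a fixed rational $\epsilon\in(0,1]$ (say $\epsilon=1$, though the particular choice is immaterial). The resulting input lies in $\dom Q$, so Proposition \ref{prn:dugundji} produces a $\delta_3$-name of some Dugundji system $D=((V_i)_i,(y_i)_i)$ with coefficient $1+\epsilon$ for $A$ (in particular $\pi(D)=A$). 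By definition of $\delta_4$, this is already a $\delta_4$-name of $A$, completing the first reduction.

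For the second reduction $\delta_4|^{\mathcal{A}'}\leq \delta_8|^{\mathcal{A}'}$, suppose $\langle p,q\rangle\in\delta_4^{-1}\{A\}$, so that $p\in\delta_2^{-1}\{(V_i)_i\}$ and $q\in(\delta_X^{\ordomega})^{-1}\{(y_i)_i\}$ for some Dugundji system $D=((V_i)_i,(y_i)_i)$ with $\pi(D)=A$. From the $\delta_{\Sigma^0_1(X)}^{\ordomega}$-component of $p$ we obtain uniformly a $\delta_{\Sigma^0_1(X)}$-name of $\bigcup_i V_i=X\setminus A$, which by definition is a $\delta_{\Pi^0_1(X)}$-name of $A$; this supplies the second component of the desired $\delta_8$-name.

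For the first ($\deltar$) component of the $\delta_8$-name, I would take $B:=\cl{\{y_i\setconstr i\in\N\}}$. Since each $y_i\in A$ and $A$ is closed we have $B\subseteq A$, and the classical calculation recorded just after the definition of $\deltad$ (picking $p_i\in\ball{p}{2^{-i}}\setminus A$ for $p\in\bdry A$, using condition (2) of Definition \ref{def:dugundji} to force $d(p,y_{n_i})\to 0$) shows $\bdry A\subseteq B$. Moreover $B\neq\emptyset$ since $y_0\in B$, and the sequence $(y_i)_i$ is trivially dense in $B=\cl{\{y_i\}}$. Converting the $(\delta_X^{\ordomega})$-name $q=\langle q^{(0)},\dots\rangle$ into a $\deltar$-name is a simple shift: replace each $q^{(i)}$ by the sequence $j\mapsto q^{(i)}_j+1$ so that $P$ inverts the shift and $(\delta_X\compose P)$ returns $y_i$. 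Combining these with the $\delta_{\Pi^0_1(X)}$-name of $A$ produced above yields a $\delta_8$-name of $A$.

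Neither step presents a real obstacle: the first reduction is essentially a plug-and-play application of Proposition \ref{prn:dugundji}, and the second is a combination of a routine computable union (for the $\Pi^0_1$ part) with the classical fact $\bdry A\subseteq\cl{\{y_i\}}\subseteq A$ together with a trivial reindexing of $\delta_X$-names. The only point requiring any attention is confirming the density/inclusion conditions in the definition of $\deltar$, but these follow immediately from $B$ being defined as $\cl{\{y_i\}}$ and $(y_i)_i\subseteq A$.
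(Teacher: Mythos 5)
Your proof is correct and follows essentially the same route as the paper: the first reduction is exactly the paper's appeal to computability of $Q(1,\cdot)$ from Proposition \ref{prn:dugundji}, and the second is the paper's shift $q_j\mapsto q_j+1$ producing a $\deltar$-name of $B=\cl{\{y_i\setconstr i\in\N\}}$ (with $\bdry A\subseteq B\subseteq A$ by the classical observation preceding the proposition), paired with a $\delta_{\Pi^0_1(X)}$-name of $A$ obtained from the open-set data. The only cosmetic difference is that you compute the union $\bigcup_i V_i=X\setminus A$ from the $\delta_{\Sigma^0_1(X)}^{\ordomega}$-component, whereas the paper reuses the $\delta_2$-name (whose covering component already encodes this union) directly; both are fine.
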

\begin{proof}
The first reduction follows from computability of $Q(1,\cdot)$, while
the second is realized by \(\langle p,q\rangle\mapsto
\langle r,p\rangle\) where $r_j:=q_j+1$ ($j\in\N$).
\end{proof}

Restricted to $\mathcal{N}:=\{A\in\mathcal{A}(X)\setconstr
\interior_X A=\emptyset\}=\{A\in\mathcal{A}(X)\setconstr \bdry A=A\}$
of course $\delta_8|^{\mathcal{N}\setminus\{\emptyset\}}\equiv
(\deltar\sqcap
\delta_{\Pi^0_1(X)})|^{\mathcal{N}\setminus\{\emptyset\}}$.
Here we leave aside further description of representations of
$\mathcal{N}$ (for instance, in the special case
$X=\R$ related to the fact that, for $A\in\mathcal{A}(X)$, $A$ is nowhere dense iff
$A$ is hereditarily disconnected iff $\dim A\leq 0$); in the next section, we will be more concerned with ideas
around the use of Dugundji systems.

\section{Dugundji systems for zero-dimensional sets}\label{sec:newfourth}
In the case $(X,d,\nu_0)$ is an `effectively zero-dimensional' computable metric space it is possible to use a different construction
to give a Dugundji system $((W_i)_i,(y_i)_i)\in\pi^{-1}\{A\}$ where
$W_i$ ($i\in\N$) are pairwise disjoint open sets, at least for proper
nonempty closed subsets $A$.  Indeed, this construction can
be given for a Dugundji system relative to $A$ for $B$ whenever
$\emptyset\neq B\subseteq A\subseteq X\wedge B\neq X$ and
$\dim A\leq 0$.  This is done in Proposition \ref{prn:dugundjizd}
under certain assumptions on the (`uniformly effectively
zero-dimensional') point class $\mathcal{Y}\subseteq
\{Y\in\Pi^0_1(X)\setconstr\dim Y\leq 0\}$ with $\mathcal{Y}\ni A$,
after which Theorem \ref{thm:a} shows how to compute retractions
relative to $A$, and Proposition \ref{prn:aconverse} shows a result
in the other direction: that computing such retractions is (under some
assumptions) sufficient to show $\mathcal{Y}$ is uniformly effectively
zero-dimensional.

To be more formal, towards these aims (and since $A$ is not
necessarily effectively separable), we define
\begin{defi}
\begin{gather*}
\mathcal{D}(X,A):=\{((V_i)_i,(y_i)_i)\in\mathcal{O}(X)^{\N}\times X^{\N}
\setconstr ((V_i\cap A)_i,(y_i)_i)\in\mathcal{D}(A)\},\\
\mathcal{Y}\subseteq\mathcal{A}(X)\setminus\{\emptyset\}\text{ with }\delta_{\mathcal{Y}}\leq
\delta_{\Pi^0_1}|^{\mathcal{Y}},\onespace
E :=\bigcup_{A\in\mathcal{Y}}\{A\}\times\mathcal{D}(X,A)\subseteq
\Pi^0_1(X)\times\Sigma^0_1(X)^{\N}\times X^{\N},\\
F :=\{(A,B)\in\mathcal{Y}\times\mathcal{A}(X)\setconstr \emptyset\neq B\subseteq A\},\onespace
\pi':E\to F,(A;(V_i)_i,(y_i)_i)\mapsto (A,A\setminus\bigcup_{i\in\N}V_i),\\
\delta_5:=[\delta_{\mathcal{Y}},\deltar\sqcap\delta_{\textrm{dist}}^{>}]|^F,\onespace
\delta_6:\subseteq\N^{\N}\to E,\\
\langle p,\tilde{p},\tilde{q},r,s,t\rangle
\in\delta_6^{-1}\{(A;(V_i)_i,(y_i)_i)\}:\iff
p\in\delta_{\mathcal{Y}}^{-1}\{A\}\wedge \tilde{p}\in(\delta_{\Sigma^0_1(X)}^{\ordomega})^{-1}\{(V_i)_i\}\wedge\\
\langle \tilde{q},r,s\rangle\in\delta_{\Sigma^0_1(A),\alpha_A'}^{-1}\{(V_i\cap A)_i\}\wedge
t\in(\delta_X^{\ordomega})^{-1}\{(y_i)_i\}.
\end{gather*}
\end{defi}
Here we note $(\alpha_A)'=(\alpha')_A$ and consequently
\(\delta_{\Sigma^0_1(A)}(r)=\bigcup_{i\in\N}\alpha_A'(r_i)\).
One checks $\delta_6$ is a well-defined representation of $E$.
\begin{prop}\label{prn:dugundjizd}
If the operation $\tilde{S}:\Sigma^0_1(X)^{\N}\times\mathcal{Y}\crsarr\Sigma^0_1(X)^{\N}$ defined by
\[
\tilde{S}((V_i)_i,Y)=\{(W_i)_i\setconstr(\forall i)W_i\subseteq V_i\wedge\textstyle
\bigcup_i W_i\cap Y=\bigcup_i V_i\cap Y
\wedge (\forall i,j)(i\neq j\implies W_i\cap W_j=\emptyset)\}
\]
is 
computable then the operation
$R':\subseteq F\crsarr E$ is $(\delta_5;\delta_6)$-computable, where

\(\dom R'=\{(A,B)\in F\setconstr B\neq X\}
=\{(A,B)\in\mathcal{Y}\times\mathcal{A}(X)\setconstr
\emptyset\neq B\subseteq A\wedge B\neq X\}\) and
\[
R'(A,B):=\{(A;(W_i)_i,(y_i)_i)\setconstr
\pi'(A;(W_i)_i,(y_i)_i)=(A,B)\text{ and $(W_i)_i$ pairwise disjoint}\}.
\]
\end{prop}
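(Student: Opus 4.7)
The plan is to reduce the problem to the ambient Dugundji construction (Proposition~\ref{prn:dugundji}) applied to $B$ as a closed subset of $X$, then to disjointify the resulting open cover by a single invocation of the assumed computable operation $\tilde{S}$ with $A$ playing the role of the reference set $Y$. The key observation that will make this work is that the relative Dugundji axioms for $B$ in $A$ are all inherited by intersecting an \emph{ambient} Dugundji system for $B$ with $A$; no separate construction inside the (not necessarily effectively separable) subspace is needed.

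Concretely, I would first extract from a $\delta_5$-name of $(A,B)\in\dom R'$ a $\delta_{\mathcal{Y}}$-name of $A$ (hence a $\delta_{\Pi^0_1(X)}$-name of $A$) and a $\deltar\sqcap\delta_{\textrm{dist}}^{>}$-name of $B$. Using $\emptyset\neq B\neq X$, Proposition~\ref{prn:dugundji} (at $\epsilon=1$) then yields a $\delta_3$-name of some $((V_i)_i,(y_i)_i)\in\pi^{-1}\{B\}$; in particular it supplies a $\delta_2=\delta_{\Sigma^0_1(X),\alpha'}$-name $\langle p,q,r\rangle$ of $(V_i)_i$---with cover indices $q$ satisfying $\bigcup_i\alpha'(q_i)=\bigcup_i V_i=X\setminus B$ and finite-set witnesses $r$ for local finiteness of $(V_j)_j$---together with a $\delta_X^{\ordomega}$-name of $(y_i)_i\subseteq B$. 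Feeding $(V_i)_i$ and the $\delta_{\mathcal{Y}}$-name of $A$ into the assumed realizer of $\tilde{S}$ produces pairwise disjoint $(W_i)_i$ with $W_i\subseteq V_i$ and $\bigcup_i W_i\cap A=\bigcup_i V_i\cap A=A\setminus B$.

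Next I would assemble a $\delta_6$-name of $(A;(W_i)_i,(y_i)_i)$. The $\delta_{\mathcal{Y}}$-component, the $\delta_{\Sigma^0_1(X)}^{\ordomega}$-component and the $\delta_X^{\ordomega}$-component are already in hand; for the middle $\delta_{\Sigma^0_1(A),\alpha_A'}$-component of $(W_i\cap A)_i$ I would take the trivial reinterpretation of $(W_i)_i$ under $\alpha_A=\alpha\cap A$ as the $\delta_{\Sigma^0_1(A)}^{\ordomega}$-part, reuse $q$ as the cover indices (noting $\bigcup_i\alpha_A'(q_i)=A\setminus B=\bigcup_i W_i\cap A$, with equality as demanded by $\delta_2$), and reuse $r$ as the local-finiteness witnesses, since $W_j\cap A\subseteq V_j$ gives $\{j:(W_j\cap A)\cap\alpha_A'(q_i)\neq\emptyset\}\subseteq\{j:V_j\cap\alpha'(q_i)\neq\emptyset\}\subseteq\FS(\id_{\N})(r_i)$.

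Finally I would verify $(A;(W_i)_i,(y_i)_i)\in R'(A,B)$: the identity $A\setminus\bigcup_i W_i=B$ (from $\bigcup_i W_i\cap A=A\setminus B$ and $B\subseteq A$) delivers $\pi'(A;(W_i)_i,(y_i)_i)=(A,B)$, and the three relative Dugundji axioms descend from the ambient ones via $W_i\cap A\subseteq V_i$: the metric bound $d(x,y_i)\leq 2 d_B(x)$ is direct; local finiteness in $A$ is inherited from local finiteness of $(V_i)_i$ in $X$; and the tail condition follows because $W_{n_i}\cap A\subseteq V_{n_i}$ gives $d(V_{n_i},B)\leq d(W_{n_i}\cap A,B)$ and $\diam(W_{n_i}\cap A)\leq\diam V_{n_i}$. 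The main subtlety I anticipate is making sure the re-used witness data really form a valid $\delta_{\Sigma^0_1(A),\alpha_A'}$-name---in particular that $\bigcup_i\alpha_A'(q_i)$ equals $\bigcup_i W_i\cap A$ exactly and not merely up to inclusion---which is guaranteed by the equality clause in $\delta_2$ (not only $\delta_1$) that is already built into the output of Proposition~\ref{prn:dugundji}.
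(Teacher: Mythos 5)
Your proposal is correct and follows essentially the same route as the paper: the paper's proof factors the construction as a composition $R\compose I\compose f$, where $f$ invokes Proposition \ref{prn:dugundji} to get an ambient Dugundji system for $B$, $I$ intersects with $A$ (reusing the same cover indices and finite-set witnesses for the $\delta_{\Sigma^0_1(A),\alpha_A'}$-component, exactly as you do), and $R$ applies $\tilde{S}((V_i)_i,A)$ and checks the relative Dugundji axioms descend via $W_i\subseteq V_i$. Your one-pass presentation and the verification details (metric bound, inherited local finiteness witnesses, tail condition, and the exact-cover equality $\bigcup_i\alpha_A'(q_i)=A\setminus B=\bigcup_i W_i\cap A$) match the paper's argument.
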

\begin{rem}\label{rem:unifzd}
The computability of $\tilde{S}$ as above (for a particular represented class
$\mathcal{Y}\subseteq\{A\subseteq X\setconstr \dim A\leq 0\}$)
can be shown to follow from computability of $B$ as in \cite[\S 4]{zerodpaper} almost identically to the
proof given there for ($B$ computable $\implies$ $S$ computable).  In fact, one has
($B$ computable $\implies$ $\tilde{S}$ computable $\implies$ $S$ computable) and, in the case that
$\mathcal{Y}\subseteq\{A\in\mathcal{A}(X)\setconstr \dim A\leq 0\}$ with
$\delta_{\mathcal{Y}}\leq\delta_{\Pi^0_1(X)}|^{\mathcal{Y}}$, one can show these conditions as well as the
separate conditions of computability of $N,M$ are all equivalent (see Section \ref{sec:zdsubsets} below).
Here we suggest to call such represented spaces
$(\mathcal{Y},\delta_{\mathcal{Y}})$, where $S$ is computable, \emph{uniformly effectively zero-dimensional}.
Two examples (to be shown elsewhere) are $\mathcal{Y}=\{K\in\mathcal{K}(X)\setconstr \dim K\leq 0\}$ with
$\delta_{\mathcal{Y}}=\deltac|^{\mathcal{Y}}$, and
$(\mathcal{Y},\delta_{\mathcal{Y}})=\Pi^0_1(\N^{\N})$.
\end{rem}
\begin{proof}[Proof of Proposition \ref{prn:dugundjizd}]
First, observe from Proposition \ref{prn:dugundji} that $\pi^{-1}|_{\mathcal{A}'}$ is computable.  After this,
we consider the operations
\[
f=(\id_{\mathcal{Y}}\times\pi^{-1})|_F:F\crsarr\mathcal{Y}\times\mathcal{D}(X),
\onespace I:\subseteq\mathcal{Y}\times\mathcal{D}(X)\to E,\onespace
R:E\crsarr E
\]
defined by
\begin{gather*}
f(A,B):=\{A\}\times\pi^{-1}(B),\onespace I(A;(V_i)_i,(y_i)_i):=(A;(V_i)_i,(y_i)_i)\in E,\\
\text{ and }\onespace
R(A;(V_i)_i,(y_i)_i):=\{(A;(W_i)_i,(z_i)_i)\in E\setconstr (W_i)_i\in \tilde{S}((V_i)_i,A)\},
\end{gather*}
where $\dom I:=\{(A,D)\setconstr \pi(D)\subseteq A\}$.

We find $f$ is well-defined (by existence of a Dugundji system for arbitrary
$B\in\mathcal{A}(X)\setminus\{\emptyset\}$), while $I$ is well-defined by checking any Dugundji system
$((V_i)_i,(y_i)_i)$ for $B$ and any $\mathcal{Y}\ni A\supseteq B$
have $((V_i\cap A)_i,(y_i)_i)$ a Dugundji system for $B$ in $A$.  Namely, if
$p\in\delta_{\mathcal{Y}}^{-1}\{A\}$ and $q=\langle\langle \tilde{q},r,s\rangle,t\rangle\in
\delta_3^{-1}\{((V_i)_i,(y_i)_i)\}$, we claim $(A;(V_i)_i,(y_i)_i)\in E$ and
$\langle p,\tilde{q},\tilde{q},r,s,t\rangle\in\delta_6^{-1}\{(A;(V_i)_i,(y_i)_i)\}$; to see this, note
$\tilde{q}\in(\delta_{\Sigma^0_1(X)}^{\ordomega})^{-1}\{(V_i)_i\}$,
$\langle\tilde{q},r,s\rangle\in\delta_{\Sigma^0_1(A),\alpha_A'}^{-1}\{(V_i\cap A)_i\}$ and
$t\in(\delta_X^{\ordomega})^{-1}\{(y_i)_i\}$ with the second claim using
$\tilde{q}\in(\delta_{\Sigma^0_1(A)}^{\ordomega})^{-1}\{(V_i\cap A)_i\}$,
$\bigcup_i(V_i\cap A)=\delta_{\Sigma^0_1(A)}(r)$ and
\[
\{j\setconstr V_j\cap A\cap\alpha_A'(r_i)\neq\emptyset\}\subseteq
\{j\setconstr V_j\cap\alpha'(r_i)\neq\emptyset\}\subseteq\FS(\id_{\N})(s_i)
\]
for all $i\in\N$.

Clearly here $\langle p,q\rangle\mapsto \langle p,\tilde{q},\tilde{q},r,s,t\rangle$ computably realises $I$.
On the other hand, $R$ is well-defined since if $((V_i)_i,(y_i)_i)$ is a Dugundji system relative to
$A\in\mathcal{Y}$ and $(W_i)_i\in\tilde{S}((V_i)_i,A)$, then (we claim) $((W_i)_i,(y_i)_i)$ is a Dugundji
system relative to $A$:\\
if $\langle p,\tilde{p},\tilde{q},r,s,t\rangle\in
\delta_6^{-1}\{(A;(V_i)_i,(y_i)_i)\}$
and $G$ is a computable realizer of $\tilde{S}$, we get
$\langle \tilde{q},r,s\rangle\in
\delta_{\Sigma^0_1(A),\alpha_A'}^{-1}\{(V_i\cap A)_i\}$ and $\tilde{p}':=G\langle\tilde{p},p\rangle\in
(\delta_{\Sigma^0_1(X)}^{\ordomega})^{-1}\{(W_i)_i\}$ hence $\tilde{p}'\in
(\delta_{\Sigma^0_1(A)}^{\ordomega})^{-1}\{(W_i\cap A)_i\}$, while $W_i\subseteq V_i$ implies
\[
\{j\setconstr W_j\cap A\cap\alpha_A'(r_i)\neq\emptyset\}\subseteq
\{j\setconstr V_j\cap A\cap\alpha_A'(r_i)\neq\emptyset\}\subseteq\FS(\id_{\N})(s_i)
\]
for each $i$, so
$(W_i\cap A)_i\in\LF_{\Sigma^0_1(A)}$ with
$\langle\tilde{p}',r,s\rangle\in
\delta_{\Sigma^0_1(A),\alpha_A'}^{-1}\{(W_i\cap A)_i\}$ (we
also used that \(\bigcup_i V_i\cap A=\delta_{\Sigma^0_1(A)}(r)\implies \bigcup_i W_i\cap A=\delta_{\Sigma^0_1(A)}(r)\)).  Finally,
\begin{gather*}
(y_i)_i\subseteq B:=A\setminus\bigcup_i(V_i\cap A)=A\setminus\bigcup_i(W_i\cap A),
\onespace
(\forall i)(\forall x\in W_i\cap A)(d(x,y_i)\leq 2 d_B(x)),
\end{gather*}
and for each $(n_i)_i\in\Bairespc$,
\begin{align*}
\lim_{i\to\infty}d(W_{n_i}\cap A,B)=0 & \implies
\lim_{i\to\infty}d(V_{n_i}\cap A,B)=0\\
& \implies \lim_{i\to\infty}\diam(W_{n_i}\cap A)\leq
\lim_{i\to\infty}\diam(V_{n_i}\cap A)=0,
\end{align*}
using that $(\forall i)(W_i\cap A\subseteq V_i\cap A)$.
Thus $((W_i\cap A)_i,(y_i)_i)\in\mathcal{D}(A)$ as required,
and \(\langle p,\tilde{p}',\tilde{p}',r,s,t\rangle\in
\delta_6^{-1}\{(A;(W_i)_i,(y_i)_i)\}\).

To summarize the above facts:
\begin{enumerate}
\item $f|_{\{(A,B)\setconstr \emptyset\neq B\neq X\}}$ is
$([\delta_{\mathcal{Y}},\deltar\sqcap\delta_{\textrm{dist}}^{>}],[\delta_{\mathcal{Y}},\delta_3])$-computable,
\item $I$ is $([\delta_{\mathcal{Y}},\delta_3],\delta_6)$-computable, and
\item since $\tilde{S}:\Sigma^0_1(X)^{\N}\times\mathcal{Y}\crsarr\Sigma^0_1(X)^{\N}$ is
$([\delta_{\Sigma^0_1(X)}^{\ordomega},\delta_{\mathcal{Y}}],\delta_{\Sigma^0_1(X)}^{\ordomega})$-computable,
$R$ is $(\delta_6,\delta_6)$-computable.
\end{enumerate}
Since $(R\compose I\compose f)(A,B)\subseteq R'(A,B)$ for all $(A,B)\in\dom R'$, we should have computability
of $R'$.  More formally, if $(A,B)\in\dom R'$ then $\emptyset\neq B\neq X$ and $B\subseteq A$, with any
$D=((V_i)_i,(y_i)_i)\in\pi^{-1}\{B\}$ having $(A,D)\in\dom I$ (hence $f(A,B)\subseteq\dom I$).  On the other
hand, if $(A',D')\in R(A,D)$ with $D'=((W_i)_i,(z_i)_i)$ ($\in\mathcal{D}(X,A')$) then $A'=A$,
$\bigcup_i W_i\cap A=\bigcup_i V_i\cap A$ and $(W_i)_i$ pairwise disjoint.  But then
$A\setminus\bigcup_i W_i=A\setminus\bigcup_i V_i=B$, so $(A',D')\in R'(A,B)$.  This completes the proof.
\end{proof}
%
%
%

Using this construction of a Dugundji system for $B$ relative to $A$ with pairwise disjoint sets, we now present
a version of \cite[Thm 7.6]{zerodpaper} relative to $A\in\mathcal{Y}$, showing it is possible to construct retractions onto specified closed subsets $\emptyset\neq B\subseteq A$ with $B\neq X$.  The statement of the
theorem when $A=X$ is somewhat stronger than in \cite{zerodpaper}, as it does not require compactness, and the
proof is simplified
in some respects.  Further simplifications of the construction of a Dugundji system with pairwise disjoint sets
in the case of an effectively zero-dimensional computable metric space $X$ are also possible, and are discussed
after the theorem; we just show, in two directions, how to obtain clopen sets here, and how to avoid the
restriction $B\neq X$ in certain cases.
\begin{thm}\label{thm:a}
Suppose $X$ is a computable metric space with $\mathcal{Y}\subseteq\{A\in\mathcal{A}(X)\setconstr \dim A\leq 0\}$
a represented class such that $\delta_{\mathcal{Y}}\leq\delta_{\Pi^0_1(X)}|^{\mathcal{Y}}$ and $\tilde{S}$
is computable.  Then
\[
E':\subseteq F\crsarr \Cont_{\Pi^0_1(X)}(X,X),(A,B)\mapsto
\{f\setconstr \img f=B\wedge f|_B=\id_B\wedge\dom f=A\}
\]
($\dom E'=\{(A,B)\in F\setconstr B\neq X\}
=\{(A,B)\in\mathcal{Y}\times\mathcal{A}(X)\setconstr \emptyset\neq B\subseteq A\wedge B\neq X\}$) is
$(\delta_5;[\delta_X\to\delta_X]_{\Pi^0_1(X)})$-computable.
\end{thm}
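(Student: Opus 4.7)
The plan is to reduce to a Dugundji-system computation. Given a $\delta_5$-name of $(A,B) \in \dom E'$ (so $A \in \mathcal{Y}$, $\emptyset \neq B \subseteq A$, $B \neq X$), first apply Proposition \ref{prn:dugundjizd} to compute a $\delta_6$-name of some $(A;(W_i)_i,(y_i)_i) \in R'(A,B)$. In particular $(y_i)_i \subseteq B$, the sets $W_i$ are pairwise disjoint, $((W_i \cap A)_i,(y_i)_i) \in \mathcal{D}(A)$ (with coefficient $2$), and $\bigcup_i(W_i \cap A) = A \setminus B$. Define $f \colon A \to B$ classically by $f|_B = \id_B$ and, for $x \in A \setminus B$, $f(x) := y_i$ for the unique $i$ with $x \in W_i$; this $f$ satisfies $\dom f = A$, $\img f = B$, and $f|_B = \id_B$. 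A $\delta_{\Pi^0_1(X)}$-name of $\dom f = A$ is obtained from the given $\delta_{\mathcal{Y}}$-name via $\delta_{\mathcal{Y}} \leq \delta_{\Pi^0_1(X)}|^{\mathcal{Y}}$.

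Continuity of $f$ at $x \in A \setminus B$ is immediate since $f$ is constant on the $A$-open neighbourhood $W_i \cap A \ni x$. At $x \in B$, the Dugundji bound $d(x',y_i) \leq 2 d_B(x')$ for $x' \in W_i \cap A$, combined with $d_B(x') \leq d(x',x)$, gives $d(f(x'),f(x)) \leq 3 d(x',x)$ for every $x' \in A$ (verified case-by-case on whether $x' \in B$ or $x' \in W_i \cap A$). It remains to produce a $(\delta_X,\delta_X)$-realizer of $f$; one then wraps this into a $[\delta_X \to \delta_X]_{\Pi^0_1(X)}$-name using the smn property of $\eta$ together with the $\delta_{\Pi^0_1(X)}$-name of $A$.

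For the realizer, work in the equivalent Cauchy representation $\rho_{\nu}$. Given a Cauchy name $(z_k)_k$ of $x \in A$, produce $(w_k)_k$ with $d(w_k,f(x)) < 2^{-k-2}$ by, for each $k$, dovetailing two semi-decidable searches: (a) find a basic open $V$ and index $i$ such that $x \in V$ is detected from the input name, $V \subseteq W_i$ is detected from the $\delta_{\Sigma^0_1(X)}$-name of $W_i$ (via the formal inclusion on the basis), and a rational $w$ is computed with $d(w,y_i) < 2^{-k-2}$ (using the $\delta_X^{\ordomega}$-name of $(y_i)_i$); (b) verify $d_B(x) < 2^{-k-4}$ (using the $\delta_{\textrm{dist}}^{>}$-component of the input $\delta_5$-name together with the input name of $x$) and compute a rational $w$ with $d(w,x) < 2^{-k-3}$. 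Output $w$ from whichever search succeeds first.

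Correctness: if $x \in B$, search (a) cannot succeed (no $V \ni x$ can be contained in any $W_i$), while (b) succeeds since $d_B(x)=0$, giving $d(w,f(x)) = d(w,x) < 2^{-k-3} < 2^{-k-2}$. If $x \in W_i \cap A$, search (a) eventually succeeds with $d(w,y_i) < 2^{-k-2}$; if (b) succeeds first, then $d(x,y_i) \leq 2 d_B(x) < 2^{-k-3}$, whence $d(w,f(x)) \leq d(w,x) + d(x,y_i) < 2^{-k-2}$. The uniform bound $d(w_k,f(x)) < 2^{-k-2}$ ensures $(w_k)_k$ is a valid $\rho_{\nu}$-name of $f(x)$. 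The main subtlety lies in the boundary case $x \in B$: the ``near-$B$'' output must stay close to $f(x)=x$ even though $x$ may be a limit of points from various $W_i$, and this is exactly what the coefficient-$2$ Dugundji bound delivers, so the two-pronged dovetail remains consistent in the limit.
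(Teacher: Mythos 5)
Your overall route is sound and, in its first half, identical to the paper's: apply Proposition \ref{prn:dugundjizd} to obtain a pairwise disjoint Dugundji system for $B$ relative to $A$ and define the same piecewise map $f$ ($f|_B=\id_B$, $f\equiv y_i$ on $W_i\cap A$). Where you diverge is in how computability of $f$ is certified: the paper enumerates, for each open $V$, an open $U$ with $U\cap A=f^{-1}V$, via two tests $R_0$ (the point lies in some $W_i$) and $R_1$ (the point is verifiably close to $B$), gluing with Lemma \ref{lem:ctslindelof}; you instead compute a Cauchy name of $f(x)$ directly from a Cauchy name of $x\in A$ by dovetailing the same two cases at each precision $k$. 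The dichotomy and the role of the coefficient-$2$ bound $d(x',y_i)\leq 2d_B(x')$ are the same; your version is more elementary (no Lemma \ref{lem:preimg} or Lemma \ref{lem:ctslindelof} machinery), at the cost of invoking $\rho_{\nu}\equiv\delta_X$ and the smn property to package the realizer, together with the $\delta_{\Pi^0_1(X)}$-name of $A$, into a $[\delta_X\to\delta_X]_{\Pi^0_1(X)}$-name --- both standard. Your use of disjointness is also correct: search (a) can only fire with the unique index $i$ such that $x\in W_i$, and never fires at points of $B$.

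One step fails as written: in search (b) you propose to verify $d_B(x)<2^{-k-4}$ ``using the $\delta_{\textrm{dist}}^{>}$-component'' of the $\delta_5$-name. A $\delta_{\textrm{dist}}^{>}$-name only lets you enumerate rationals below $d_B(x)$, i.e.~semi-decide $d_B(x)>q$; it cannot semi-decide $d_B(x)<q$. The test must instead use the positive ($\deltar$) component of the name of $B$: enumerate the dense sequence $(z_n)_n\subseteq B$ it provides and search for $n$ with $d(x,z_n)<2^{-k-4}$ (this is exactly how the paper's predicate $R_1$ uses the points $z_n$ drawn from the $\deltar$-name). Since the $\deltar$-information is part of your $\delta_5$-input, this is a local repair, and with it the rest of your correctness analysis (the case split, the bound $d(w_k,f(x))<2^{-k-2}$, and the resulting valid $\rho_{\nu}$-name of $f(x)$) goes through.
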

\begin{proof}
First take $(A;(W_i)_i,(y_i)_i)\in R'(A,B)$.
We define
\[
f:\subseteq X\to X,x\mapsto \begin{cases}
x, &\mbox{\textrm{ if $x\in B$}}\\
y_i, &\mbox{\textrm{ if $(\exists i)W_i\cap A\ni x$}}\end{cases}
\]
($\dom f=A$), and will show $f^{-1}V\in\Sigma^0_1(A)$ (for an
arbitrary $V\in\Sigma^0_1(X)$) computably uniformly in $\tilde{q},r,s$
(where $q^{(i)}\in\delta_{\Sigma^0_1(X)}^{-1}\{W_i\}$,
$t^{(i)}\in\delta_X^{-1}\{y_i\}$,
$\tilde{q}=\langle\langle q^{(0)},\dots\rangle,
\langle t^{(0)},\dots\rangle\rangle$,
$r\in\delta_{\Sigma^0_1}^{-1}\{V\}$ and $s\in\deltar^{-1}\{B\}$), in the sense we can compute a
$\delta_{\Sigma^0_1(X)}$-name of some $U$ with $U\cap A=f^{-1}V$.

Indeed, we will define computable
$F:\subseteq\Bairespc\to\Bairespc,\langle\langle\langle p,\tilde{q}\rangle,r\rangle,s\rangle\mapsto h$ such that each
relevant choice of $\tilde{q},r,s$ gives a function
$u=\delta_{\Sigma^0_1(X)}\compose F\langle\langle\langle\cdot,\tilde{q}\rangle,r\rangle,s\rangle:\dom\delta_X\to
\Sigma^0_1(X)$ satisfying
\begin{align*}
\delta_X(p) \in u(p)\cap A \subseteq f^{-1}\delta_{\Sigma^0_1(X)}(r),\onespace &\text{ if }
p\in\delta_X^{-1}f^{-1}\delta_{\Sigma^0_1(X)}(r);\\
\onespace
u(p)\cap A =\emptyset,\onespace &\text{ if }p\in\dom\delta_X\setminus\delta_X^{-1}f^{-1}\delta_{\Sigma^0_1(X)}(r).
\end{align*}
Lemma \ref{lem:ctslindelof} will give $f^{-1}\delta_{\Sigma^0_1}(r)=\bigcup_{p\in\dom\delta_X}u(p)\cap A$
and a $\delta_{\Sigma^0_1}$-name of
$\bigcup_{p\in\dom\delta_X}u(p)=\bigcup_i(u\compose\nu)(i)$ is available
from the inputs.

To define $h$, we dovetail repeated output of `$0$' with searching for $j,M,n$ such that
$a:=\langle p_j,\overline{2^{-j+1}}\rangle$ satisfies $R_0(p,\tilde{q},r,j)\vee R_1(p,r,s,j,M,n)$, outputting `$a+1$'
followed by $0^{\ordomega}$ if such are found.  Here, to be specific, we test the conditions
\begin{align*}
R_0(p,\tilde{q},r,j) &:\equiv (\exists i,k,l,m)\left( q^{(i)}_k\geq 1\wedge a\sqsubset q^{(i)}_k-1\wedge
r_l\geq 1\wedge \langle t^{(i)}_m,\overline{2^{-m+1}}\rangle\sqsubset r_l-1 \right),\\
R_1(p,r,s,j,M,n) &:\equiv
(\exists k)\left( r_k\geq 1\wedge \left\langle p_j,\overline{2^{-j+1}
+\frac{2}{M+1}}\right\rangle\sqsubset r_k-1\right) \\
&\wedge d(\nu(p_j),z_n)+2^{-j+1}<(M+1)^{-1}
\end{align*}
where $(z_i)_{i\in\N}\subseteq B$ is defined (referring to definition of $\deltar$ and recalling
$B\neq\emptyset$) by $z_i:=\delta_X(P s^{(i)})$ ($i\in\N$) for $\langle s^{(0)},\dots\rangle=s$.

Any nonzero output $a+1$ must have the form $a=\langle p_j,\overline{2^{-j+1}}\rangle$ for some $j$,
with either (if $R_0(p,\tilde{q},r,j)$ holds)
$f(\alpha(a)\cap A)\subseteq f(W_i\cap A)\subseteq\{y_i\}\subseteq V$ for appropriate $i$,
or else (if $R_1(p,r,s,j,M,n)$ holds) we can argue as follows.  Any $z\in\alpha(a)\cap A$ has
either $z\in B$ or $(\exists i)W_i\ni z$.  In the first case,
\[
f z=z\in\alpha(a)\cap A\subseteq
\alpha\left\langle p_j,\overline{2^{-j+1}+\frac{2}{M+1}}\right\rangle\subseteq V
\]
(by the first clause of $R_1(p,r,s,j,M,n)$).  In the second case, if
$\nbhd{B}{\epsilon}:=d_B^{-1}(-\infty,\epsilon)$ then
\[
z\in\alpha(a)\cap A\subseteq\ball{z_n}{(M+1)^{-1}}\subseteq
\nbhd{B}{(M+1)^{-1}}
\]
(by second clause of $R_1(p,r,s,j,M,n)$), so $d_B(z)<(M+1)^{-1}$ and
\begin{align*}
d(f z,\nu(p_j)) &=d(y_i,\nu(p_j))\leq d(y_i,z)+d(z,\nu(p_j))\leq 2 d_B(z)+2^{-j+1}<\frac{2}{M+1}+2^{-j+1}\\
&\implies f z\in \alpha\left\langle p_j,\overline{\frac{2}{M+1}+2^{-j+1}}\right\rangle\subseteq V
\end{align*}
where we used the definition of a Dugundji system and the first clause of $R_1(p,r,s,j,M,n)$.  Thus
($\delta_{\Sigma^0_1(X)}(t)=\alpha(a)$ and) $\alpha(a)\cap A\subseteq f^{-1}V$ for any nonzero output $a+1$.
Contrapositively, if $x:=\delta_X(p)\in X\setminus f^{-1}V$ then the output
must be $h=0^{\ordomega}$ to avoid a contradiction (since
$x\in\alpha\langle p_j,\overline{2^{-j+1}}\rangle$ for all $j\in\N$).

On the other hand any $x\in f^{-1}V$ either has $(\exists i)W_i\ni x$ or else $x\in B$.
In the former case $y_i=f x\in V$ and we know there exist $k,l$ such that
\[
q^{(i)}_k\geq 1\wedge r_l\geq 1
\wedge x\in\alpha(q^{(i)}_k-1)\wedge y_i\in\alpha(r_l-1).
\]
Using continuity of $d$ with $\nu(t^{(i)}_m)\to y_i$ and $2^{-m+1}\to 0$ as $m\to\infty$, we get
\[
d(\nu(\pr_1(r_l-1)),\nu(t^{(i)}_m))+2^{-m+1}<\nu_{\Qplus}(\pr_2(r_l-1))
\]
for large $m$, similarly
$a:=\langle p_j,\overline{2^{-j+1}}\rangle\sqsubset q^{(i)}_k-1$ for large $j$.  So $R_0(p,\tilde{q},r,j)$ holds
under our assumption, and we may assume instead $x\in B$.  Then $V\ni f x=x$ and we can pick $k,M$ such that
\[
r_k\geq 1 \text{ and }d(x,\nu(\pr_1(r_k-1)))+\frac{2}{M+1}<\nu_{\Qplus}(\pr_2(r_k-1)),
\]
then pick $j\in\N$ such
that $\langle p_j,\overline{2^{-j+1}+\frac{2}{M+1}}\rangle\sqsubset r_k-1$ and
$d(\nu(p_j),x)+2^{-j+1}<(M+1)^{-1}$, then finally pick $n\in\N$ such that $z_n$ is sufficiently close to $x$
that $d(\nu(p_j),z_n)+2^{-j+1}<(M+1)^{-1}$.  One checks $R_1(p,r,s,j,M,n)$ holds in this case.  As a result,
the algorithm outputs $a+1$ for some $a$ with $A\cap\alpha(a)\ni x$, provided $x\in f^{-1}V$, and
so $\delta_{\Sigma^0_1(X)}(h)\cap A=f^{-1}V$ obtains.
The computable function $F:\subseteq\Bairespc\to\Bairespc$ thus has the properties claimed.
\end{proof}

\begin{rem}\label{rem:zddgone}
If $X$ is an effectively zero-dimensional computable metric space,
$\epsilon=1$ and
$A\in\mathcal{A}(X)\setminus\{\emptyset,X\}$, let $u:X\to\Sigma^0_1(X)$ be defined as before,
and let $e\in\sigma(u\compose\nu)$ where $\nu$ is the cylindrification of $\nu_0$.
Next let $x_i:=\nu(e_i)$, $U_i:=u(x_i)=\ball{x_i}{2^{-2}d_A(x_i)}$ ($i\in\N$)
and $(W_i)_i\in\tilde{S}^X((U_i)_{i\in\N})$, where
$\tilde{S}^X:\Sigma^0_1(X)^{\N}\crsarr\Sigma^0_1(X)^{\N}$ is
(as in \cite[\S 5]{zerodpaper}) defined by
\[
\tilde{S}^X((U_i)_i) = \{(W_i)_i\setconstr (W_i)_i \text{ pairwise disjoint with }W_i\subseteq U_i\text{ and }
\textstyle\bigcup_i W_i=\bigcup_i U_i\}.
\]
With $V$ defined by $V:\Sigma^0_1(X)\crsarr\Delta^0_1(X)^{\N},U\mapsto
\{(W_i^{*})_i\setconstr U=\dot{\bigcup}_i W_i^{*}\}$
we want to take $(\check{W}_{\langle i,j\rangle})_j\in V(W_i)$ for each $i$ and pick
$\check{y}_{\langle i,j\rangle}\in A$ such that
$d(x_i,\check{y}_{\langle i,j\rangle})<\frac{5}{4}d_A(x_i)$ for all $i,j\in\N$ (this is again computable in the
inputs since a $\deltar\sqcap\delta_{\textrm{dist}}^{>}$-name of $A$ is assumed available).  The verification
that $((\check{W}_i)_i;(\check{y}_i)_i)\in \pi^{-1}(A)$ and that (informally speaking)
$A\mapsto ((\check{W}_i)_i;(\check{y}_i)_i)$ is $(\deltar\sqcap\delta_{\textrm{dist}}^{>};
[\delta_{\Delta^0_1(X),\alpha'},\delta_X^{\ordomega}])$-computable will follow from the
mentioned inequality and $W_i\subseteq U_i=\ball{x_i}{2^{-2}d_A(x_i)}$ as before, once we compute some
$p=\langle p^{(0)},\dots\rangle\in(\delta_{\Delta^0_1(X)}^{\ordomega})^{-1}\{(\check{W}_i)_i\}$ and
$q,r\in\Bairespc$ appropriately such that $\langle p,q,r\rangle \in
\delta_{\Delta^0_1(X),\alpha'}^{-1}\{(\check{W}_i)_i\}$.

%
So, if basis numbering $\alpha'$ is defined from $\alpha$ as before, and
$h\in\trfns$ is such that $\img h=\{\langle m,n\rangle\setconstr m,n\in\N\wedge m\sqsubset n\}$,
on input $\langle p^{(0)},\dots\rangle$ we enumerate in $q\in\Bairespc$ all $a+1$ ($a\in\N$) such that
$(\exists k,l)(p^{(k)}_{2 l}\geq 1\wedge a\sqsubset p^{(k)}_{2 l}-1)$, along with (in dovetail fashion)
$\ordomega$ copies of $0$.
Any $i,j$ such that
$\alpha'(q_i)\cap\check{W}_j\neq\emptyset$ have, for corresponding $k$ such that
$\alpha'(q_i)\subseteq\check{W}_k$, that necessarily $\check{W}_k\cap\check{W}_j\neq\emptyset\implies k=j$, so
$\{j\setconstr \check{W}_j\cap\alpha'(q_i)\neq\emptyset\}\subseteq \FS(\id_{\N})(r_i)$ where by definition
$r_i\in\FS(\id_{\N})^{-1}\{k'\}$ for
\[
k'=\pi^{(3)}_1\leastmu{\langle k,l,m\rangle}{p^{(k)}_{2 l}\geq 1\wedge h(m)=\langle q_i,p^{(k)}_{2 l}-1\rangle}.
\]
Using the definition of $\FS(\id_{\N})$, some such $r_i$ is uniformly computable from the inputs and $i$.

All that
remains to compute $((\check{W}_i)_i;(\check{y}_i)_i)$ from the inputs is to check computability of $V$.
\begin{proof}[Proof that $V$ is computable]
Since $X$ is effectively zero-dimensional, there exist computable
$b:\N\to\Delta^0_1(X)$ and c.e.~formal inclusion $\sqsubset'$ of $b$
with respect to $\alpha$ (the latter is the standard ball numbering of
$(X,d,\nu_0)$), such that $b$ is a basis numbering.  Then, similarly
to \cite[Proof of Prop 7.3(2)$\implies$(3)]{zerodpaper}, if
$F:\subseteq\Bairespc\to\Bairespc$ is a computable realizer of $b$ and $h\in\trfns$
such that $\img h=\{\langle a,c\rangle\setconstr a\sqsubset' c\}$, we define
$G:\Bairespc\to\Bairespc,p\mapsto\langle p^{(0)},\dots\rangle$ where
\[
p^{(i)}:=\begin{cases}
F(m.0^{\ordomega}), &\mbox{\textrm{ if $i=\langle l,m,n\rangle^{(3)}\wedge p_n\geq 1\wedge
h(l)=\langle m,p_n-1\rangle$,}}\\
\langle 0^{\ordomega},\id_{\N}\rangle, &\mbox{\textrm{ otherwise}}.
\end{cases}
\]
This $G$ is computable and realizes the operation
$\Sigma^0_1(X)\crsarr\Delta^0_1(X)^{\N},U\mapsto\{(W_i)_i\setconstr U=\bigcup_i W_i\}$, after which we take
$W_i^{*}:=W_i\setminus\bigcup_{j<i}W_j$ to get pairwise disjointness.
\end{proof}
\end{rem}
\begin{rem}\label{rem:zddgtwo}
To avoid the condition $A\neq X$ in the above argument, we may suppose in place of $V$ that
$V':\Sigma^0_1(X)^{\N}\crsarr\Sigma^0_1(X)^{\N}\times\{0,1\}^{\N}$, defined by letting
$V'((U_i)_i)$ be the set
\[
\{((\check{W}_k)_k,s)\setconstr (\check{W}_k)_k \text{ p.w.~disj, }
(\forall i,j)\check{W}_{\langle i,j\rangle}\subseteq U_i,
(\forall k)(\check{W}_k=\emptyset\smash{\iff} s_k=0),\textstyle\bigcup_k\check{W}_k=\bigcup_i U_i\},
\]
is computable.  If this holds for a particular effectively zero-dimensional computable metric space
$(X,d,\nu_0)$, we can take $x_i:=\nu_0(i)$, $U_i:=u(x_i)=\ball{x_i}{2^{-2}d_A(x_i)}$ (possibly with zero
radius), $((\check{W}_i)_i,s)\in V'((U_i)_{i\in\N})$ (so $\check{W}_{\langle i,j\rangle}\subseteq U_i$)
and pick
\[
\begin{cases}
\check{y}_{\langle i,j\rangle}\in A\text{ arbitrarily}, &\mbox{\textrm{ if $s_{\langle i,j\rangle}=0$,}}\\
\check{y}_{\langle i,j\rangle}\in A\mathst d(\check{y}_{\langle i,j\rangle},x_i)<\frac{5}{4}d_A(x_i),
&\mbox{\textrm{ if $s_{\langle i,j\rangle}=1$}}
\end{cases}
\]
(here note $s_{\langle i,j\rangle}=1\iff\check{W}_{\langle i,j\rangle}\neq\emptyset\implies u(x_i)\neq
\emptyset\iff x_i\not\in A$).

Then one observes $(\check{W}_{\langle i,j\rangle}\neq\emptyset\implies d(x_i,\check{y}_{\langle i,j\rangle})
\leq\frac{5}{4}d_A(x_i))$ for all $i,j$, so the verification of the properties of
$((\check{W}_i)_i,(\check{y}_i)_i)$ follows as before (and it is a Dugundji system).  To extend the
domain of $\pi^{-1}|_{\mathcal{A}'}$ to include $A=X$ in this situation is easy;
indeed note to compute a
$\delta_3$-name
$\langle\langle p,q,r\rangle,t\rangle$ of $((\check{W}_i)_i;(\check{y}_i)_i)\in \pi^{-1}(A)$ we only require
$\bigcup_i\alpha'(q_i)=\bigcup_i\check{W}_i=X\setminus A$,
$\{j\setconstr \check{W}_j\cap\alpha'(q_i)\neq\emptyset\}\subseteq\FS(\id_{\N})(r_i)$ for all $i\in\N$
and $t\in(\delta_X^{\ordomega})^{-1}\{(\check{y}_i)_i\}$.
So, compute $q\in\Bairespc$ (similarly to before) as follows: on input
$\langle p^{(0)},\dots\rangle\in (\delta_{\Sigma^0_1(X)}^{\ordomega})^{-1}\{(\check{W}_i)_i\}$,
enumerate $a+1$ for all $a\in\N$ such that $(\exists k,l)(p^{(k)}_l\geq 1\wedge a\sqsubset p^{(k)}_l-1)$,
dovetailed with output of $\ordomega$ copies of $0$.  We find of course that any
$x\in X\setminus A=\bigcup_k\check{W}_k$ has some $k,l$ such that $x\in\alpha'(p^{(k)}_l)$, and then some $j$
such that $p^{(k)}_l\geq 1\wedge q_j\geq 1\wedge q_j-1\sqsubset p^{(k)}_l-1\wedge x\in\alpha'(q_j)$.
Also compute $r\in\Bairespc$ (similarly to before) by
$r_i\in\FS(\id_{\N})^{-1}(\{k'\})$ for
\[k'=
\pi^{(3)}_1\leastmu{\langle k,l,m\rangle}{p^{(k)}_l\geq 1\wedge h(m)=\langle q_i,p^{(k)}_l-1\rangle}.
\]
\end{rem}
\begin{rem}
Returning to the assumption of computability of $V'$, it is not too hard to see
(modifying the proof of \cite[Prop 4.1(iii)$\implies$(iv)]{zerodpaper}) conditions under which this will
follow.  Namely, if the effectively zero-dimensional computable metric space $X$ has computable basis
numbering $b:\N\to\Delta^0_1(X)$ and c.e.~refined inclusion $\sqsubset'$ of $b$ with respect to
$\alpha$, we might suppose
$X$ has the \emph{effective covering property} of $b$ with respect to
$b$ (cf.~\cite[Defn 2.6(1)]{BrattkaPresser}) and that
$A_b:=\{a\in\N\setconstr b(a)=\emptyset\}$ is c.e., i.e.~suppose
\[
H:=\{\langle a,\langle w\rangle\rangle\setconstr a\in\N, w\in\N^{*} \text{ and }
b(a)\subseteq\bigcup_{i<\absval{w}}b(w_i)\}\onespace\text{ is c.e.}
\]
This property holds for example for the Baire space with the
usual metric and $b=\alpha$, as well as for any computably compact effectively zero-dimensional computable
metric space, and it implies computability of $V'$ as mentioned.
%
Namely, on input $\langle p^{(0)},\dots\rangle\in (\delta_{\Sigma^0_1(X)}^{\ordomega})^{-1}\{(U_i)_i\}$ we
compute some $q^{(i)}\in\Bairespc$ such that $U_i=\bigcup_j b(q^{(i)}_j)$, then compute
$W^{(i)}_k:=b(q^{(i)}_k)\setminus\bigcup_{j<k}b(q^{(i)}_j)\in\Delta^0_1(X)$ ($i,k,\in\N$).  We have
$\dot{\bigcup}_k W^{(i)}_k=U_i$ and the property $W^{(i)}_k=\emptyset$ is decidable in $i,k$, using
$W^{(i)}_k=\emptyset\iff \langle q^{(i)}_k,\langle q^{(i)}\restrict k\rangle\rangle\in H$ to semi-decide
emptiness.
\end{rem}

We will now give a result in the converse direction to
Theorem \ref{thm:a}, thus establishing that computability of $E'$ (in a
computable metric space with more than one point) is an equivalent
condition for the uniform effective zero-dimensionality of
$\mathcal{Y}$; here recall:
\[
E':\subseteq F\crsarr \Cont_{\Pi^0_1(X)}(X,X),(A,B)\mapsto
\{f\setconstr \img f=B\wedge f|_B=\id_B\wedge\dom f=A\}
\]
($\dom E'=\{(A,B)\in F\setconstr B\neq X\}
=\{(A,B)\in\mathcal{Y}\times\mathcal{A}(X)\setconstr \emptyset\neq B
\subseteq A\wedge B\neq X\}$).
We note the proof relies on the material in \cite[\S 8]{zerodpaper} on
bilocated subsets; we refer the reader there (or to the treatment of
this topic in \cite{TroelstravanD}, \cite{BishopBridges} as
constructive mathematics) for the details.
\begin{prop}\label{prn:aconverse}
Suppose $(X,d,\nu_0)$ a computable metric space with $\card X\geq 2$,
$\mathcal{Y}\subseteq\{A\in\mathcal{A}(X)\setconstr \dim A\leq 0\}$ is a represented class with
$\delta_{\mathcal{Y}}\leq\delta_{\Pi^0_1(X)}|^{\mathcal{Y}}$, that
$\gamma:X\times\mathcal{Y}\to\mathcal{Y},(x,Y)\mapsto Y\union\{x\}$ is well-defined and computable, and that
\[
\beta:X\times\N\times\mathcal{Y}\to\mathcal{K}(X),(x,n,Y)\mapsto Y\cap\clball{x}{2^{-n}}
\]
is well-defined and $(\delta_X,\delta_{\N},\delta_{\mathcal{Y}};\deltamc)$-computable.
If also $E'$ is $(\delta_5;[\delta_X\to\delta_X]_{\Pi^0_1(X)})$-computable
then
\[
M:\subseteq X\times\Sigma^0_1(X)\times\mathcal{Y}\crsarr\Sigma^0_1(X)^2,
(x,U,Y)\mapsto\{(V,W)\setconstr x\in V\wedge V\subseteq U\wedge Y\subseteq V\disjunion W\}
\]
($\dom M=\{(x,U,Y)\setconstr x\in U\}$) is computable.
\end{prop}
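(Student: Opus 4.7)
The plan is to reduce $M$ to $E'$ by building a retraction $f \colon A \to B$ via $E'(A,B)$ and extracting $(V,W)$ from the induced clopen partition of $A$. I would take $A := \gamma(x,Y) = Y \cup \{x\} \in \mathcal{Y}$ and select a rational $r > 0$ with $\clball{x}{r} \subseteq U$ (semi-decidable from the $\delta_X$-name of $x$ and the $\Sigma^0_1$-name of $U$). Setting $B := \{x\} \cup (A \setminus \ball{x}{r})$, one has $B$ closed in $X$, $B \subseteq A$, $x \in B$, and $B \supseteq A \setminus U \supseteq Y \setminus U$. Since $f|_B = \id_B$, every $y \in Y \setminus U \subseteq B$ is fixed by $f$, forcing $f^{-1}\{x\} \cap Y \subseteq Y \cap U$.

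The central technical step is computing a $\delta_5$-name of $(A,B)$. The $\delta_{\mathcal{Y}}$-name of $A$ is supplied by $\gamma$. For the $\delta_{\textrm{dist}}^{>}$-information on $B$, $d_B(y) = \min\{d(y,x),\,d_{A\setminus\ball{x}{r}}(y)\}$ is lower semi-computable: $d_{A\setminus\ball{x}{r}}(y) > 2^{-n}$ iff $\beta(y,n,A) \subseteq \ball{x}{r}$, and the latter compact-in-open inclusion is semi-decidable using the $\deltamc$-name from $\beta$. For the $\deltar$-information on $B$ (a computable dense sequence of points of $B$), interleave $x$ with points drawn from the computable dense sequences of each compact $\beta(\nu_0(i),n,A)$ that has been verified to satisfy $\beta(\nu_0(i),n,A) \subseteq X \setminus \clball{x}{r}$ (again a semi-decidable compact-in-open inclusion); all emitted points then lie in $A \setminus \ball{x}{r} \subseteq B$, and density in $A \setminus \ball{x}{r}$ follows from density of $\nu_0$ in $X$ (with minor care for the boundary $\{y \in A \setconstr d(y,x) = r\}$, dealt with by allowing a slightly smaller cutoff $r' < r$). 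Applying $E'$ to this name yields $f$ in $[\delta_X \to \delta_X]_{\Pi^0_1(X)}$-form.

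To extract $(V,W)$, apply the corollary to Lemma~\ref{lem:preimg} to $f$ and the closed sets $\{x\},\,B\setminus\{x\} \in \Pi^0_1(X)$, obtaining $\Pi^0_1(X)$-names of the disjoint closed sets $C_1 := f^{-1}\{x\}$ and $C_2 := f^{-1}(B \setminus \{x\})$, with $A = C_1 \disjunion C_2$. Compute disjoint opens $V_2 \supseteq C_1$, $W_2 \supseteq C_2$ via $t_4$, and output $V := V_2 \cap U$ and $W := W_2$. The conditions $x \in V$, $V \subseteq U$, and $V \cap W = \emptyset$ are routine; for $Y \subseteq V \cup W$, note $V_2 \cap A = C_1$ (since $V_2 \cap W_2 = \emptyset$ and $A = C_1 \disjunion C_2$), so $Y \cap V_2 \subseteq C_1 \cap Y \subseteq Y \cap U \subseteq V$, while $Y \setminus V_2 \subseteq W_2 = W$. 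The principal obstacle is ensuring $(A,B) \in \dom E'$, i.e.\ $B \neq X$; this can fail only when $A = X$ and $x$ is isolated at scale $r$, which is handled by running in parallel, using $\card X \geq 2$ and density of $\nu_0$, a search for $z_0 \neq x$ with $d(x,z_0) < r$ (which witnesses $z_0 \in A \setminus B \neq \emptyset$) alongside a direct trivial construction of $(V,W)$ if $x$ is instead detected as isolated.
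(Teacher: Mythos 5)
There is a genuine gap at the step you yourself call central: computing the $\deltar\sqcap\delta_{\textrm{dist}}^{>}$-part of the $\delta_5$-name of $B:=\{x\}\cup(A\setminus\ball{x}{r})$ for a cutoff $r$ chosen only so that $\clball{x}{r}\subseteq U$. Your lower-semicomputation of $d_B$ is fine for this $B$ (compact-in-open via $\beta$ is indeed semi-decidable), but your $\deltar$-enumeration is not: the points you emit all lie in $\{x\}\cup(A\setminus\clball{x}{r})$, and their closure is $\{x\}\cup\cl{A\setminus\clball{x}{r}}$, which can be a \emph{proper} subset of $B$ --- e.g.\ an isolated point of $A$ on the sphere $\{y\setconstr d(y,x)=r\}$ belongs to $B$ but is never approximated, since no compact $\beta(\nu_0(i),m,A)$ containing it can be verified to lie in $X\setminus\clball{x}{r}$. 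Then the two halves of your name describe different closed sets and the input to $E'$ is simply not a valid $\delta_5$-name, so $E'$ gives no guarantees. Passing to a ``slightly smaller cutoff $r'<r$'' does not help: the same mismatch reappears at the sphere of radius $r'$ (and if you instead redefine $B$ with the closure, the distance side breaks, because ``compact $\cap$ open $=\emptyset$'', equivalently compact $\subseteq\clball{x}{r}$, is not semi-decidable from $\deltamc$-information). What is really needed is a \emph{computable choice of radius at which the cut through $A$ is clean}, and this is exactly what the paper's proof does: it computes $\alpha_0<\alpha_1<2^{-n}$ such that $\cl{K\cap\ball{x}{\alpha_j}}=K\cap\clball{x}{\alpha_j}$ and $K\setminus\ball{x}{\alpha_j}=\cl{K\setminus\clball{x}{\alpha_j}}$ for the compact $K=\beta(x,n,Y)$ (the bilocatedness machinery of \cite[\S 8]{zerodpaper}), takes the \emph{compact} retract target $(d(x,\cdot)^{-1}[\alpha_0,\alpha_1]\cap Y)\cup\{x\}$, and obtains both name components at once from a $\deltamc$-name via $\deltamc|^{\mathcal{K}^{*}(X)}\leq(\deltar\sqcap\delta_{\textrm{dist}}^{>})|^{\mathcal{K}^{*}(X)}$. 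This radius-selection step, which uses the compactness supplied by $\beta$ in an essential way, is the missing idea; it cannot be dismissed as minor boundary care.

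Two further points. First, your fallback for $B\neq X$ relies on ``detecting'' that $x$ is isolated at scale $r$, but isolation is a co-c.e.-type condition and is not semi-decidable from the available names, so as written the parallel search is not a valid algorithm (one could repair it by instead searching for a verified gap $\beta(x,n,A)\subseteq\ball{x}{s}$ with $s\leq 2^{-n}\leq r$ and outputting $(\ball{x}{s},\,d(x,\cdot)^{-1}(2^{-n},\infty))$, but note the paper avoids the whole issue by choosing $2^{-n}\leq d(x,y)$ for some witness $y\neq x$, so that its retract target, being contained in $\clball{x}{\alpha_1}$, automatically omits $y$). Second, your extraction of $(V,W)$ from the retraction via preimages and $t_4$ is essentially sound (modulo intersecting the $\Pi^0_1(X)$-names produced by $a_{\mathcal{Z}}$ with $A$ to ensure genuine disjointness before applying $t_4$), and is close in spirit to the paper's use of $D=\clball{x}{\alpha_1}\cap h^{-1}\{x\}=\ball{x}{\alpha_0}\cap h^{-1}\ball{x}{\alpha_0}$ followed by $t_4$; the difficulty is entirely upstream, in legitimately invoking $E'$.
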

\begin{proof}
Assuming $x\in U$, from inputs
$p\in\delta_{\mathcal{Y}}^{-1}\{Y\}$, $q\in\delta_X^{-1}\{x\}$,
$r\in\delta_{\Sigma^0_1(X)}^{-1}\{U\}$
we can (multi-valuedly in terms of $(Y,x,U)$ but single-valuedly in
terms of $\langle p,q,r\rangle$) computably determine some $n\in\N$
such that $(\exists y\in X)d(x,y)\geq 2^{-n}$,
$\ball{x}{2^{-n}}\subseteq U$ and $\alpha_0,\alpha_1\in\R$ such that
$0<\alpha_0<\alpha_1<2^{-n}$ and
\[
\cl{K\cap\ball{x}{\alpha_j}}=K\cap\clball{x}{\alpha_j}\wedge
K\setminus\ball{x}{\alpha_j}=\cl{K\setminus\clball{x}{\alpha_j}}
\]
for $K=\beta(x,n,Y)$ and both $j<2$; see
\cite[Proof of Prop 8.5]{zerodpaper}.
For $B:=d(x,\cdot)^{-1}[\alpha_0,\alpha_1]$, or rather from a
$\deltamc$-name of $B\cap K=B\cap Y$ (which is computable from the inputs), we can find some
$h\in E'(Y\union\{x\},(B\cap K)\union\{x\})$; note here we use $(B\cap K)\union\{x\}\neq X$ ($\ni y$).  More
precisely, suppose $P:\subseteq\Bairespc\times\Bairespc\to\Bairespc$
is a computable $(\delta,\delta;\delta)$-realizer of binary union on
$\mathcal{A}(X)$, where
$\delta=\deltar\sqcap\delta_{\textrm{dist}}^{>}$, $J,J',J''$ are respective witnesses of $\deltamc|^{\mathcal{K}^{*}(X)}\leq
(\deltar\sqcap\delta_{\textrm{dist}}^{>})|^{\mathcal{K}^{*}(X)}$
(this follows from Thm 4.12, Prop 4.2(1), Thm 4.10(1), Thm 3.9(1) in
\cite{BrattkaPresser}),
$\deltamc\leq\delta_{\Pi^0_1(X)}|^{\mathcal{K}(X)}$ and $\delta_{\mathcal{Y}}\leq\delta_{\Pi^0_1(X)}|^{\mathcal{Y}}$,
$I$ is a computable $(\delta_X,\deltamc)$-realizer of $\iota:X\to
\mathcal{K}(X)$, and
$\tilde{E},G$ are computable realizers of $E'$ and $\gamma$.  Then
letting
$F,F',H,H',\tilde{B},B':\subseteq\Bairespc\times\Bairespc\to\Bairespc$ be computable realizers of:
\begin{enumerate}
\item
$a_{\mathcal{Z}}:\Cont_{\mathcal{Z}}(X,Y)\times\Pi^0_1(Y)\crsarr\Pi^0_1(X)$,
$v_{\mathcal{Z}}:\Cont_{\mathcal{Z}}(X,Y)\times\Sigma^0_1(Y)\crsarr\Sigma^0_1(X)$ from Lemma \ref{lem:preimg} and its
corollary (here with $Y=X$, $\mathcal{Z}=\Pi^0_1(X)$);
\item $\cap:\Pi^0_1(X)^2\to\Pi^0_1(X)$, $\cap:\Sigma^0_1(X)^2\to\Sigma^0_1(X)$;
\item $\subseteq X\times\R\to\Pi^0_1(X),(y,R)\mapsto\clball{y}{R}$ and
$\subseteq X\times\R\to\Sigma^0_1(X),(y,R)\mapsto\ball{y}{R}$,
\end{enumerate}
we find any
\begin{gather*}
\tilde{p}\in\deltamc^{-1}\{B\cap Y\},\onespace
\tilde{q}:=\tilde{E}\langle G\langle q,p\rangle,P(J\tilde{p},(J\compose I)(q))\rangle,\\
\tilde{r}:=H\left(\tilde{B}(q,p^{(1)}),H(F(\tilde{q},(J'\compose I)(q)),(J''\compose G)\langle q,p\rangle)
\right)\:\text{ and }\\
\tilde{s}:=H'\left(B'(q,p^{(0)}),F'(\tilde{q},B'(q,p^{(0)}))\right),
\end{gather*}
where we pick $t^{(j)}\in\rho^{-1}\{\alpha_j\}$ ($j<2$), have
\[
h:=[\delta_X\to\delta_X]_{\Pi^0_1(X)}(\tilde{q}) \in
E'\left( Y\union\{x\},(B\cap Y)\union\{x\} \right).
\]
Also, any $z\in\clball{x}{\alpha_1}\cap h^{-1}\{x\}$ has either
$z\in B\cap h^{-1}\{x\}$ ($=\emptyset$ since $\dom h=Y\union\{x\}$,
$h|_{(Y\union\{x\})\cap B}=h|_{B\cap Y}=\id_{B\cap Y}$ and
$B\not\in x$), or $z\in \ball{x}{\alpha_0}\cap
h^{-1}\ball{x}{\alpha_0}$.  Conversely, if $z$ lies in the latter set,
\(h(z)\in \ball{x}{\alpha_0}\cap\left( (B\cap Y)\union\{x\} \right)
=\{x\}\) so
\(\ball{x}{\alpha_0}\cap h^{-1}\ball{x}{\alpha_0}\subseteq
\clball{x}{\alpha_1}\cap h^{-1}\{x\}\).  Thus
\[
D:=\clball{x}{\alpha_1}\cap h^{-1}\{x\}=\ball{x}{\alpha_0}\cap
h^{-1}\ball{x}{\alpha_0}
=\delta_{\Pi^0_1}(\tilde{r})=\delta_{\Sigma^0_1}(\tilde{s})\cap\dom h
\in\Delta^0_1(Y\union\{x\}).
\]
In particular,
$Y\union\{x\}\subseteq\delta_{\Sigma^0_1}(\tilde{r})\union
\delta_{\Pi^0_1}(\tilde{r}) \subseteq
\delta_{\Sigma^0_1}(\tilde{r})\union\delta_{\Sigma^0_1}(\tilde{s})$,
and if
$C:=(Y\union\{x\})\setminus D$ we have $C\cap D=\emptyset$ and
$D\subseteq\clball{x}{\alpha_1}\subseteq U$.  Then picking
$V,W\in\Sigma^0_1(X)$ such that $(V,W)\in t_4(D,C)$ (by computability of $t_4$), we can define $V':=V\cap U$ and get \(x\in D\subseteq
V'\subseteq U\wedge Y\union\{x\}=D\union C\subseteq V\disjunion W\).  But this shows $M$ is computable.
\end{proof}
Combining Theorem \ref{thm:a}, Proposition \ref{prn:aconverse} and results in Section \ref{sec:zdsubsets}, we
have the following
\begin{cor}
If $X$ is a computable metric space, \(\mathcal{Y}\subseteq
\{A\in\mathcal{A}(X)\setconstr\dim A\leq 0\}\) with
$\delta_{\mathcal{Y}}\leq\delta_{\Pi^0_1(X)}|^{\mathcal{Y}}$,
$\gamma:X\times\mathcal{Y}\to\mathcal{Y},(x,Y)\mapsto Y\union\{x\}$
well-defined and computable, and $\beta$ well-defined and computable,
then $E'$ computable iff $(\mathcal{Y},\delta_{\mathcal{Y}})$ uniformly effectively zero-dimensional.
\end{cor}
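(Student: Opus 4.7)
The plan is to combine Theorem \ref{thm:a}, Proposition \ref{prn:aconverse}, and the chain of equivalences between computability of the operations $N,M,B,S,\tilde{S}$ established in Section \ref{sec:zdsubsets} (as summarized in Remark \ref{rem:unifzd}). Recall that by definition $(\mathcal{Y},\delta_{\mathcal{Y}})$ being uniformly effectively zero-dimensional means $S$ is computable, which under our hypothesis $\delta_{\mathcal{Y}}\leq\delta_{\Pi^0_1(X)}|^{\mathcal{Y}}$ is equivalent to each of the analogous computability statements for $N$, $M$, $B$ and $\tilde{S}$. So the corollary is essentially a synthesis statement rather than a source of new content.

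For the forward direction I would assume $(\mathcal{Y},\delta_{\mathcal{Y}})$ is uniformly effectively zero-dimensional, invoke the implication ($S$ computable $\Rightarrow$ $\tilde{S}$ computable) from Section \ref{sec:zdsubsets} (this is the easier half of the chain, noted already in Remark \ref{rem:unifzd}), and then feed $\tilde{S}$ into Theorem \ref{thm:a} to conclude that $E'$ is $(\delta_5;[\delta_X\to\delta_X]_{\Pi^0_1(X)})$-computable. No further calculation is required.

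For the converse I would start from the assumption that $E'$ is computable and apply Proposition \ref{prn:aconverse}; the hypotheses on $\gamma$ and $\beta$ demanded there are exactly those assumed in the corollary, so the conclusion is immediate: $M$ is computable. Then I would invoke the reverse implication from Section \ref{sec:zdsubsets} (computability of $M$ $\Rightarrow$ computability of $S$) to conclude that $(\mathcal{Y},\delta_{\mathcal{Y}})$ is uniformly effectively zero-dimensional, finishing the cycle.

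The only real bookkeeping issue is that Proposition \ref{prn:aconverse} is stated with the extra assumption $\card X\geq 2$, whereas the corollary has no such restriction. I would dispose of the residual cases $\card X\in\{0,1\}$ separately: the closed subsets of $X$ then form a trivial (at most two element) collection, every total operation into such a $\mathcal{Y}$ is computable, and both sides of the stated equivalence hold vacuously. This is not a substantive obstacle but it does need to be mentioned explicitly, since otherwise the appeal to Proposition \ref{prn:aconverse} is technically gapped. The genuine mathematical content --- the detour through Dugundji systems with pairwise disjoint sets in Theorem \ref{thm:a}, the bilocation argument of Proposition \ref{prn:aconverse}, and the equivalences of Section \ref{sec:zdsubsets} --- has already been done; the corollary is simply their composition.
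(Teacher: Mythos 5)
Your proposal coincides with the paper's (essentially unwritten) proof: the corollary is presented there as an immediate combination of Theorem \ref{thm:a}, Proposition \ref{prn:aconverse} and the equivalences of Section \ref{sec:zdsubsets}, which is exactly your composition, and your explicit disposal of the degenerate case $\card X\leq 1$ (not mentioned in the paper) is a harmless, indeed welcome, addition. One small correction of attribution: the implication ($S$ computable $\Rightarrow$ $\tilde{S}$ computable) is not the direction recorded in Remark \ref{rem:unifzd} (which notes $B\Rightarrow\tilde{S}\Rightarrow S$); it is obtained by closing the cycle $S\Rightarrow N_0\Rightarrow N\Rightarrow M\Rightarrow B\Rightarrow\tilde{S}$ using the reductions $N_0\leqsW S$ and $N\leqW N_0$ of Section \ref{sec:zdsubsets} together with the implications from \cite[\S 4]{zerodpaper}, so your forward direction is valid but rests on the whole chain rather than on an ``easier half''.
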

\section{An analogue in nonarchimedean analysis}\label{sec:fourth}
In this section we establish the result (Theorem \ref{thm:b}) on computing a retraction given an arbitrary
Dugundji system (and some additional data), mentioned in the introduction.  We first introduce concepts of
nonarchimedean analysis.  Several representations of the $p$-adic numbers
$\Omega_p$ ($p$ a prime) are introduced and studied already in \cite{KapoulasThesis}, in particular with complexity considerations in mind; though we do not make explicit use of that material we will use the Cauchy
representation of $\Omega_p$ obtained by introducing a computable metric structure compatible with the field
operations.  More precisely recall (from \cite{Schikhof}) a valued field is a pair $(K,\absval{\cdot})$ where $K$
is a field and $\absval{\cdot}:K\to\R$ is a \emph{valuation}, i.e.~satisfying $\absval{x}\geq 0$,
$(\absval{x}=0\iff x=0_K)$ for all $x\in K$, $\absval{x+y}\leq\absval{x}+\absval{y}$ and
$\absval{x y}=\absval{x}.\absval{y}$ for all $x,y\in K$.  Any valuation induces a metric on $K$ by
$d(x,y):=\absval{x-y}$.  In particular, $\Omega_p$ is here introduced as the completion of the metric space
given by the following valuation $\absval{\cdot}_p$ on $\Q$:
\[
\absval{x}_p:=\begin{cases}
p^{-n}, &\mbox{\textrm{ if $(\exists n\in\Z)(\exists s,t\in\Z\setminus\{0\})(p\not\;\divides s\wedge
p\not\;\divides t\wedge x=p^n\frac{s}{t})$,}}\\
0,&\mbox{\textrm{ if $x=0$}}.
\end{cases}
\]
As such, a standard numbering $\nu_{\Q}:\N\to\Q\subseteq\Omega_p$ serves as the sequence of our computable metric structure (one checks
$\Q\times\Q\to\Q,(r,s)\mapsto\absval{r-s}_p$ is well-defined and $([\nu_{\Q},\nu_{\Q}],\nu_{\Q})$-computable).  From now on we will consider
primarily $K=\Omega_p$
with the Cauchy representation $\delta_K$.  Recall or note that in fact $\absval{x+y}\leq\max\{\absval{x},\absval{y}\}$ for all $x,y\in K$
(strong triangle inequality); we say that $K$ is a \emph{non-archimedean} valued field.  We now briefly go over the construction of computable
realizers for the field operations in $K$.  Here we denote $K^{\times}:=K\setminus \{0_K\}$ and fix some $e_0\in\nu_{\Q}^{-1}\{0\}$,
$e_1\in\nu_{\Q}^{-1}\{1\}$.
\begin{prop}
The field operations on $K=\Omega_p$ are computable with respect to
$\delta_K$.  Also, $\iota:\N\to K$ is
$(\delta_{\N},\delta_K)$-computable and $\absval{\cdot}:K\to\R$ is $(\delta_K,\rho)$-computable.
\end{prop}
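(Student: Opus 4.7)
The plan is to verify each claim via standard Cauchy-representation constructions for computable metric spaces, exploiting throughout the ultrametric inequality $\absval{x+y}_p \leq \max\{\absval{x}_p, \absval{y}_p\}$. The inclusion $\iota$ is realised by $n \mapsto m.0^{\ordomega}$ where $m \in \nu_{\Q}^{-1}\{n\}$ is chosen computably from $n$. For addition, given $\delta_K$-names $p, q$ of $x, y$ I would output $r$ with $\nu_{\Q}(r_i) := \nu_{\Q}(p_{i+1}) + \nu_{\Q}(q_{i+1})$; the ultrametric inequality then gives
\[
\absval{\nu_{\Q}(r_i) - \nu_{\Q}(r_j)}_p \leq \max\{\absval{\nu_{\Q}(p_{i+1}) - \nu_{\Q}(p_{j+1})}_p, \absval{\nu_{\Q}(q_{i+1}) - \nu_{\Q}(q_{j+1})}_p\} < 2^{-\min\{i,j\}-1}
\]
and $\nu_{\Q}(r_i) \to x+y$ at rate $2^{-(i+1)}$, so $r \in \delta_K^{-1}\{x+y\}$. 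Additive inversion is analogous.

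Multiplication requires a computable upper bound on the valuations: I would extract $M := \max\{1, \absval{\nu_{\Q}(p_0)}_p, \absval{\nu_{\Q}(q_0)}_p\}$, which dominates $\max\{\absval{x}_p, \absval{y}_p\}$ via $\absval{x - \nu_{\Q}(p_0)}_p \leq 1$ and the ultrametric inequality, then use the telescoping estimate
\[
\absval{xy - \nu_{\Q}(p_j)\nu_{\Q}(q_j)}_p \leq \max\{\absval{x}_p, \absval{\nu_{\Q}(q_j)}_p\} \cdot 2^{-j} \leq M \cdot 2^{-j}
\]
to show that $\nu_{\Q}(r_i) := \nu_{\Q}(p_{i+c}) \cdot \nu_{\Q}(q_{i+c})$ gives a valid $\delta_K$-name of $xy$ for a suitable shift $c \in \N$ computable from $M$. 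Multiplicative inversion on $K^{\times}$ is the subtlest step: given $p$ representing $x \neq 0$, I would dovetail-search for the least $j$ with $\absval{\nu_{\Q}(p_j)}_p > 2^{-j}$, which must occur since $\absval{\nu_{\Q}(p_j)}_p \to \absval{x}_p > 0$. At this $j$, the equality case of the ultrametric inequality forces $\absval{x}_p = \absval{\nu_{\Q}(p_j)}_p$ exactly, yielding a computable lower bound on $\absval{x}_p$ and nonvanishing $\nu_{\Q}(p_k)$ for all $k \geq j$; the estimate $\absval{x^{-1} - \nu_{\Q}(p_k)^{-1}}_p \leq 2^{-k}/\absval{x}_p^2$ then yields the desired $\delta_K$-name via a further index shift.

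Finally, for $\absval{\cdot}_p : K \to \R$, at precision level $n$ I would inspect $c := \absval{\nu_{\Q}(p_{n+1})}_p \in \Q$: if $c > 2^{-(n+1)}$ the ultrametric inequality forces $\absval{x}_p = c$ exactly, so I output $c$; otherwise $\absval{x}_p \leq \max\{c, 2^{-(n+1)}\} \leq 2^{-(n+1)}$ and $0$ lies within $2^{-n}$ of $\absval{x}_p$. The main subtlety throughout is ensuring the output sequence itself satisfies the Cauchy-fast condition built into $\delta_K$ rather than merely converging to the correct limit; the ultrametric inequality makes this trivial for sums but necessitates the index shifts above for products and inverses.
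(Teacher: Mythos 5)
Your proof is correct; every estimate checks out (the bound $M\geq\max\{\absval{x}_p,\absval{y}_p\}$ from the zeroth entries, the telescoping product estimate, the termination of the search for $j$ with $\absval{\nu_{\Q}(p_j)}_p>2^{-j}$, and the resulting exact identity $\absval{x}_p=\absval{\nu_{\Q}(p_j)}_p$), and your index shifts do keep the output sequences strictly fast-Cauchy. The route differs from the paper's in several specifics. The paper proceeds modularly: it first gets $+$ and $-$ from $1$-Lipschitzness plus computability on ideal points, then proves $\absval{\cdot}:K\to\R$ computable via the reverse triangle inequality $\absval{\absval{x}_p-\absval{\nu_{\Q}(q_i)}_p}\leq 2^{-i}$ (packaged through a small lemma converting $2^{-i}$-approximations into Cauchy names), and only then handles $\cdot$ and $\cdot^{-1}$ by using the computed $\rho$-names of $\absval{x},\absval{y}$ to search for rational moduli $\delta,\eta$ (for multiplication) and $\delta<\absval{x}$ (for inversion); finally $\iota$ is realized by iterating the addition realizer, $n\mapsto n\cdot 1_K$. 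You instead avoid invoking computability of $\absval{\cdot}$ as a subroutine: for multiplication you extract an a priori rational bound $M$ directly from $p_0,q_0$, and for inversion (and for $\absval{\cdot}$ itself) you exploit the equality case of the strong triangle inequality to read off $\absval{x}_p$ \emph{exactly} from a single sufficiently good rational approximation — a genuinely nonarchimedean shortcut that has no analogue in the paper's argument, whose searches would also work over $\R$ or $\mathbb{C}$. Your treatment of $\iota$ via the constant name $m.0^{\ordomega}$ with $\nu_{\Q}(m)=n$ is also simpler than the paper's inductive use of the addition realizer, and is legitimate since $n\cdot 1_K$ is the rational $n$ under $\Q\subseteq\Omega_p$. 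In short: same genre of argument (explicit realizers on fast Cauchy names of rationals), but your version trades the paper's modularity and archimedean-robust searches for ultrametric exactness and self-contained shift-based realizers; both are valid.
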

\begin{proof}
The proof loosely follows \cite[Ex 1.B]{Schikhof}.  Firstly $-:K\to K$ is $1$-Lipschitz, and observe
$-(\Q)\subseteq\Q$ with $-:\Q\to\Q$ being $(\nu_{\Q},\nu_{\Q})$-computable.  So $-:K\to K$ is computable.
Similarly $+:K\times K\to K$ is $1$-Lipschitz using the maximum metric on $K\times K$ (and the strong triangle
inequality), while $\Q+\Q\subseteq\Q$ and $+:\Q\times\Q\to\Q$ is $([\nu_{\Q},\nu_{\Q}],\nu_{\Q})$-computable,
so $+:K\times K\to K$ is computable (recall also that $\delta_{X\times X}\equiv[\delta_X,\delta_X]$).  If
$g\in\trfns$ is a $(\nu_{\Q},\nu_{\Q})$-realizer of $\absval{\cdot}_p|_{\Q}$ (using the formula above), for
$q\in\delta_K^{-1}\{x\}$ and $i\in\N$ we can note
\[
2^{-i}\geq\absval{x-\nu_{\Q}(q_i)}_p \geq
\absval{\absval{x}_p-\absval{\nu_{\Q}(q_i)}_p}=\absval{\absval{x}_p-(\nu_{\Q}\compose g)(q_i)}.
\]
Then $F:\subseteq\Bairespc\times\N\to\N,(q,i)\mapsto g(q_i)$ witnesses computability of
$\absval{\cdot}:K\to\R$ according to the following
\begin{lem}\label{lem:trivlemone}
If $(X,\delta_X)$ is a represented space and $(Y,d,\lambda)$ a computable metric space with
$f:\subseteq X\to Y$ having some computable $F:\subseteq\Bairespc\times\N\to\N$ such that
\[
(\forall p\in\delta_X^{-1}\dom f)(\forall i)d((f\compose\delta_X)(p),(\lambda\compose F)(p,i))<2^{-i},
\]
then $f$ is $(\delta_X,\delta_Y)$-computable where $\delta_Y$ is the Cauchy representation.
\end{lem}
To apply Lemma \ref{lem:trivlemone} to see $\cdot:K\times K\to K$ is computable (assuming $\rho$ is the usual
Cauchy representation of $\R$), on input $\langle q,r\rangle\in\dom[\delta_K,\delta_K]$ and $i\in\N$ we
compute $\delta,\eta\in\Qplus$ such that
\begin{equation}\label{eq:multeps}
\absval{y}.\delta<\epsilon \wedge (\absval{x}+\delta).\eta<\epsilon
\end{equation}
for $\epsilon=2^{-i}$, $x=\delta_K(q)$, $y=\delta_K(r)$; then $k,l\in\N$ such that $2^{-k}\leq\delta$ and
$2^{-l}\leq\eta$; then finally pick $F(\langle q,r\rangle,i)\in \nu_{\Q}^{-1}\{\nu_{\Q}(q_k).\nu_{\Q}(r_l)\}$.
Here (\ref{eq:multeps}) ensures any $z\in\ball{x}{\delta}$,
$w\in\ball{y}{\eta}$ have
\[
\absval{x y-z w}\leq\max\{\absval{x y-z y},\absval{z y-z w}\} \leq
\max\{\absval{y}.\delta,\absval{z}.\eta\}<\epsilon
\]
(since $\absval{z}\leq\absval{x}+\delta$).  Testing the inequalities can be made effective by using
(if $V:\subseteq\Bairespc\to\Bairespc$ is a computable realizer of
$\absval{\cdot}:K\to\R$) the equivalence
\[
\absval{y}.\delta=(\rho\compose V)(r).\delta<\epsilon \iff
(\exists m)(\nu_{\Q}(V(r)_m)+2^{-m}).\delta<\epsilon,
\]
similarly for the second inequality.

Similarly, to see $\cdot^{-1}:\subseteq K\to K$ is computable (with natural domain), we are instead given
$q\in\delta_K^{-1}K^{\times}$ and $i\in\N$, so compute $\delta\in\Qplus$ such that
\begin{equation}\label{eq:inveps}
\delta<\absval{x}\wedge \frac{\delta}{\absval{x}-\delta}<\epsilon.\absval{x}
\end{equation}
for $\epsilon=2^{-i}$, $x=\delta_K(q)$; then compute $k\in\N$ such that $2^{-k}\leq\delta$ and let
$F(q,i)\in\nu_{\Q}^{-1}\{\nu_{\Q}(q_k)^{-1}\}$.  Note that (\ref{eq:inveps}) ensures any $y\in\ball{x}{\delta}$
has $\absval{y}\geq\absval{\absval{x}-\absval{x-y}}$, so
\[
\bigg\lvert\frac{1}{x}-\frac{1}{y}\bigg\rvert
=\frac{\absval{y-x}}{\absval{x}\absval{y}} <
\frac{\delta}{\absval{x}.\absval{y}} \leq
\frac{\delta}{\absval{x}.\absval{\absval{x}-\absval{x-y}}}=\frac{\delta}{\absval{x}.(\absval{x}-\absval{x-y})}
<\frac{\delta}{\absval{x}.(\absval{x}-\delta)}<\epsilon
\]
where we used $\absval{x}-\absval{x-y} > \absval{x}-\delta>0$.

Finally, note $e_0^{\ordomega}\in\delta_K^{-1}\{0_K\}$, $e_1^{\ordomega}\in\delta_K^{-1}\{1_K\}$, then define
$J(0.p):=e_0^{\ordomega}$, $J(1.p):=e_1^{\ordomega}$ and inductively $J((n+2).p):=P(J((n+1).p),J(1.p))$
($n\in\N$).  The function $J:\subseteq\Bairespc\to\Bairespc$ thus
defined is plainly a computable $(\delta_{\N},\delta_K)$-realizer of $\iota:\N\to K$, provided
$P:\subseteq\Bairespc^2\to\Bairespc$ is a computable realizer of $+:K\times K\to K$.
\end{proof}

Next, we follow \cite{Schikhof} in introducing ultrametrically convex sets.  In any ultrametric space $(X,d)$,
denote $[x,y]:=\clball{x}{d(x,y)}=\clball{y}{d(x,y)}$; this is the unique smallest-radius closed ball that
contains $\{x,y\}$.
\begin{proof}[Proof sketch]
We know by \cite[Prop 18.5]{Schikhof} that any closed balls $B_1,B_2$ with $B_1\cap B_2\neq\emptyset$ are
$\subseteq$-comparable.  If $B_1:=\clball{x}{d(x,y)}$, $B_2:=\clball{y}{d(x,y)}$ and $B_3:=\clball{z}{r}$ with
$\{x,y\}\subseteq B_3\subseteq B_1\cap B_2$, say without loss of generality
$B_1\subseteq B_2$, then any $w\in B_2$ has
\[
d(w,z)\leq\max\{d(w,y),d(y,z)\}\leq d(x,y)\leq\max\{d(x,z),d(z,y)\}\leq r
\]
so $B_3=B_1=B_2$ and $r\geq d(x,y)$.
\end{proof}
\begin{lem}(e.g.~\cite[Prop 24.2(i)]{Schikhof})
If $K$ is a non-archimedean valued field and $x,y,z\in K$ then $z\in [x,y]$ iff
$(\exists\lambda\in K)(\absval{\lambda}\leq 1\wedge z=\lambda x+(1-\lambda)y)$.
\end{lem}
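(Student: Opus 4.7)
The plan is to prove the two implications separately, noting that the construction of $\lambda$ is essentially forced by the requirement that $z = \lambda x + (1-\lambda) y$ can be rewritten as $z - y = \lambda(x - y)$.

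For the easier ($\Leftarrow$) direction, I would suppose $z = \lambda x + (1-\lambda)y$ with $|\lambda| \leq 1$. Then $z - y = \lambda(x - y)$, and applying the multiplicativity of the valuation gives $|z - y| = |\lambda| \cdot |x - y| \leq |x - y|$, so $z \in \clball{y}{|x-y|} = [x,y]$. (Symmetrically, $z - x = (\lambda - 1)(x - y)$ together with $|\lambda - 1| \leq \max\{|\lambda|, 1\} \leq 1$ gives $|z - x| \leq |x-y|$.)

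For the ($\Rightarrow$) direction, I would split on whether $x = y$. If $x = y$ then $[x,y] = \{x\}$, so $z = x$ and any $\lambda$ (e.g.\ $\lambda = 0$) suffices since $\lambda x + (1-\lambda) y = x = z$. If $x \neq y$, then $x - y \in K^{\times}$, so set $\lambda := (z - y)(x - y)^{-1} \in K$. A direct rearrangement gives $\lambda x + (1-\lambda) y = \lambda(x - y) + y = z$, and the hypothesis $z \in [x,y] = \clball{y}{|x-y|}$ yields $|\lambda| = |z - y|/|x - y| \leq 1$.

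There is no real obstacle here: the proof is essentially an algebraic identity together with multiplicativity of $|\cdot|$, and the only subtlety is handling the degenerate case $x = y$ (where division by $x - y$ is illegal but the conclusion is trivial). The strong triangle inequality is used only implicitly, via the fact (invoked just before the lemma in the excerpt) that $[x,y] = \clball{x}{|x-y|} = \clball{y}{|x-y|}$ is well-defined as a single ball; once this is in hand, the verification goes through over any valued field.
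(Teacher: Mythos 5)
Your proof is correct and follows essentially the same route as the paper: the ($\Leftarrow$) direction via $|z-y|=|\lambda|\,|x-y|\leq|x-y|$, and the ($\Rightarrow$) direction via $\lambda:=(z-y)(x-y)^{-1}$ with the algebraic rearrangement $\lambda x+(1-\lambda)y=z$. The only difference is that you make the degenerate case $x=y$ explicit (taking $\lambda=0$), whereas the paper simply assumes $x\neq y$ in the converse direction; this is a harmless and slightly more complete treatment.
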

\begin{proof}
If $\lambda\in K$ with $\absval{\lambda}\leq 1$ then $z:=\lambda x+(1-\lambda)y$ has
\[
\absval{z-y}=\absval{\lambda(x-y)}=\absval{\lambda}.\absval{x-y}\leq\absval{x-y} \implies
z\in\clball{y}{\absval{x-y}}=[x,y].
\]
Conversely, given
\[
z\in[x,y]=\clball{x}{\absval{x-y}}=\clball{y}{\absval{x-y}}
\]
where $x\neq y$, note $\lambda:=\frac{z-y}{x-y}$ has $1-\lambda=\frac{(x-y)-(z-y)}{x-y}=\frac{x-z}{x-y}$ and
$\absval{\lambda}=\frac{\absval{z-y}}{\absval{x-y}}\leq 1$ (the equality follows from $\lambda(x-y)=z-y$ and
the last property of valuations, plus the fact $K$ is a field with $x\neq y$).  Now
\[
\lambda x+(1-\lambda)y =
\frac{z-y}{x-y}z+\frac{x-z}{x-y}y=\frac{(z-y)x+(x-z)y}{x-y}=\frac{z(x-y)}{x-y}=z.
\]
\end{proof}
Consequently, by loose analogy with $\R$-vector spaces, a subset $C\subseteq X$ of ultrametric space $(X,d)$ is
\emph{ultrametrically convex} if $(\forall x,y\in C)([x,y]\subseteq C)$.
\begin{lem}(cf.~\cite[Prop 24.2(iv)]{Schikhof})
Fix an integer $n\geq 2$ and $C\subseteq K$ for a nonarchimedean valued field $K$.  Then $C$ satisfies
\[
\Cvx_n(C):\equiv (\forall w\in C^n)(\forall\alpha\in K^n) \left(
(\forall i\in[n])\absval{\alpha_i}\leq 1\wedge\sum_{i=1}^n\alpha_i=1_K\implies\sum_{i=1}^n\alpha_i w_i\in C
\right)
\]
iff it has the form $C=\{x\in K\setconstr\absval{x-a}\leq r\}$ or $C=\{x\in K\setconstr\absval{x-a}<r\}$ for
some $r\in [0,\infty]$, $a\in K$.
\end{lem}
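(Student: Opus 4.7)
The plan is to prove both directions around the standard fact that in a nonarchimedean field, $|\lambda|\le 1$ automatically gives $|1-\lambda|\le\max\{1,|\lambda|\}\le 1$, so that convex combinations with coefficients of norm $\le 1$ reduce cleanly to affine combinations $\lambda w+(1-\lambda)a$ with $|\lambda|\le 1$.

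For the easier direction (ball $\Rightarrow \Cvx_n$), suppose $C$ is one of the two ball shapes centred at $a$ of radius $r$, take $w_1,\dots,w_n\in C$ and $\alpha_1,\dots,\alpha_n\in K$ with $|\alpha_i|\le 1$ and $\sum_i\alpha_i=1$. Using $\sum_i\alpha_i=1$ I rewrite
\[
 \sum_{i=1}^n\alpha_i w_i - a \;=\; \sum_{i=1}^n\alpha_i(w_i-a),
\]
then apply the strong triangle inequality to bound the left side by $\max_i|\alpha_i|\cdot|w_i-a|\le\max_i|w_i-a|$, which is $\le r$ (and $<r$ in the open case, since a finite max of strict inequalities is strict). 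So $\sum_i\alpha_i w_i\in C$, establishing $\Cvx_n(C)$.

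For the forward direction, first dispose of $C=\emptyset$ by taking $r=0$ with the strict inequality (giving the empty set). Otherwise, fix any $a\in C$ and set $r:=\sup\{|x-a|\setconstr x\in C\}\in[0,\infty]$. If $r=0$ then $C=\{a\}=\clball{a}{0}$. If $r>0$, I will prove the two claims:
(i) $\{x\in K\setconstr|x-a|<r\}\subseteq C$, and
(ii) if the sup is attained by some $w_0\in C$ with $|w_0-a|=r$, then $\clball{a}{r}\subseteq C$.
Combined with the trivial inclusion $C\subseteq\clball{a}{r}$ (by definition of $r$), this forces $C$ to equal the open ball (when the sup is not attained, and in particular when $r=\infty$) or the closed ball (when it is attained).

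Both (i) and (ii) use the same construction: given the target $y$, choose $w\in C$ with $|w-a|>|y-a|$ (for (i), using the defining property of the sup) or $w=w_0$ with $|w-a|=|y-a|=r$ (for (ii)); put $\lambda:=(y-a)/(w-a)\in K$, so $|\lambda|=|y-a|/|w-a|\le 1$, and compute
\[
 \lambda w+(1-\lambda)a \;=\; a+\lambda(w-a) \;=\; y.
\]
The nonarchimedean inequality $|1-\lambda|\le\max\{1,|\lambda|\}\le 1$ makes the two coefficients admissible. To feed this into $\Cvx_n$ for the fixed $n\ge 2$, I pad when $n>2$ by setting $w_3=\dots=w_n:=a\in C$ and $\alpha_3=\dots=\alpha_n:=0$ (so $\sum_i\alpha_i=\alpha_1+\alpha_2=1$ and all constraints hold), concluding $y\in C$.

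The only subtlety, and the one step to watch carefully, is the attained-versus-unattained dichotomy for the sup $r$: in (i) I really need the strict inequality $|w-a|>|y-a|$ to get $\lambda$ into the open unit disc and complete the covering of the open ball without assuming the sup is realised, while in (ii) I use the boundary case $|\lambda|=1$ which is available precisely when $r$ is attained. Once both cases are handled the characterisation falls out, and the padding trick makes the argument uniform in $n\ge 2$.
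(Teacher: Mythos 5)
Your proposal is correct and follows essentially the same route as the paper: pad to two-term combinations, fix $a\in C$ and take $r=\sup\{\absval{x-a}\setconstr x\in C\}$, fill the ball with affine combinations $a+\lambda(w-a)$ where $\absval{\lambda}\leq 1$, split on whether the supremum is attained to get the closed versus open ball, and use the strong triangle inequality for the converse. The only cosmetic difference is that you inline the computation which the paper packages via the preceding segment lemma $[x,y]=\{\lambda x+(1-\lambda)y\setconstr \absval{\lambda}\leq 1\}$ and a translation-invariance step (and you additionally handle $C=\emptyset$ explicitly).
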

\begin{proof}
\textbf{($\mathbf{\implies}$):} First notice that $\Cvx_m(C)$ implies $\Cvx_n(C)$ whenever $m\geq n$ (since we
can set other parameters $\alpha_i$ equal to $0_K$).  Suppose then that $C\subseteq K$ is ultrametrically
convex and $x\in C$.  We know
\[
\ball{x}{r}\subseteq\bigcup\{[x,y]\setconstr y\in C\}\subseteq
\clball{x}{r}\cap C
\]
for $r:=\sup\{\absval{x-y}\setconstr y\in C\}$, where we allow zero or infinite radius
in the open and closed balls.  Of course, $C\subseteq \bigcup\{[x,y]\setconstr y\in C\}$ here, and if
$C\neq\ball{x}{r}$ there must exist $y\in C$ such that
$\absval{x-y}=r$, and then $[x,y]=\clball{x}{r}\subseteq C$,
so $C=\clball{x}{r}$.\\
\textbf{($\mathbf{\impliedby}$):} It is easy to see $C:=\clball{0_K}{r}$ satisfies $\Cvx_n(C)$: if
$w_1,\dots,w_n\in C$ and $\alpha_1,\dots,\alpha_n\in K$ with
$(\forall i)\absval{\alpha_i}\leq 1\wedge\sum_i\alpha_i=1_K$ then $y:=\sum_i\alpha_i w_i$ has
\[
\absval{y}\leq \max_{i\in[n]}\absval{\alpha_i w_i}\leq\max_{i\in[n]}\absval{w_i}\leq r
\]
by repeated use of
the strong triangle inequality.  Similarly the open ball $C=\ball{0_K}{r}$ satisfies $\Cvx_n(C)$.  We now
verify that $(\Cvx_n(C)\implies\Cvx_n(a+C))$ for any $a\in K$: if $w_1,\dots,w_n\in a+C$ and
$\alpha_1,\dots,\alpha_n\in K$ with $(\forall i)\absval{\alpha_i}\leq 1\wedge\sum_i\alpha_i=1_K$ then
$\sum_i\alpha_i w_i=\sum_i\alpha_i(w_i-a)+a\in a+C$.
\end{proof}

Next, we deal computationally with partitions of unity in a nonarchimedean context.  We will consider locally finite `covers' $(V_i)_i\in\textrm{LF}'_{\Sigma^0_1(X),\N}$ with
nonarchimedean analogues of partitions of unity $(f_i)_{i\in\N}$ subordinate to $(V_i)_i$.  More formally,
\begin{defi}
A \emph{partition of unity} subordinate to $(V_i)_i\subseteq\Sigma^0_1(X)$ with
values in $K$ is a sequence $(f_i)_i\in\Cont_{\Sigma^0_1(X)}(X,K)^{\N}$ with
$\dom f_i=Y:=\bigcup_j V_j$ and $X\setminus f_i^{-1}\{0_K\}\subseteq V_i$ for each $i\in\N$, and with
$(\forall i)(\forall x\in\dom f_i)(\absval{f_i(x)}\leq 1)$ and $(\forall x\in Y)(\sum_{i\in\N}f_i(x)=1_K)$.
\end{defi}
Note the sum here is locally finite by the local finiteness of $(V_i)_i$.

Now, suppose $X$ is an effectively zero-dimensional computable metric space.  Suppose also we are given a
subset $A\in\mathcal{A}(X)\setminus\{\emptyset\}$, Dugundji system $((V_i)_i,(y_i)_i)$ for $A$
and partition of unity $(f_i)_i$ subordinate to $(V_i)_i$ with values in $K=\Omega_p$.  If there also exists a
homeomorphism $h:A\to C$ for some ultrametrically convex $C\subseteq K$ then one can attempt to generalize the
above Theorem \ref{thm:a} in the following way: let
\begin{equation}\label{eq:genretract}
r(x)=\begin{cases}
x, &\mbox{\textrm{ if $x\in A$,}}\\
h^{-1}\left( \sum_{i\in\N}h(y_i).f_i(x) \right),&\mbox{\textrm{ if $x\in X\setminus A$}}.
\end{cases}
\end{equation}
This should define a retraction $r:X\to X$ onto $A$, as we will see presently.
The existence of such a homeomorphism $h$, though, is rather a restricted phenomenon, requiring that $A$
have either no isolated points or else is a singleton.

\begin{thm}\label{thm:b}
For a zero-dimensional computable metric space $X$ define
\begin{gather*}
\theta:\subseteq\mathcal{A}(X)\times\mathcal{D}(X) 
\times\Cont_{\Sigma^0_1(X)}(X,K)^{\N}\times\Cont_{\Pi^0_1(X)}(X,K)\to\Cont(X,K)\\
\text{by }\onespace\theta(A;(V_i)_i,(y_i)_i;(f_i)_i;h):=h\compose r
\end{gather*}
where $r$ is as defined in (\ref{eq:genretract}) and $\dom\theta$ consists of all tuples
$(A;(V_i)_i,(y_i)_i;(f_i)_i;h)$ such that:
\begin{enumerate}
\item
$\emptyset\neq A\neq X$ and $((V_i)_i,(y_i)_i)\in \pi^{-1}\{A\}$,
\item
$(f_i)_i$ is a partition of unity subordinate to $(V_i)_i$ with values in $K$, and
\item $h:A\to h(A)\subseteq K$ is a homeomorphism onto an ultrametrically convex set in $K$.
\end{enumerate}
If
$\beta:X\times\N\to\mathcal{K}_{>}(X),(x,N)\mapsto\clball{x}{2^{-N}}$ is well-defined and computable, then
$\theta$ is $(\delta,[\delta_X\to\delta_K])$-computable where
$\delta:=[\deltar\sqcap\delta_{\textrm{dist}}^{>},\delta_3,
[\delta_X\to\delta_K]_{\Sigma^0_1(X)}^{\ordomega},
[\delta_X\to\delta_K]_{\Pi^0_1(X)}]$.
\end{thm}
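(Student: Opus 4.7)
The plan is to first verify that $r$ as defined in (\ref{eq:genretract}) satisfies $r(X) \subseteq A$ (so $h \circ r : X \to K$ is well-defined) and is continuous, then construct a computable $[\delta_X \to \delta_K]$-name of $h \circ r$ via the utm/smn properties. For $x \in X \setminus A$, the sum $s(x) := \sum_i f_i(x) h(y_i)$ is locally finite, the coefficients $f_i(x)$ lie in the closed unit ball of $K$ and sum to $1_K$, and each $h(y_i)$ lies in the ultrametrically convex set $h(A)$; by the nonarchimedean convexity lemma, $s(x) \in h(A) = \dom h^{-1}$, so $r(x) := h^{-1}(s(x))$ makes sense. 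Continuity of $h \circ r$ on $X \setminus A$ follows from continuity of each $f_i$ and local finiteness. At $x \in A$, if $x_n \to x$ with $x_n \in X \setminus A$, each nonzero $f_i(x_n)$ forces $x_n \in V_i$, so $d(x_n, y_i) \leq 2 d_A(x_n) \to 0$, whence $h(y_i) \to h(x)$; the value $(h \circ r)(x_n) = s(x_n)$, being a nonarchimedean convex combination of values approaching $h(x)$, approaches $h(x) = (h \circ r)(x)$ by the strong triangle inequality.

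For computability, I would dovetail over pairs $(N, c) \in \N \times \N$ (with $c$ coding a rational via $\nu_{\Q}$) and seek a certificate that $|(h \circ r)(y) - c|_p \leq 2^{-n}$ uniformly for $y$ in $K_N := \clball{x}{2^{-N}}$ (made computably compact by $\beta$). The certificate splits along $K_N = (K_N \cap A) \cup (K_N \cap (X \setminus A))$. For the $A$-part, apply Lemma \ref{lem:preimg} to $h : A \to K$ and $\ball{c}{2^{-n-1}}$ to compute $V' \in \Sigma^0_1(X)$ with $V' \cap A = h^{-1}(\ball{c}{2^{-n-1}})$, then semi-decide $A \cap K_{N-1} \subseteq V'$ (compactness of $A \cap K_{N-1}$ as a $\Pi^0_1$-set intersected with a computably compact set). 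For the $X \setminus A$-part, use the $\delta_2$-component of the $\delta_3$-name (basic opens $\alpha'(q_j)$ with $\bigcup_j \alpha'(q_j) = X \setminus A$ and finite sets $S_j \supseteq \{i : V_i \cap \alpha'(q_j) \neq \emptyset\}$) combined with the thickening $\ball{A}{2^{-N'}}$ (computable in $\Sigma^0_1(X)$ from the $\delta_{\textrm{dist}}^{>}$-information on $A$) to semi-decidably find a finite $J_0$ with $K_N \subseteq \ball{A}{2^{-N'}} \cup \bigcup_{j \in J_0} \alpha'(q_j)$; then set $I_0 := \bigcup_{j \in J_0} S_j$ and semi-decidably verify $|h(y_i) - c|_p < 2^{-n-1}$ for each $i \in I_0$.

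For correctness, once such a certificate is found: for $y \in K_N \cap A$, $y \in A \cap K_{N-1} \subseteq V'$ yields $|(h \circ r)(y) - c|_p = |h(y) - c|_p < 2^{-n-1}$; for $y \in K_N \cap (X \setminus A)$, every $i$ with $f_i(y) \neq 0$ has $V_i \cap K_N \ni y$, so either $i \in I_0$ (bounded directly) or $V_i$ meets only $\ball{A}{2^{-N'}} \cap K_N$, in which case the third Dugundji axiom forces $\diam V_i$ small and the representative $y_i \in A$ close to $A \cap K_{N-1}$, yielding $|h(y_i) - c|_p < 2^{-n-1}$ via the $V'$-certificate. The strong triangle inequality then gives $|s(y) - c|_p \leq \max_i |f_i(y)|_p \cdot |h(y_i) - c|_p < 2^{-n-1}$. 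Existence of good parameters follows from continuity of $h \circ r$ at $x$, continuity of $h$ on $A$, and Dugundji axiom 3. The main obstacle will be the interleaving of $(N, N', c, J_0)$ so that the near-$A$ gap (points of $K_N \cap (X \setminus A)$ within $2^{-N'}$ of $A$, potentially lying in $V_i$'s with $i \notin I_0$) is semi-decidably closed by combining the shrinking-diameter axiom with the slightly enlarged certificate on $A \cap K_{N-1}$; formalizing this interplay and turning it into a uniformly terminating dovetail search is the technical heart of the proof.
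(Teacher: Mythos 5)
Your first paragraph (well-definedness of $r$ via the convexity lemma, continuity at points of $A$ via the Dugundji inequality and the strong triangle inequality) is fine and matches what the paper uses implicitly, and your near-$A$ certificate (compactly certifying $A\cap\clball{x}{2^{-N+1}}\subseteq h^{-1}\ball{c}{2^{-n-1}}$ and then pushing the representatives $y_i$ into that set when $d_A(y)$ is small) is essentially the paper's case $R_1$; note, though, that the step ``$y_i$ close'' uses axiom (2) of a Dugundji system, $d(y,y_i)\leq 2d_A(y)$, not the shrinking-diameter axiom (3).

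The genuine gap is in your handling of the region away from $A$. Your certificate there asks that \emph{every} $h(y_i)$ with $i\in I_0=\bigcup_{j\in J_0}S_j$ lie within $2^{-n-1}$ of the single target $c$. But $(h\compose r)(x)=\sum_i f_i(x)h(y_i)$ is a nonarchimedean convex combination of values $h(y_i)$ that need not cluster: for $x$ with $d_A(x)=\rho>0$ the relevant $y_i$ only satisfy $d(x,y_i)\leq 2\rho$, so they can be spread over a portion of $A$ on which $h$ varies by far more than $2^{-n}$ (this is compatible with $h(A)$ being a ball, e.g.\ $A$ a homeomorph of the unit ball of $\Omega_p$). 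In the ultrametric, $\absval{h(y_{i_1})-c}_p<2^{-n-1}$ and $\absval{h(y_{i_2})-c}_p<2^{-n-1}$ force $\absval{h(y_{i_1})-h(y_{i_2})}_p<2^{-n-1}$, so for large $n$ no admissible $c$ exists; and since soundness of your scheme needs $N'\geq N+1$, every covering $J_0$ must (for large $N$) contain some $j$ with $x\in\alpha'(q_j)$, hence $I_0\supseteq\{i\setconstr x\in V_i\}$. So no certificate is ever found and the dovetailed search diverges at such $x$: your procedure never actually computes the sum, it only tries to trap each summand individually. This is exactly where the paper's proof does something different: on each piece $\alpha'(\tilde q_k)$ of the cover of $X\setminus A$ it forms the finite partial sum $g_k=\sum_{j\in\FS(\id_{\N})(s_k')}h(y_j).f_j$, computable as an element of $\Cont_{\Sigma^0_1(X)}(X,K)$ from the names of the $f_j$, the points $y_j$, the evaluation lemma and the computable field operations of $K$, and then certifies membership of the input point in $g_k^{-1}U$ (condition $R_0$); the clustering argument is reserved for points near $A$ (condition $R_1$). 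To repair your proposal you would need to add this evaluation of the finite weighted sums (e.g.\ approximate each $f_i(x)$ for $i\in I_0$ and form $\sum_{i\in I_0}f_i(x)h(y_i)$ directly), at which point the ball-around-$x$ bookkeeping for the near-$A$ case can stay as you have it.
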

\begin{rem}
For examples of maps $h$ as in
(3), one can suppose $A\in\mathcal{K}(X)\setminus\{\emptyset,X\}$ is
perfect and let $C:=\clballmetric{K}{0_K}{\rho}$ for some
$0<\rho<\infty$.  Then $C$ is compact, perfect and zero-dimensional,
so a homeomorphism $h:A\to C$ is well-known to exist (e.g.~it follows
from \cite[Ex 6.2.A]{Engelking}, \cite[Prop 74.2]{Schikhof} or
\cite[Thm (7.4)]{Kechris}).  Information on $A$ and $C$ sufficient to
compute a $[\delta_X\to\delta_K]_{\Pi^0_1(X)}$-name of some such $h$
is of interest in this context; one simple result in this
direction follows \cite[Proof of Thm (7.4)]{Kechris} to decompose
$A$ and $C$ into (nonempty) relatively clopen sets according to Cantor
schemes $(A_u)_{u\in\{0,1\}^{*}}$ and $(B_u)_u$, then take associated
maps $f:\{0,1\}^{\N}\to A$ and $f':\{0,1\}^{\N}\to C$.  The composition
$h:=f'\compose f^{-1}:A\to C$ is then computable uniformly from a
\([\nu_{\{0,1\}^{*}}\to \delta_{\Sigma^0_1(A)}]\)-name of $(A_u)_u$, a
\([\nu_{\{0,1\}^{*}}\to \deltamc^K]\)-name of $(B_u)_u$ and a
$\delta_{\Pi^0_1(X)}$-name of $A$.

The condition (2) can also be met by picking clopen sets $W_{i,j}$ ($j\in\N$) decomposing each $V_i$, with each $f_i:\subseteq X\to K$ having domain $X\setminus A$ and constant on each $W_{i,j}$.  Using
Remark \ref{rem:zddgone} one can replace the locally finite cover $(V_i)_i$ here by $(W_{i,j})_{\langle i,j\rangle}$ if desired.
\end{rem}
\begin{proof}[Proof of Theorem \ref{thm:b}]
By type conversion it is enough to compute the map $\tilde{\theta}:Z\times X\to K,(z,x)\mapsto\theta(z)(x)$
where $Z:=\dom\theta$ is represented by $\delta|^Z$.  To do this, we loosely follow the proof of
Theorem \ref{thm:a}, computing $(h\compose r)^{-1}U\in\Sigma^0_1(X)$ uniformly from $U\in\Sigma^0_1(K)$.  We
will take $q\in\delta_{\Sigma^0_1(K)}^{-1}\{U\}$ and hope to define computable
$F:\subseteq\Bairespc^3\to\Bairespc,(p,\hat{p},q)\mapsto t$ such that each $\hat{p}\in\dom\delta|^Z$ and
$q\in\Bairespc$ have corresponding
$u=(\delta_{\Sigma^0_1(X)}\compose F)(\cdot,\hat{p},q):\dom\delta_X\to\Sigma^0_1(X)$ satisfying
\begin{align*}
\delta_X(p)\in u(p)&\subseteq (h\compose r)^{-1}\delta_{\Sigma^0_1(K)}(q)
\text{ if }p\in\delta_X^{-1}(h\compose r)^{-1}\delta_{\Sigma^0_1(K)}(q);\\
u(p)&=\emptyset\text{ if }p\in\dom\delta_X\setminus\delta_X^{-1}(h\compose r)^{-1}\delta_{\Sigma^0_1(K)}(q).
\end{align*}
Then Lemma \ref{lem:ctslindelof} will computably give a name for $(h\compose r)^{-1}\delta_{\Sigma^0_1(K)}(q)$.

To define $t$, we will dovetail repeated output of $0$ with testing
\[
R_0(p,\hat{p},q,j)\vee
R_1(p,\hat{p},q,j,M,n,N,\langle w\rangle)
\]
for all $j,M,n,N\in\N$, $w\in\N^{*}$, instead outputting
$(a+1).0^{\ordomega}$ if such $j,M,n,N,w$ are found, where $a:=\langle p_j,\overline{2^{-j+1}}\rangle$.
This time, if
%
$\hat{p}=\langle s,\langle\tilde{d},\tilde{t}\rangle,\tilde{f},e\rangle$
where $\tilde{d}=\langle\langle q^{(0)},\dots\rangle,\tilde{q},s'\rangle$,
$\tilde{t}=\langle t^{(0)},\dots\rangle$ and $\tilde{f}=\langle f^{(0)},\dots\rangle$, we define
\begin{align*}
R_0(p,\hat{p},q,j) &:\equiv (\exists k)\tilde{q}_k\geq 1\wedge\langle p_j,\overline{2^{-j+1}}\rangle\sqsubset
\tilde{q}_k-1\wedge (\exists c,l)\langle p_j,\overline{2^{-j+1}}\rangle\sqsubset c\wedge\\
& c+1=V(G\langle\hat{p},k.0^{\ordomega}\rangle,q)_l
\end{align*}
where $V:\subseteq\Bairespc^2\to\Bairespc$ is a computable
$([\delta_X\to\delta_K]_{\Sigma^0_1(X)},\delta_{\Sigma^0_1(K)};\delta_{\Sigma^0_1(X)})$-realizer of
$v:\Cont_{\Sigma^0_1(X)}(X,K)\times\Sigma^0_1(K)\to\Sigma^0_1(X),(g,U)\mapsto g^{-1}U$ (compare
Lemma \ref{lem:preimg}), and $G$ is a computable
$([\delta',\delta_{\N}];[\delta_X\to\delta_K]_{\Sigma^0_1(X)})$-realizer of
\[
\Gamma:\check{Z}\times\N\to\Cont_{\Sigma^0_1(X)}(X,K),
(A;\langle\tilde{d},\tilde{t}\rangle;(f_i)_i;h;k)\mapsto g_k,
\]
for $g_k:\subseteq X\to K$ defined by $\dom g_k=X\setminus A$ and
$g_k(x):=\sum_{j\in \FS(\id_{\N})(s_k')}h(y_j).f_j(x)$
($x\in X\setminus A$).  Here we wrote $\tilde{d}$ in the form given before (to get $s'$), and define
$\check{Z}$ to be
\[
\{(\deltar\sqcap\delta_{\textrm{dist}}^{>}(s);\langle\tilde{d},\tilde{t}\rangle;
[\delta_X\to\delta_K]_{\Sigma^0_1(X)}^{\ordomega}(\tilde{f});[\delta_X\to\delta_K]_{\Pi^0_1(X)}(e))\setconstr
\hat{p}=\langle s,\langle\tilde{d},\tilde{t}\rangle,\tilde{f},e\rangle\in\delta^{-1}Z\},
\]
with representation $\delta'$ of $\check{Z}$ given by
\[
\delta'(\hat{p}):=(\deltar\sqcap\delta_{\textrm{dist}}^{>}(s);
\langle\tilde{d},\tilde{t}\rangle;[\delta_X\to\delta_K]_{\Sigma^0_1(X)}^{\ordomega}(\tilde{f});
[\delta_X\to\delta_K]_{\Pi^0_1(X)}(e)).
\]

Also, $R_1$ is defined by
\[
\begin{split}
R_1(p,\hat{p},q,j,M,n,N,\langle w\rangle) &:\equiv d(\nu(p_j),z_n)+2^{-j+1}<\frac{1}{M+1}\wedge
\frac{3}{M+1}\leq 2^{-N}\wedge \\
& (\exists m')\langle w\rangle =
I(\langle q^{(0)},\dots\rangle,\tilde{B}(P s^{(n)},N.0^{\ordomega}))_{m'}\wedge\\
& (\exists l)(\forall i<\absval{w}) \big(
(\exists m)\left( w_i+1=V'\langle e,H(\langle n,l\rangle.0^{\ordomega},s,e)\rangle_m \right)\wedge\\
& (\exists d,m)\absval{h(z_n)-\nu_K(\pr_1 d)}_p+2^{-l}<\nu_{\Qplus}(\pr_2 d)\wedge q_m=d+1 \big),
\end{split}
\]
where $I:\subseteq\Bairespc^2\to\Bairespc$ is a computable realizer of
$\cap:\Pi^0_1(X)\times\mathcal{K}_{>}(X)\to\mathcal{K}_{>}(X)$, $\tilde{B}:\subseteq\Bairespc^2\to\Bairespc$ is a
computable $(\delta_X,\delta_{\N};\deltac)$-realizer of $\beta$, $H:\subseteq\Bairespc^3\to\Bairespc$ is a
computable $(\delta_{\N},\id_{\Bairespc},\id_{\Bairespc};\delta_{\Sigma^0_1(K)})$-realizer of
$\subseteq\N\times\Bairespc^2\to\Sigma^0_1(K),(\langle n,l\rangle,s,e)
\mapsto \ball{h(z_n)}{2^{-l}}$ (with natural domain) for $h=[\delta_X\to\delta_K]_{\Pi^0_1(X)}(e)$,
$\pr_1 s=\langle s^{(0)},\dots\rangle$, $z_n=\delta_X(P s^{(n)})$ (and $P:\subseteq\Bairespc\to\Bairespc$ as in
definition of $\deltar$), and where $V':\subseteq\Bairespc\to\Bairespc$ is a computable realizer of
$v_{\Pi^0_1(X)}$ for $v_{\mathcal{Z}}:\Cont_{\mathcal{Z}}(X,K)\times\Sigma^0_1(K)\crsarr\Sigma^0_1(X),
(f,U)\mapsto\{V\setconstr V\cap\dom f=f^{-1}U\}$ as in Lemma \ref{lem:preimg}.

Suppose now $x\in X$, $p\in\delta_X^{-1}\{x\}$, $\hat{p}\in\delta^{-1}Z$, $q\in\Bairespc$.
Any nonzero output $a+1$ (on input $p,\hat{p},q$) must have the form
$a=\langle p_j,\overline{2^{-j+1}}\rangle$, and we show any $z\in\alpha(a)$ has $(h\compose r)(z)\in U$.
Namely, either:
\begin{enumerate}
\item[(I)] (if $R_0(p,\hat{p},q,j)$ holds)\\
we have for any $k$ such that
$\alpha'(\tilde{q}_k)\ni z$ --- in particular the value of $k$ found by the test ---
that $z\in\alpha'(\tilde{q}_k)\subseteq X\setminus A$ and
\[
(h\compose r)(z)=\sum_{i\in\N}h(y_i).f_i(z)
=\sum_{i\in\FS(\id_{\N})(s'_k)}h(y_i).f_i(z)\in U,
\]
since
\begin{gather*}
\{i\setconstr\alpha'(\tilde{q}_k)\setminus f_i^{-1}\{0_K\}\neq\emptyset\}\subseteq
\{i\setconstr V_i\cap\alpha'(\tilde{q}_k)\neq\emptyset\}\subseteq \FS(\id_{\N})(s'_k)
\onespace\text{ and }\onespace\\
z\in\alpha(a)\subseteq (\delta_{\Sigma^0_1(X)}\compose V)(G\langle\hat{p},k.0^{\ordomega}\rangle,q) =
g_k^{-1}\delta_{\Sigma^0_1(K)}(q),
\end{gather*}
or else
\item[(II)] (if $R_1(p,\hat{p},q,j,M,n,N,\langle w\rangle)$ holds)\\
we can argue as follows.  Either $z\in A$ or
$(\exists k)\alpha'(\tilde{q}_k)\ni z$.  In the first case, clearly
\[
z\in\alpha(a)\cap A\subseteq \ball{z_n}{\frac{1}{M+1}}\cap A\subseteq\clball{z_n}{2^{-N}}\cap A\subseteq
h^{-1}\ball{h(z_n)}{2^{-l}}\subseteq h^{-1}U,
\]
implying $(h\compose r)(z)=h(z)\in U$.  In the second case, instead $y=z_n\in A$ has
\[
\absval{(h\compose r)(y)-(h\compose r)(z)}_p=\big\lvert\sum_{i\in S}h(y).f_i(z)-h(y_i).f_i(z)\big\rvert_p
\leq \max_{i\in S}\absval{f_i(z)}_p.\absval{h(y)-h(y_i)}_p
\]
for $S:=\FS(\id_{\N})(s_k')$ ($\supseteq
\{i\setconstr \alpha'(\tilde{q}_k)\setminus f_i^{-1}\{0_K\}\neq\emptyset\}$).
(Our argument here derives from the argument for ordinary partitions of unity suggested in
\cite[Prob 4.5.20(a)]{Engelking}).
But here any $i\in S$ with $\absval{f_i(z)}_p\neq 0$ has $V_i\ni z$ so $d(z,y_i)\leq (1+\epsilon)d_A(z)
\leq (1+\epsilon)d(z,y)$,
implying $d(y,y_i)\leq (2+\epsilon)d(z,y)<\frac{2+\epsilon}{M+1}\leq 2^{-N}$ (by $\epsilon\leq 1$ and first
and second clauses of $R_1(p,\hat{p},q,j,M,n,N,\langle w\rangle)$) and hence
$h(y_i)\in\ball{h(z_n)}{2^{-l}}$ (by third and fourth clauses of $R_1(p,\hat{p},q,j,M,n,N,\langle w\rangle)$).
Then we get $\absval{(h\compose r)(y)-(h\compose r)(z)}_p<2^{-l}\implies z\in (h\compose r)^{-1}U$ (since
by the second half of the last clause of
$R_1(p,\hat{p},q,j,M,n,N,\langle w\rangle)$
we have some $d\in\N$ such that $\ballmetric{\absval{\cdot}_p}{h(z_n)}{2^{-l}}\subseteq\alpha^K(d)\subseteq U$).
\end{enumerate}
So, we have proved $\delta_{\Sigma^0_1(X)}(t)=\alpha(a)\subseteq(h\compose r)^{-1}U$.  If
$x\not\in(h\compose r)^{-1}U$, we must have output $t=0^{\ordomega}$ to avoid a contradiction (since $x\in
\alpha\langle p_j,\overline{2^{-j+1}}\rangle$ for all $j\in\N$).
Conversely, for any $x\in(h\compose r)^{-1}U$ we have either $x\in A$ or $(\exists k)\alpha'(\tilde{q}_k)\ni x$.
\begin{enumerate}
\item[(A)] $x\in A$\\
In the former case, $U\ni(h\compose r)(x)=h(x)$ and we can pick $l$ such that
\[
(\exists d,m)\absval{h(x)-\nu_K(\pr_1 d)}_p+2^{-l}<\nu_{\Qplus}(\pr_2 d)\wedge q_m=d+1;
\]
then
$\xi\in\Qplus$ such that $\clball{x}{\xi}\cap A\subseteq h^{-1}\ball{h(x)}{2^{-l-1}}$; then $M,N$ such that
$\frac{3}{M+1}\leq 2^{-N}\wedge\frac{1}{M+1}+2^{-N}\leq\xi$; then $n$ such that
$\absval{h(z_n)-\nu_K(\pr_1 d)}_p+2^{-l}<\nu_{\Qplus}(\pr_2 d)$, $d(x,z_n)<(M+1)^{-1}$ and
$\absval{h x-h z_n}_p<2^{-l-1}$; then finally $j$ such that
$d(\nu(p_j),z_n)+2^{-j+1}<\frac{1}{M+1}$.  Here we find
\begin{equation}\label{eq:cveq}
\clball{z_n}{2^{-N}}\cap A\subseteq h^{-1}\ball{h(z_n)}{2^{-l}}
\end{equation}
since any point $y$ of the left hand side has $y\in\ball{x}{2^{-N}+\frac{1}{M+1}}\cap A$ so
\[
\absval{h y-h z_n}_p\leq \absval{h y-h x}_p+\absval{h x-h z_n}_p<2^{-l-1}+2^{-l-1}.
\]
Now (\ref{eq:cveq}) and definition of $I$, $\tilde{B}$
imply $\clball{z_n}{2^{-N}}\cap A$ has an ideal cover $w\in\N^{*}$ with
\begin{gather*}
(\forall i<\absval{w})(\exists m)\left(w_i+1=V'\langle e,H(\langle n,l\rangle.0^{\ordomega},s,e)\rangle_m
\right)\wedge \\
(\exists m')\langle w\rangle=I(\langle q^{(0)},\dots\rangle,\tilde{B}(s^{(n)},N.0^{\ordomega}))_{m'}.
\end{gather*}
This guarantees that $R_1(p,\hat{p},q,j,M,n,N,\langle w\rangle)$ holds, and $x\in\alpha(a)$ for some nonzero
output $a+1$.
\item[(B)] If instead $\alpha'(\tilde{q}_k)\ni x$ for some $k$,\\
we have
\[
U\ni(h\compose r)(x)=\sum_{i\in\N}h(y_i).f_i(x)=g_k(x),
\]
so we know there exist $c,l$ such that
$c+1=V(G\langle\hat{p},k.0^{\ordomega}\rangle,q)_l\wedge x\in\alpha(c)$.  Then by continuity of $d$ and
$\lim_{j\to\infty}2^{-j+1}=0$ we have some $j$ such that
the first, second and third clauses of $R_0(p,\hat{p},q,j)$
hold, so that $R_0(p,\hat{p},q,j)$ holds.  But then $a+1$ appears in the output where
$a=\langle p_j,\overline{2^{-j+1}}\rangle$, in particular
$x\in\alpha(a)\subseteq(\delta_{\Sigma^0_1(X)}\compose F)(p,\hat{p},q)$.
\end{enumerate}
Thus we have shown, given $\hat{p}\in(\delta')^{-1}\check{Z}$, $q\in\delta_{\Sigma^0_1(X)}^{-1}\{U\}$,
$p\in\delta_X^{-1}\{x\}$ where $x\in(h\compose r)^{-1}U$, that
$(h\compose r)^{-1}U\subseteq \bigcup\{(\delta_{\Sigma^0_1(X)}\compose F)(p,\hat{p},q)\setconstr
p\in\delta_X^{-1}(h\compose r)^{-1}\delta_{\Sigma^0_1(K)}(q)\}$.  This completes the proof.
\end{proof}

\section{Zero dimensional subsets}\label{sec:zdsubsets}
In \cite[\S 4]{zerodpaper}, the present author introduced four `natural' operations $N,M,B,S$ concerning a general represented class $\mathcal{Y}\subseteq\{A\subseteq X\setconstr \dim A\leq 0\}$, finding that $N$ is computable only if $M$ is computable, $M$ computable only if $B$ computable, and $B$ computable only if $S$ is
computable.  Here, we specialize to $\mathcal{Y}\subseteq\{A\in\Pi^0_1(X)\setconstr \dim A\leq 0\}$ with
$\delta_{\mathcal{Y}}\leq \delta_{\Pi^0_1}|^{\mathcal{Y}}$, and find the respective conditions of computability of the operations $N,M,B,S$
are in fact equivalent in this restricted case.  More specifically, we
consider also
$N_0:\subseteq\Pi^0_1(X)^2\times\mathcal{Y}\crsarr\Sigma^0_1(X)^2$ defined by
\[
N_0(A,B,Y):=\{(U,V)\setconstr
A\cap Y\subseteq U\wedge B\cap Y\subseteq V\wedge Y\cap U\cap V=\emptyset\wedge Y\subseteq U\cup V\}
\]
($\dom N_0=\{(A,B,Y)\setconstr A\cap B\cap Y=\emptyset\}$) and show
$N\eqvW N_0$ and $N_0\leqsW S$.  Here if $f:\subseteq W\crsarr X$ and
$g:\subseteq Y\crsarr Z$, we have
\emph{Weihrauch reducibility} and \emph{strong Weihrauch reducibility}
respectively defined by
\begin{align*}
f\leqW g &:\iff
(\exists\text{ computable }H,K:\subseteq\N^{\N}\to\N^{\N})
\left(\forall\text{ realizer }G\text{ of }g\right)\\
&\left( F:=H\compose\langle \id_{\N^{\N}},G\compose K\rangle
\text{ a realizer of }f \right),\\
f\leqsW g &:\iff
(\exists\text{ computable }H,K:\subseteq\N^{\N}\to\N^{\N})
\left(\forall\text{ realizer }G\text{ of }g\right)\\
&\left( F:=H\compose G\compose K\text{ a realizer of }f \right).
\end{align*}
(For more on these notions, the reader can refer e.g.~to
\cite{BGPWCpxtyPreprint}).
Recall $N:\subseteq\Pi^0_1(X)^2\times\mathcal{Y}\crsarr\Sigma^0_1(X)^2$ is defined by
$\dom N=\{(A,B,Y)\setconstr A\cap B=\emptyset\}$ and
$N(A,B,Y) := \{(U,V)\setconstr A\subseteq U\wedge B\subseteq V\wedge Y\subseteq U\disjunion V\}$.
\begin{prop}
If $\mathcal{Y}\subseteq\{A\in\Pi^0_1(X)\setconstr \dim A\leq 0\}$ with
$\delta_{\mathcal{Y}}\leq \delta_{\Pi^0_1}|^{\mathcal{Y}}$ then $N\leqW N_0$.
\end{prop}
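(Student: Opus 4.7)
The plan is to show that $N$ reduces to $N_0$ by applying $N_0$ on the same input and then post-processing using the stronger separation guaranteed by $t_4$ on $(A,B)$. For the backward functional $K$, I would simply pass the input $(A,B,Y)$ unchanged; this is valid since $\dom N\subseteq\dom N_0$ (because $A\cap B=\emptyset$ implies $A\cap B\cap Y=\emptyset$). So any realizer $G$ of $N_0$, applied to (a name of) $(A,B,Y)$, produces a pair $(U_0,V_0)\in\Sigma^0_1(X)^2$ satisfying
\[
A\cap Y\subseteq U_0,\onespace B\cap Y\subseteq V_0,\onespace Y\subseteq U_0\cup V_0,\onespace Y\cap U_0\cap V_0=\emptyset.
\]

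For the forward functional $H$, which has access to both the original input and $(U_0,V_0)$, I would first invoke the computable operation $t_4$ on $(A,B)$ (applicable since $A\cap B=\emptyset$) to obtain $(U_1,V_1)\in\Sigma^0_1(X)^2$ with $A\subseteq U_1$, $B\subseteq V_1$ and $U_1\cap V_1=\emptyset$. Next, using $\delta_{\mathcal{Y}}\leq\delta_{\Pi^0_1(X)}|^{\mathcal{Y}}$, I would compute a $\delta_{\Sigma^0_1(X)}$-name of $X\setminus Y$ from the given $\delta_{\mathcal{Y}}$-name of $Y$. Finally, I would output
\[
U:=U_0\cup\bigl(U_1\cap(X\setminus Y)\bigr),\onespace V:=V_0\cup\bigl(V_1\cap(X\setminus Y)\bigr),
\]
each of which is computable from the data via the standard computable operations of binary union and intersection on $\Sigma^0_1(X)$.

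It remains to verify $(U,V)\in N(A,B,Y)$. For $A\subseteq U$: any $x\in A$ either lies in $Y$, in which case $x\in A\cap Y\subseteq U_0\subseteq U$, or lies outside $Y$, in which case $x\in U_1\cap(X\setminus Y)\subseteq U$; the inclusion $B\subseteq V$ is symmetric. For $Y\subseteq U\cup V$ this follows immediately from $Y\subseteq U_0\cup V_0$. For disjointness on $Y$: if $y\in Y\cap U$ then necessarily $y\in U_0$ (since the other summand $U_1\cap(X\setminus Y)$ is disjoint from $Y$), and similarly $y\in Y\cap V$ forces $y\in V_0$; hence $Y\cap U\cap V\subseteq Y\cap U_0\cap V_0=\emptyset$.

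The main technical point is really just choosing the right gluing formula so that the patched-up opens still exhibit disjointness on $Y$; the construction $U_0\cup(U_1\cap(X\setminus Y))$ works because the added piece is confined to $X\setminus Y$ and therefore cannot disturb the behaviour on $Y$. There is no genuine obstacle here provided one has $t_4$ (cited from \cite[Prop 3.5]{zerodpaper}) and the open complement of $Y$, both of which follow directly from the hypotheses. Composing the reduction yields $F=H\compose\langle\id_{\Bairespc},G\compose K\rangle$ as a realizer of $N$, establishing $N\leqW N_0$.
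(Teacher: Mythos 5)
There is a genuine gap: you only verify $Y\cap U\cap V=\emptyset$, but the condition $Y\subseteq U\disjunion V$ in the definition of $N$ asks for the pair $(U,V)$ to be \emph{disjoint} open sets whose union covers $Y$ (this is what makes $N$ correspond to vanishing large inductive dimension: $X\setminus(U\union V)$ must be a partition between $A$ and $B$ missing $Y$, and the paper's own proof explicitly secures $U'\cap V'=\emptyset$ by producing its final pair with $t_4$). Your gluing $U:=U_0\union(U_1\cap(X\setminus Y))$, $V:=V_0\union(V_1\cap(X\setminus Y))$ cannot deliver this: the operation $N_0$ gives no control whatsoever over $U_0\cap V_0$ away from $Y$, so $U_0$ and $V_0$ (hence $U$ and $V$) may overlap on $X\setminus Y$; likewise $U_1\cap(X\setminus Y)$ may meet $V_0$. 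Applying $t_4$ only to $(A,B)$ and then unioning with the $N_0$-output destroys the global disjointness that $t_4$ provided, so in general $(U,V)\notin N(A,B,Y)$.

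The repair is to feed the $N_0$-output back into $t_4$ rather than bolting it on afterwards: from $(U_0,V_0)\in N_0(A,B,Y)$ form the closed sets $C:=A\union(Y\setminus V_0)$ and $D:=B\union(Y\setminus U_0)$, computable in $\Pi^0_1(X)$ from the data using $\delta_{\mathcal{Y}}\leq\delta_{\Pi^0_1}|^{\mathcal{Y}}$ together with binary intersection and union of closed sets. These are disjoint ($A\cap B=\emptyset$, $A\cap Y\subseteq U_0$ and $B\cap Y\subseteq V_0$ kill the cross terms, and $Y\subseteq U_0\union V_0$ kills $(Y\setminus V_0)\cap(Y\setminus U_0)$), so $t_4(C,D)$ yields disjoint open $U'\supseteq C\supseteq A$, $V'\supseteq D\supseteq B$; moreover $Y\cap U_0\subseteq Y\setminus V_0\subseteq U'$ and $Y\cap V_0\subseteq Y\setminus U_0\subseteq V'$ give $Y\subseteq U'\disjunion V'$. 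This is exactly the paper's construction, and it also explains why the reduction is stated only as $\leqW$: the forward functional needs the original input to build $C$ and $D$, just as yours needs it for $t_4$ and $X\setminus Y$.
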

\begin{proof}
First, we define computable $H,K:\subseteq\Bairespc\to\Bairespc$ such that
$F:=H\compose\langle\id_{\Bairespc},G\compose K\rangle$ realizes $N$ whenever $G$ is a realizer of $N_0$.
Namely, let $K:=\id_{\Bairespc}$ and if $I,J,L$ are respective computable realizers of
$\cap:\Pi^0_1(X)^2\to\Pi^0_1(X)$, $\delta_{\mathcal{Y}}\leq\delta_{\Pi^0_1(X)}|^{\mathcal{Y}}$ and
$\union:\Pi^0_1(X)^2\to\Pi^0_1(X)$, and $P$ a computable realizer of
$t_4$, we let $H:\subseteq\Bairespc\to\Bairespc$ be defined by
\[
H\langle p,q\rangle:= P\langle L\langle \pr_1\pr_1 p, I\langle J(\pr_2 p),\pr_2 q\rangle\rangle,
 L\langle \pr_2\pr_1 p,I\langle J(\pr_2 p),\pr_1 q\rangle\rangle\rangle.
\]
Clearly $H$ is computable, so we verify $F$ realizes $N$.  Namely, let $(A,B,Y)\in\dom N$,
$p\in [\delta_{\Pi^0_1}^2,\delta_{\mathcal{Y}}]^{-1}\{(A,B,Y)\}$,
$q:=G(p)\in (\delta_{\Sigma^0_1}^2)^{-1}N_0(A,B,Y)$, so $(U,V):=\delta_{\Sigma^0_1}^2(q)$ have
\begin{gather*}
A\union (Y\setminus V) =
(\delta_{\Pi^0_1}\compose L)\langle \pr_1\pr_1 p, I\langle J(\pr_2 p),\pr_2 q\rangle\rangle =: C
\onespace\text{ and }\\
B\union (Y\setminus U) =
(\delta_{\Pi^0_1}\compose L)\langle \pr_2\pr_1 p,I\langle J(\pr_2 p),\pr_1 q\rangle\rangle\rangle =: D,
\end{gather*}
in particular $H\langle p,q\rangle \in (\delta_{\Sigma^0_1}^2)^{-1}t_4(C,D)$.  Denoting $(U',V'):=
(\delta_{\Sigma^0_1}^2\compose H)\langle p,q\rangle$ we have $A\subseteq C\subseteq U'$, similarly
$B\subseteq V'$ and
we will see $Y\subseteq U'\disjunion V'$ (hence $(U',V')\in N(A,B,Y)$): namely $Y\subseteq U\union V$ with
$Y\cap U=Y\setminus V\subseteq C\subseteq U'$, similarly $Y\cap V\subseteq V'$ (and already
$U'\cap V'=\emptyset$ by definition of $t_4$).
\end{proof}

Conversely, to show $N_0\leqsW N$, if $G$ is a realizer of
$N$, $(A,B,Y)\in\dom N_0$ and $I$, $J$ are computable realizers of
$\cap:\Pi^0_1(X)^2\to\Pi^0_1(X)$ and
$\delta_{\mathcal{Y}}\leq\delta_{\Pi^0_1(X)}|^{\mathcal{Y}}$, then we
have $(A\cap Y)\cap(B\cap Y)=A\cap B\cap Y=\emptyset$ so
$(A\cap Y,B\cap Y,Y)\in\dom N$, and we can consider computing some
$(U,V)\in N(A\cap Y,B\cap Y,Y)$.  Then
\(A\cap Y\subseteq U\wedge B\cap Y\subseteq V\wedge
Y\subseteq U\disjunion V\) implies $U\cap V\cap Y=\emptyset$ and
$(U,V)\in N_0(A,B,Y)$.
On the other hand, we will compare $N_0$ with $S$, defined as follows:
$S:\subseteq\Sigma^0_1(X)^{\N}\times\mathcal{Y}\crsarr\Sigma^0_1(X)^{\N}$,
$\dom S=\{((U_i)_i,Y)\setconstr\bigcup_i U_i\supseteq Y\}$,
$S((V_i)_i,Y):=\{(W_i)_i\setconstr (\forall i)W_i\subseteq V_i\wedge \bigcup_i W_i\supseteq Y
\text{ and $(W_i)_i$ pairwise disjoint}\}$.
\begin{prop}
$N_0\leqsW S$ holds for any represented class $\mathcal{Y}\subseteq\{A\subseteq X\setconstr \dim A\leq 0\}$.
\end{prop}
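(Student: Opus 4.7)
The plan is to exhibit a strong Weihrauch reduction by forming, from the closed sets $A,B$, a two-term open cover (padded with empty sets) to which $S$ can be applied, then reading off the required pair from the disjoint shrinking that $S$ returns. Concretely, I will take $K:\subseteq\Bairespc\to\Bairespc$ to be the map sending any input $\langle p_A,p_B,p_Y\rangle\in[\delta_{\Pi^0_1(X)},\delta_{\Pi^0_1(X)},\delta_{\mathcal{Y}}]^{-1}\{(A,B,Y)\}$ to $\langle\langle p_A,p_B,0^{\ordomega},0^{\ordomega},\dots\rangle,p_Y\rangle$, which is a $[\delta_{\Sigma^0_1(X)}^{\ordomega},\delta_{\mathcal{Y}}]$-name of $((V_i)_i,Y)$ where $V_0=X\setminus A$, $V_1=X\setminus B$ and $V_i=\emptyset$ for $i\geq 2$ (using that $\delta_{\Pi^0_1(X)}^{-1}\{C\}=\delta_{\Sigma^0_1(X)}^{-1}\{X\setminus C\}$ as names). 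Then $H:\subseteq\Bairespc\to\Bairespc$ sends $\langle q^{(0)},q^{(1)},q^{(2)},\dots\rangle$ to $\langle q^{(1)},q^{(0)}\rangle$, thus selecting $(U,V):=(W_1,W_0)$ from any $\delta_{\Sigma^0_1(X)}^{\ordomega}$-name of $(W_i)_i$.

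To see that this really witnesses $N_0\leqsW S$, I would first check that $K$ maps into $\dom S$: the input assumption $A\cap B\cap Y=\emptyset$ yields $Y\subseteq X\setminus(A\cap B)=V_0\cup V_1=\bigcup_i V_i$, so $((V_i)_i,Y)\in\dom S$. Next, for any realizer $G$ of $S$, the pair $(W_i)_i:=\delta_{\Sigma^0_1(X)}^{\ordomega}(GK(\langle p_A,p_B,p_Y\rangle))$ satisfies $W_i\subseteq V_i$ (so $W_0\cap A=W_1\cap B=\emptyset$ and $W_i=\emptyset$ for $i\geq 2$), is pairwise disjoint, and covers $Y$. Setting $(U,V):=(W_1,W_0)$, I would then verify the four defining conditions of $N_0(A,B,Y)$: the containment $A\cap Y\subseteq W_1$ holds since $A\cap Y\subseteq Y\subseteq W_0\cup W_1$ with $A\cap W_0=\emptyset$; symmetrically $B\cap Y\subseteq W_0$; the intersection $Y\cap W_1\cap W_0$ is empty by pairwise disjointness of the shrinking; and $Y\subseteq W_0\cup W_1=U\cup V$.

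This shows $F:=H\compose G\compose K$ is a realizer of $N_0$, and as neither $H$ nor $K$ depend on $G$, the reduction is strong. There is no real obstacle: the only point requiring care is the bookkeeping of the representations, in particular the observation that a $\delta_{\Pi^0_1(X)}$-name of $A$ is \emph{literally} a $\delta_{\Sigma^0_1(X)}$-name of $X\setminus A$ (so no nontrivial realizer is needed to convert $p_A$ into a $\delta_{\Sigma^0_1(X)}$-name of $V_0$), and the fact that padding with $0^{\ordomega}$ supplies a $\delta_{\Sigma^0_1(X)}$-name of $\emptyset$, which in turn forces $W_i=\emptyset$ for $i\geq 2$ and prevents any stray $W_i$ from polluting $U\cap V\cap Y$.
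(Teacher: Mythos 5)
Your proposal is correct and follows essentially the same argument as the paper's proof: feed $S$ the padded cover $(X\setminus A,X\setminus B,\emptyset,\dots)$ together with $Y$, and read off the first two sets of the disjoint shrinking. The only (cosmetic) difference is that the paper swaps the roles of $A$ and $B$ inside $K$ and keeps $H$ the identity on the first two components, whereas you keep the order in $K$ and swap in $H$; the verification is identical.
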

\begin{proof}
Let $K(p):=\langle\langle \pr_2\pr_1 p,\pr_1\pr_1 p,0^{\ordomega},0^{\ordomega},\dots\rangle,\pr_2 p\rangle$
and $H\langle q^{(0)},q^{(1)},\dots\rangle:=\langle q^{(0)},q^{(1)}\rangle$; then any realizer $G$ of $S$
gives rise to a realizer $F:=H\compose G\compose K$ of $N_0$.  Namely, if $(A,B,Y)\in\dom N_0$,
$p\in [\delta_{\Pi^0_1}^2,\delta_{\mathcal{Y}}]^{-1}\{(A,B,Y)\}$ and
$q:=\langle q^{(0)},\dots\rangle:=(G\compose K)(p)$ then
$q\in(\delta_{\Sigma^0_1}^{\ordomega})^{-1}S(X\setminus B,X\setminus A,\emptyset,\emptyset,\dots;Y)$.
Denoting $(W_i)_i=\delta_{\Sigma^0_1}^{\ordomega}(q)$ by definition of $S$ we have
$W_0\cap B=\emptyset=W_1\cap A\wedge Y\subseteq W_0\disjunion W_1$.  But then $A\cap Y\subseteq W_0$,
$B\cap Y\subseteq W_1$ and $Y\cap W_0\cap W_1=\emptyset$, so $(W_0,W_1)\in N_0(A,B,Y)$, and
$(W_0,W_1)=(\delta_{\Sigma^0_1}^2\compose F)(p)$.
\end{proof}

\section{Concluding Remarks}\label{sec:concl}
The results of this paper, in particular the reductions of Section \ref{sec:zdsubsets} along with
Theorem \ref{thm:a} and Proposition \ref{prn:aconverse}, provide some details on a robust notion of
effective zero-dimensionality for classes $\mathcal{Y}$ of subsets of a computable metric space $X$, extending
the notion of effective zero-dimensionality for $X$ that was examined in \cite[\S 5]{zerodpaper}.  It would be
good to find precisely the necessary compactness and other assumptions for Proposition \ref{prn:aconverse}, and
to clarify in terms of Weihrauch reducibility the results of Theorem \ref{thm:a} and
Proposition \ref{prn:aconverse}, at least for classes of closed subsets.

On the other hand, in Theorem \ref{thm:b} we have followed the proof of the Dugundji Extension Theorem outlined
in \cite[Hint to Prob 4.5.20(a)]{Engelking} to compute a more general form of retraction, seemingly requiring a
valued field structure (and use of convexity in ultrametric spaces, along with the proof technique of
Theorem \ref{thm:a}) in the process.  It is interesting (though perhaps more complicated) to ask whether some
weaker algebraic conditions would suffice, e.g.~what the `general form' of retractions should be in
zero-dimensional topological groups?  Here note, regarding valued fields, we
have stated Theorem \ref{thm:b} partly for a zero-dimensional computable metric space $X$ and partly for a
$p$-adic field $K=\Omega_p$,
with the following well-known embedding theorem in mind:
\begin{thm}\label{thm:zdembcor}(see \cite[\S A.10, Corollary]{Schikhof})
Any separable ultrametric space $X$ has an isometric embedding into some separable nonarchimedean valued field
$K$.
\end{thm}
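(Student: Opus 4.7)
The plan is to build an isometric embedding of a countable dense subset of $X$ into a suitably chosen separable nonarchimedean valued field $K$ by recursion on finite pieces, and then extend by continuity using completeness of $K$.

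First I would fix an enumeration $(x_n)_{n\in\N}$ of a dense subset of $X$, observe that $D:=\{d(x_i,x_j)\setconstr i,j\in\N\}$ is a countable subset of $[0,\infty)$, and choose $K$ to be a complete nonarchimedean valued field whose value set $\absval{K^{\times}}$ contains $D\setminus\{0\}$ and whose residue field is infinite. Such a $K$ can be constructed e.g.\ as the completion of a suitable algebraic extension of $\Omega_p$ (for some fixed prime $p$) whose value group contains the multiplicative subgroup of $\R_{>0}$ generated by $D\setminus\{0\}$, or alternatively as a Hahn series field $k((t^{\Gamma}))$ for countable $k$ and countable ordered abelian $\Gamma$. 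Countability of the data allows $K$ to be kept separable.

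Next I would define an isometry $h:\{x_n\setconstr n\in\N\}\to K$ by recursion on $n$. Set $h(x_0):=0_K$. Given an isometric $h_n:\{x_0,\dots,x_n\}\to K$, the goal is to find $y:=h(x_{n+1})\in K$ with $\absval{y-h(x_i)}=d(x_{n+1},x_i)$ for all $i\leq n$. The ultrametric identities (in particular the isosceles property) on $\{x_0,\dots,x_{n+1}\}$, together with the assumption that $h_n$ is already an isometry, imply that the prescribed system of distances is compatible with the nonarchimedean triangle constraints in $K$. The existence of $y$ then follows from the field-theoretic lemma that in a nonarchimedean field with sufficiently large value set and infinite residue field, given $y_0,\dots,y_n\in K$ and ultrametric-compatible prescribed distances $r_0,\dots,r_n$, some $y\in K$ realises $\absval{y-y_i}=r_i$ simultaneously. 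This is the main obstacle: it requires the field $K$ chosen above to remain rich enough for every extension step, which is where Schikhof's construction in \cite[\S A.10]{Schikhof} does the delicate work.

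Finally, having built the isometry $h$ on a dense subset, I would extend it to $\bar h:X\to K$ by uniform continuity, using completeness of $K$ (and hence of the closure of $h(\{x_n\setconstr n\in\N\})$) to guarantee convergence of all Cauchy images. The map $\bar h$ is the desired isometric embedding, and its image sits inside a separable nonarchimedean valued field, as required.
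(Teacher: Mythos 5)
Your overall strategy (a one-point extension recursion over a countable dense subset, followed by extension by completeness) is sound, and since the paper records Theorem \ref{thm:zdembcor} only by citation to \cite[\S A.10]{Schikhof} and gives no proof of its own, yours is a genuinely independent route; however, as written it has two concrete gaps. First, neither of your proposed fields satisfies your own requirements: every algebraic extension of $\Omega_p$ (and its completion) has value group contained in $p^{\Q}$, so its value set cannot contain an arbitrary countable set $D\setminus\{0\}$ of prescribed distances; and the full Hahn field $k((t^{\Gamma}))$ is not separable once $\Gamma$ is dense in $\R$ (the elements $\sum_{\gamma\in S}t^{\gamma}$, for $S$ ranging over the uncountably many well-ordered subsets of $\Gamma\cap(0,1)$, are pairwise at distance bounded away from $0$). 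What does work: let $\Gamma\leq(\R,+)$ be the countable subgroup generated by $\{-\ln r\setconstr r\in D\setminus\{0\}\}$, let $k$ be a countably infinite field, and let $K$ be the completion of the fraction field of the group algebra $k[t^{\Gamma}]$ with $\absval{\sum_{\gamma}a_{\gamma}t^{\gamma}}:=\max\{e^{-\gamma}\setconstr a_{\gamma}\neq 0\}$; this $K$ is complete, separable (being the completion of a countable field), nonarchimedean, with value group containing $D\setminus\{0\}$ and infinite residue field $k$.

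Second, the one-point extension lemma is the heart of the proof, and you only assert it and defer to Schikhof, whose \S A.10 constructs a different embedding and does not supply this lemma for a pre-chosen $K$; left in that form the central step is unproven. Fortunately it is true under exactly the hypotheses you isolate, and the argument is short, so it should be included: indices with $r_i=0$ mean $x_{n+1}$ is an already-placed point and are trivial, so let $r:=\min_i r_i>0$ be attained at $i_0$. The isosceles property in $X$ and the inductive hypothesis give $\absval{y_i-y_j}\leq\max\{r_i,r_j\}$ with equality when $r_i\neq r_j$; hence all $y_i$ with $r_i=r$ lie in $\clball{y_{i_0}}{r}$, and since $r\in\absval{K^{\times}}$ this closed ball is the disjoint union of as many open balls of radius $r$ as there are residue classes --- infinitely many --- so one may pick $y\in\clball{y_{i_0}}{r}$ avoiding the finitely many open balls $\ball{y_i}{r}$ with $r_i=r$, which forces $\absval{y-y_i}=r$ for those $i$, while for $r_i>r$ the isosceles principle in $K$ gives $\absval{y-y_i}=\absval{y_{i_0}-y_i}=r_i$ automatically. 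With the corrected choice of $K$ and this lemma proved, your recursion and the final extension by uniform continuity into the complete field $K$ do yield the stated embedding.
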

An effective version of this result may be of interest, in part to simplify the assumptions of
Theorem \ref{thm:b}, and in any case the notions appearing in \cite[\S 18--21]{Schikhof} (on ultrametric spaces;
see also the references of \cite{KKSNonarchDET}) deserve further attention from a computable analysis viewpoint.
As regards
$p$-adic fields or slightly different number-theoretic settings, there are very many concrete tools (and topics
of analysis) in the remainder of \cite{Schikhof} whose relation to computable analysis (and possibly
effective zero-dimensionality) may be worth investigating, though our knowledge is limited.  To mention two sections close to the topics already discussed, \cite[\S 26, \S 76]{Schikhof} respectively concern locally
constant ($K$-valued) functions \& differentiability, and extension of functions (including isometries).

Also, \cite[Prob 6.3.2(f)]{Engelking} supplies conditions on a zero-dimensional computable metric space $X$
under which the topology is induced by a linear order $(<)\subseteq X^2$.

Finally, computability of $Q'$ in the proof of Proposition \ref{prn:dugundji} has shown how a
lower semicontinuous function $f:X\setminus A\to (0,1]$ satisfying certain nonuniform upper bounds (dependent
on $\epsilon>0$) computably gives rise to a Dugundji system for $A\in\mathcal{A}(X)\setminus\{\emptyset,X\}$
(with coefficient $1+\epsilon$), without involving zero-dimensionality.  Here it is of interest to
see how such an $f$ may be chosen to be lower semicomputable without necessarily being continuous.  To provide
context, some results on lower semicontinuous functions and references to applications appear in
\cite{WZSemiCts00}; here we leave unanswered the question of applications of the general Dugundji system
construction, and instead remark on two related questions around lower semicontinuity:
\begin{enumerate}
\item
If $(X,\mathcal{T}_X,\alpha)$ is an effective topological space, each characteristic function
$\charf{U}:X\to\R$ ($U\in\Sigma^0_1(X)$) is $(\delta_X,\rho_{<})$-continuous, and
$u:\N\times X\to\R,(i,x)\mapsto\charf{\alpha(i)}(x)$ is $(\id_{\N},\delta_X;\rho_{<})$-computable.
More generally,
\[s:\subseteq \N\times\Bairespc\times\Bairespc^{\N}\to\Cont_{\Sigma^0_1(X)}(X,\R_{<}),
(N,p;p^{(0)},\dots)\mapsto \sum_{i<N}\rho_{<}(p^{(i)}).\charf{\alpha'(p_i)}|_{\delta_{\Sigma^0_1(X)}(p)}
\]
(with $\dom s=\{(N,p;p^{(0)},\dots)\setconstr (\forall i<N)p^{(i)}\in\rho_{<}^{-1}[0,\infty)\}$) is
$(\id_{\N},\id_{\Bairespc},\id_{\Bairespc}^{\ordomega};[\delta_X\to\rho_{<}]_{\Sigma^0_1(X)})$-computable.

One might consider
those $f\in\Cont_{\Sigma^0_1(X)}(X,\R_{<})$ which lie in the range of $s$; in the case that
$\img\alpha\subseteq\Delta^0_1(X)$ we know such functions are locally constant; for $K$-valued functions (in
particular, for rational-valued functions when $X$ is the set of $p$-adic integers in $K=\Omega_p$) that would
also imply $f'=0$ under suitable assumptions on $X$ and $\dom f$; see \cite[\S 26]{Schikhof}.

One can note further that a certain sequence of locally constant functions (from the $p$-adic integers to $K$) is used to define a basis of the $\Cont^1$ functions with zero derivative (with the same domain and codomain) in
\cite[Thm 68.1]{Schikhof}; we leave this to one side here and just discuss the locally constant functions.
Under one weak definition of monotone sequences, any continuous function $f:X\to K$ is the uniform
limit of a monotone sequence of locally constant $f_n:X\to\Q$ ($n\in\N$), cf.~\cite[Ex 87.H(i)]{Schikhof}.
It is of interest to examine whether this or another sense of monotone sequence (of locally constant functions)
gives rise to representations of $\Cont(X,K)$ analogous to the restriction
$[\delta_X\to\rho_{<}]|^{\Cont(X,\R)}$ in the real case.
%
%
%
%
\item
If $X$ is a complete computable metric space, $\mathcal{X}$ the Borel $\sigma$-algebra of $X$ and
$\mu:\mathcal{X}\to [0,1]$ is a probability measure, recall $\mu$ is a
\emph{computable probability measure on $X$} if
$\N^{*}\to\R_{<},w\mapsto \mu\left(\bigcup_{i<\absval{w}}\alpha(w_i)\right)$ is computable.
Note this is equivalent to $g_{\mu}:=\mu|_{\Sigma^0_1(X)}:\Sigma^0_1(X)\to\R_{<}$ being computable,
e.g.~\cite[Thm 4.2.1]{HRMses09}.  More generally, we can introduce a representation of the set
$\mathcal{M}(X)$ of finite measures $\mu:\mathcal{X}\to[0,\infty)$ by
\[
p\in\delta_{\mathcal{M}(X)}^{-1}\{\mu\}\iff p\in[\delta_{\Sigma^0_1(X)}\to\rho_{<}]^{-1}\{g_{\mu}\}.
\]
If the construction of a Dugundji system in (the proof of) Proposition \ref{prn:dugundji} can be adequately
generalised to nonmetrizable spaces like $Z=\Sigma^0_1(X)$ (with the topology from Remark \ref{rem:szoneadm}),
then a restriction of $g_{\mu}$ (for a probability measure $\mu\in\mathcal{M}(X)$) to a specified open subspace
of $Z$ may provide an example of a lower semicomputable function $f:\subseteq Z\to (0,1]$ with open domain.
(Note if $Y\in\Sigma^0_1(X)$ is a proper nonempty open subset then $A:=\Sigma^0_1(Y)$ is closed
and nonopen in $Z$).

Of course, in the setting of $p$-adic fields there are several analogues of measure and integration (both
$K$-valued and real-valued) appearing in \cite{Schikhof}, which may also be worth considering in this connection.
\end{enumerate}
\section*{Acknowledgement}
The author is very grateful for the useful comments of the anonymous referees, which have helped to improve the paper, inclusive of correcting some errors.
\bibliographystyle{alpha}
\bibliography{short5}
\end{document}